\newcommand{\ttheta}{\widetilde{\theta}}
\newcommand{\mbZ}{\mathbb Z}
\newcommand{\mbC}{\mathbb C}
\newcommand{\oM}{\overline{\mathcal M}}
\newcommand{\tu}{{\widetilde u}}
\newcommand{\og}{\overline g}
\newcommand{\of}{\overline f}
\newcommand{\oY}{\overline Y}
\newcommand{\oX}{\overline X}
\newcommand{\hLambda}{\widehat\Lambda}
\newcommand{\cT}{\mathcal T}
\def\cM{{\mathcal{M}}}
\def\oM{{\overline{\mathcal{M}}}}
\def\CP{{{\mathbb C}{\mathbb P}}}
\renewcommand{\Im}{\rm Im}
\def\d{{\partial}}
\newcommand{\eps}{\varepsilon}
\newcommand{\cA}{\mathcal A}
\newcommand{\hcA}{\widehat{\mathcal A}}
\newcommand{\DR}{\mathrm{DR}}
\newcommand{\even}{\mathrm{even}}
\newcommand{\ct}{\mathrm{ct}}
\newcommand{\Coef}{\mathrm{Coef}}
\newcommand{\Deg}{\mathrm{Deg}}
\newcommand{\gl}{\mathrm{gl}}
\newcommand{\tR}{\widetilde{R}}
\newcommand{\hu}{\widehat{u}}
\newcommand{\tP}{\widetilde{P}}
\newcommand{\mcO}{\mathcal{O}}
\def\g{\mathfrak{g}}
\def\d{\partial}
\def\f{\frac}
\newcommand{\beq}{\begin{equation}}
\newcommand{\eeq}{\end{equation}}
\newcommand{\orig}{\mathrm{orig}}
\newcommand{\Id}{\mathrm{Id}}
\newcommand{\Mat}{\mathrm{Mat}}
\newcommand{\End}{\mathrm{End}}
\newcommand{\oF}{\overline{F}}
\newcommand{\ot}{{\overline{t}}}
\def\un{{1\!\! 1}}
\newcommand{\tPsi}{\widetilde{\Psi}}
\newcommand{\diag}{\mathrm{diag}}
\newcommand{\tD}{\widetilde{D}}
\newcommand{\tGamma}{\widetilde{\Gamma}}
\newcommand{\triv}{\mathrm{triv}}
\newcommand{\Lie}{\mathrm{Lie}}
\newcommand{\otau}{\overline{\tau}}
\newcommand{\Sh}{\mathrm{Sh}}
\newcommand{\oH}{\overline{H}}
\newcommand{\tY}{\widetilde{Y}}
\newcommand{\desc}{\mathrm{desc}}
\newcommand{\cX}{\mathcal{X}}
\newcommand{\br}{\overline{r}}
\newcommand{\Ker}{\mathrm{Ker}}
\newcommand{\Hom}{\mathrm{Hom}}
\newcommand{\hE}{\widehat{E}}
\newcommand{\ogamma}{\overline{\gamma}}
\newcommand{\tL}{\widetilde{L}}
\newcommand{\hcX}{\widehat{\mathcal{X}}}
\newcommand{\hS}{\widehat{S}}
\newtheorem{theorem}{Theorem}[section]
\newtheorem{proposition}[theorem]{Proposition}
\newtheorem{lemma}[theorem]{Lemma}
\newtheorem{conjecture}[theorem]{Conjecture}
\theoremstyle{definition}
\newtheorem{definition}[theorem]{Definition}
\newtheorem{remark}[theorem]{Remark}
\numberwithin{equation}{section}
\begin{document}

\title{Flat F-manifolds, F-CohFTs, and integrable hierarchies}

\author{Alessandro Arsie}
\address{A.~Arsie:\newline Department of Mathematics and Statistics, The University of Toledo, 2801W,\newline 
Bancroft St., 43606 Toledo, OH, USA}
\email{alessandro.arsie@utoledo.edu}

\author{Alexandr Buryak}
\address{A.~Buryak:\newline Faculty of Mathematics, National Research University Higher School of Economics, \newline
6 Usacheva str., Moscow, 119048, Russian Federation; \newline
Center for Advanced Studies, Skolkovo Institute of Science and Technology,\newline
1 Nobel str., Moscow, 143026, Russian Federation; \newline
P.G. Demidov Yaroslavl State University,\newline 
14 Sovetskaya str., Yaroslavl, 150003, Russian Federation}
\email{aburyak@hse.ru}

\author{Paolo Lorenzoni}
\address{P.~Lorenzoni:\newline Dipartimento di Matematica e Applicazioni, Universit\`a di Milano-Bicocca, \newline
Via Roberto Cozzi 53, I-20125 Milano, Italy and INFN sezione di Milano-Bicocca}
\email{paolo.lorenzoni@unimib.it}

\author{Paolo Rossi}
\address{P.~Rossi:\newline Dipartimento di Matematica ``Tullio Levi-Civita'', Universit\`a degli Studi di Padova,\newline
Via Trieste 63, 35121 Padova, Italy}
\email{paolo.rossi@math.unipd.it}

\begin{abstract}
We define the double ramification hierarchy associated to an F-cohomological field theory and use this construction to prove that the principal hierarchy of any semisimple (homogeneous) flat F-manifold possesses a (homogeneous) integrable dispersive deformation at all orders in the dispersion parameter. The proof is based on the reconstruction of an F-CohFT starting from a semisimple flat F-manifold and additional data in genus $1$, obtained in our previous work. Our construction of these dispersive deformations is quite explicit and we compute several examples. In particular, we provide a complete classification of rank $1$ hierarchies of DR type at the order $9$ approximation in the dispersion parameter and of homogeneous DR hierarchies associated with all $2$-dimensional homogeneous flat F-manifolds at genus $1$ approximation.
\end{abstract}

\date{\today}

\maketitle

\tableofcontents

\section*{Introduction}
Since Witten's conjecture~\cite{Wit91} and its proof by Kontsevich~\cite{Kon92}, there have been growing and fruitful interactions between the area of integrable hierarchies of PDEs and algebraic geometry of the moduli spaces of algebraic curves. In this context, and in connection with topological field theory, Dubrovin introduced in the 90s the notion of \emph{Frobenius manifold}~\cite{Dub96}, a differential-geometric structure that encodes genus-zero information af a cohomological field theory (CohFT) on the moduli space of stable curves, besides having far reaching connections with other areas of mathematics.\\ 

From the point of view of integrable systems, given a Dubrovin--Frobenius manifold, there exists an associated integrable hierarchy of Hamiltonian quasilinear PDEs called \emph{Dubrovin's principal hierarchy}, or simply \emph{principal hierarchy}. An important problem in the theory of integrable systems consists in constructing a full dispersive hierarchy starting from its dispersionless limit.\\

In the framework of moduli spaces, the principal hierarchy associated to a Dubrovin--Frobenius manifold and its dispersive deformation should satisfy additional constraints coming from the intersection theory of the CohFT. In the semisimple case, there exist two different (but conjecturally Miura-equivalent \cite{Bur15,BDGR18,BGR19}) constructions defining such dispersive deformations: 
\begin{enumerate}
\item The Dubrovin--Zhang construction \cite{DZ01} is based on the idea that the partition function of the corresponding CohFT in all genera is the logarithm of the tau-function of a special solution (called the \emph{topological solution}) to a full dispersive hierarchy (the {\it DZ hierarchy}). One can construct the hierarchy itself starting from this tau-function, and it turns out that the the principal hierarchy is the dispersionless limit of DZ hierarchy. Moreover the full DZ hierarchy and the principal hierarchy are related by a special change of dependent variables, called a \emph{quasi-Miura transformation}, which can be uniquely determined in the semisimple case from genus zero information.\\
\item The double ramification construction, introduced by one of the authors in \cite{Bur15}, is based on the definition of an infinite set of commuting Hamiltonian densities \cite{BR16a} in terms of intersection numbers of the CohFT, the double ramification cycles and other natural tatutological classes on the moduli space of curves.
\end{enumerate}
For both constructions  and in the (homogeneous) semisimple case, the reconstruction of the full dispersive hierarchy from its dispersionless limit (the principal hierarchy of the Dubrovin--Frobenius manifold encoding the genus $0$ part of the CohFT) is possible thanks to the Givental--Telemann reconstruction theorem for the CohFT itself from its genus $0$ part \cite{Tel12, Giv01}.\\
 
Notice that, by construction, the dispersionless limits of both the DZ and DR hierarchies coincide with the principal hierarchy of the Dubrovin--Frobenius manifold underlying the CohFT.\\

In the last 20 years, it has been observed that many constructions related to Dubrovin--Frobenius manifolds can be extended to a more general setting (\cite{Sab98,Get04,Man05,LPR09,SZ11,AL13,Lor14,KMS,AL17,DH17,BR18,KMS18,AL19,BB19,ABLR20}). For instance, it was observed in \cite{LPR09} that the notion of principal hierarchy does not require the existence of an invariant flat metric. This leads naturally to the consideration of the generalization of Dubrovin--Frobenius manifolds, called F-manifolds with compatible flat structure \cite{Man05} or simply \emph{flat F-manifolds} \cite{LPR09}, obtained by replacing a flat metric with a flat torsionless connection and keeping all the axioms of Dubrovin--Frobenius manifolds apart from those involving explicitly the metric and not just the associated Levi--Civita connection. In flat coordinates for the flat connection, the flows of the principal hierarchy are systems of conservation laws. In the case of Dubrovin--Frobenius manifolds, the presence of an invariant flat metric has to deal with the presence of a local Hamiltonian structure.\\
    
In this paper we construct (homogeneous) double ramification hierarchies starting from a (homogeneous) CohFT. In particular, in the semisimple case, leveraging on the results of \cite{ABLR20}, this provides dispersive deformations of the principal hierarchy associated to a semisimple (homogeneous) flat F-manifold. The existence of these dispersive integrable deformations relies on:
\begin{enumerate}
\item a generalization of the notion of cohomological field theory, called \emph{F-cohomological field theory} (or F-CohFT for short) introduced in \cite{BR18,ABLR20};
\item a reconstruction theorem for a semisimple (homogeneous) F-CohFT starting from a flat F-manifold and additional data in genus $1$ \cite{ABLR20};
\item the definition of an infinite set of commuting flows (the {\it DR hierarchy}) in terms of intersection numbers of the F-CohFT, the double ramification cycles, the top Hodge class, and psi classes on the moduli space of stable curves.
\end{enumerate}  

\bigskip

The paper is organized as follows.\\

Section~\ref{section:DR hierarchy of F-CohFT} is devoted to the construction of the DR hierarchy of an F-CohFT (see also~\cite{BR18}). The main properties of this hierarchy are given in terms of densities of local vector fields on the formal loop space and a special basis for their integrals of motion. We also consider the additional properties of the hierarchy in the case of a homogeneous F-CohFT.\\ 


In Section~\ref{section:principal hierarchy and dispersive deformations}, after recalling the definition of a flat F-manifold and the construction of its associated principal hierarchy, we present our main result: given an arbitrary semisimple flat F-manifold and an associated principal hierarchy, we construct a family of dispersive integrable deformations of the principal hierarchy. These deformations, called the {\it descendant DR hierarchies}, come from the family of DR hierarchies associated to a family of F-CohFTs parameterized by a semisimple point of our flat F-manifold. The descendant DR hierarchy depends on a choice of a certain vector field on the flat F-manifold, which we call a {\it framing}. We prove that the descendant DR hierarchies corresponding to different framings are not related to each other by a Miura transformation that is close to identity.\\

In Section~\ref{section:classification}, we discuss the role of (descendant) DR hierarchies in the problem of classification of integrable deformations of integrable dispersionless systems of conservation laws. One can impose various constraints for such integrable deformations, and we discuss the corresponding results (mostly at the approximation up to some finite power of $\eps$) for flat F-manifolds of dimension $1$ and $2$ in Section~\ref{biflatdef} and~\ref{section:classification of rank 1 hierarchies of DR type}. In Section~\ref{sec:final}, we briefly mention the problem of computing general integrable deformations of principal hierarchies of flat F-manifolds. It was conjectured in \cite{AL18} that the equivalence classes of such deformations are labeled by certain functional parameters called {\it Miura invariants}. In the case of Dubrovin--Frobenius manifolds and bihamiltonian deformations, these invariants are equivalent to central invariants, which are known to classify deformations of semisimple local bihamiltonian structures of hydrodynamic type (\cite{DLZ06,CPS18}).\\

\subsection*{Acknowledgements}
The work of A.~B. is supported by the Russian Science Foundation (Grant no. 20-71-10110). P.~L. is supported by MIUR - FFABR funds 2017 and by funds of H2020-MSCA-RISE-2017 Project No. 778010 IPaDEGAN.\\


\section{Double ramification hierarchy of an F-CohFT}\label{section:DR hierarchy of F-CohFT}

In this section, we associate to any F-CohFT with a vector space~$V$ an infinite sequence of commuting vector fields on the formal loop space of $V$, i.e., an infinite sequence of compatible systems of evolutionary PDEs of rank $N:=\dim V$ (in particular, in the form of conservation laws). This construction is a generalization of the double ramification hierarchy of \cite{Bur15,BR16a} to the context of F-CohFTs and enjoys most of its properties (for instance, recursion formulas for the higher symmetries), but loses in general the Hamiltonian nature.\\

\subsection{F-cohomological field theories}

We recall from \cite{BR18,ABLR20} the definition of an F-cohomological field theory on the moduli space $\oM_{g,n}$ of stable curves of genus $g$ with $n$ marked points. We will denote by $H^*(X)$ the cohomology ring with coefficients in $\mbC$ of a topological space~$X$. When considering the moduli space of stable curves, $X=\oM_{g,n}$, the even part $H^\even(\oM_{g,n})$ in the cohomology ring $H^*(\oM_{g,n})$ can optionally be replaced by the Chow ring $A^*(\oM_{g,n})$.

\medskip

\begin{definition}\label{definition:F-CohFT}
An \emph{F-cohomological field theory} (or F-CohFT) is a system of linear maps 
$$
c_{g,n+1}\colon V^*\otimes V^{\otimes n} \to H^\even(\oM_{g,n+1}),\qquad 2g-1+n>0,
$$
where $V$ is an arbitrary finite dimensional vector space, together with a special element $e\in V$, called the \emph{unit}, such that, chosen any basis $e_1,\ldots,e_{\dim V}$ of $V$ and the dual basis $e^1,\ldots,e^{\dim V}$ of $V^*$, the following axioms are satisfied:
\begin{itemize}
\item[(i)] The maps $c_{g,n+1}$ are equivariant with respect to the $S_n$-action permuting the $n$ copies of~$V$ in $V^*\otimes V^{\otimes n}$ and the last $n$ marked points in $\oM_{g,n+1}$, respectively.
\item[(ii)] $\pi^* c_{g,n+1}(e^{\alpha_0}\otimes \otimes_{i=1}^n e_{\alpha_i}) = c_{g,n+2}(e^{\alpha_0}\otimes \otimes_{i=1}^n  e_{\alpha_i}\otimes e)$ for $1 \leq\alpha_0,\alpha_1,\ldots,\alpha_n\leq \dim V$, where $\pi\colon\oM_{g,n+2}\to\oM_{g,n+1}$ is the map that forgets the last marked point.\\
Moreover, $c_{0,3}(e^{\alpha}\otimes e_\beta \otimes e) = \delta^\alpha_\beta$ for $1\leq \alpha,\beta\leq \dim V$.
\item[(iii)] $\gl^* c_{g_1+g_2,n_1+n_2+1}(e^{\alpha_0}\otimes \otimes_{i=1}^{n_1+n_2} e_{\alpha_i}) = c_{g_1,n_1+2}(e^{\alpha_0}\otimes \otimes_{i\in I} e_{\alpha_i} \otimes e_\mu)\otimes c_{g_2,n_2+1}(e^{\mu}\otimes \otimes_{j\in J} e_{\alpha_j})$ for $1 \leq\alpha_0,\alpha_1,\ldots,\alpha_{n_1+n_2}\leq \dim V$, where $I \sqcup J = \{2,\ldots,n_1+n_2+1\}$, $|I|=n_1$, $|J|=n_2$, and $\gl\colon\oM_{g_1,n_1+2}\times\oM_{g_2,n_2+1}\to \oM_{g_1+g_2,n_1+n_2+1}$ is the corresponding gluing map.
\end{itemize}
\end{definition}

\medskip

There is an obvious generalization of the notion of an F-CohFT where the maps $c_{g,n+1}$ take value in $H^\even(\oM_{g,n+1})\otimes K$, where $K$ is a $\mbC$-algebra. We will call such objects {\it F-cohomological field theories with coefficients in $K$}.

\medskip

\begin{definition}
An F-CohFT $c_{g,n+1}\colon V^*\otimes V^{\otimes n}\to H^\even(\oM_{g,n+1})$ is called \emph{homogeneous} if there exists an operator $Q\in\End(V)$, a vector $\br\in V$, and a complex constant~$\gamma$ such that $Q e=0$ and the following condition is satisfied:
\begin{multline}\label{eq:homogeneous F-CohFT}
\Deg\circ c_{g,n+1}+\pi_{*}\circ c_{g,n+2}\circ(\otimes\br)=\\
=c_{g,n+1}\circ\left(-Q^t\otimes\Id^{\otimes n}+\sum_{i+j=n-1}\Id\otimes\Id^{\otimes i}\otimes Q\otimes \Id^{\otimes j}\right)+\gamma g c_{g,n+1},
\end{multline}
where $\Deg\in\End(H^*(\oM_{g,n}))$ is the operator acting on $H^i(\oM_{g,n})$ by the multiplication by~$\frac{i}{2}$, $\pi\colon\oM_{g,n+2}\to\oM_{g,n+1}$ is the map that forgets the last marked point, $\otimes\br\colon V^*\otimes V^{\otimes n}\to V^*\otimes V^{\otimes(n+1)}$ is the operator of tensor multiplication from the right by $\br$, and $Q^t\in\End(V^*)$ is the transposed operator. The constant $\gamma$ is called the \emph{conformal dimension} of our F-CohFT.
\end{definition}

\smallskip

\begin{remark}\label{remark:homogeneous F-CohFTs}
Our definition of a homogeneous F-CohFT is slightly more general, than the one from the paper~\cite{ABLR20} where the operator $Q$ was required to be diagonalizable. However, it is easy to see that all the results from~\cite{ABLR20} about homogeneous F-CohFTs are true with the new definition (see also Section~\ref{subsection:flat F-manifolds} with a new definition of a homogeneous flat F-manifold). An example of a homogeneous F-CohFT with a nondiagonalizable operator~$Q$ will appear in Section~\ref{section:classification}.
\end{remark}

\medskip

\subsection{Vector fields on the the formal loop space}\label{subsection:formal loop space}

Let $\hcA$ and $\hLambda$ be the spaces of differential polynomials and local functionals in formal (even) variables $u^\alpha_k$, $1\leq\alpha\leq N$, $k\geq 0$, and $\eps$, with the differential grading $\deg_{\d_x} u^\alpha_k = k$, $\deg_{\d_x} \eps = -1$, where the definitions and the notations are taken from \cite[Section 2.1]{Ros17}.\\

The \emph{space of densities of local multivector fields} (on the formal loop space of $V$) is the supercommutative associative algebra
$$
\hcA^{\bullet}:=\mbC[[u^*,\theta_*]][u^*_{>0},\theta_{*,>0}][[\eps]],
$$
where the new formal variables $\theta_{\alpha,k}$, $1\leq\alpha\leq N$, $k\geq 0$, are odd (anti-commuting among themselves and commuting with $\eps$ and $u^\alpha_k$) with $\deg_{\d_x} \theta_{\alpha,k}:=k$, $u^\alpha:=u^\alpha_0$, and $\theta_{\alpha}:=\theta_{\alpha,0}$, and the symbol~$*$, as an index, denotes any of the allowed values for that index. The algebra~$\hcA^{\bullet}$ is endowed with the {\it super grading}, denoted by $\deg_\theta$, which is defined by $\deg_\theta\theta_{\alpha,k}:=1$ and $\deg_\theta u^\alpha_k=\deg_\theta\eps:=0$. The sub-vector space of $\hcA^\bullet$ homogeneous of super degree $i\geq 0$ is denoted by $\hcA^i$ and called the \emph{space of densities of local $i$-vector fields}. We have $\hcA = \hcA^0$, while~$\hcA^1$ is called the \emph{space of densities of local vector fields}. The homogeneous component of the space~$\hcA^i$ of differential degree $k$ will be denoted by $(\hcA^i)^{[k]}$.\\

The operator $\d_x$ is extended from $\hcA$ to $\hcA^{\bullet}$ as the super-derivation
$$
\d_x: = \sum_{k\geq 0} \left(u^\alpha_{k+1} \frac{\d }{\d u^\alpha_k}+\theta_{\alpha,k+1} \frac{\d }{\d \theta_{\alpha,k}} \right),
$$
where, here and in what follows, we perform summation over repeated Greek indices. The \emph{space of local multivector fields} is defined as 
$$
\hLambda^\bullet:=\hcA^\bullet/(\Im\,\d_x\oplus\mbC[[\eps]])
$$
and, for $i\geq 0$, the \emph{space of local $i$-vector fields} $\hLambda^i$ is the image of $\hcA^i$ in the quotient. If $f\in \hcA^\bullet$, its image in $\hLambda^\bullet$ is denoted by $\overline{f} = \int f dx$. As before, $\hLambda = \hLambda^0$, and $\hLambda^1$ is called the \emph{space of local vector fields}. Naturally, the spaces $\hLambda^i$ inherit the differential grading $\deg_{\d_x}$.\\

For any $1 \leq \alpha\leq N$, we define the \emph{(super) variational derivatives}
$$
\frac{\delta}{\delta u^\alpha}:=\sum_{k\geq 0} (-\d_x)^k\frac{\d}{\d u^\alpha_k}\ , \qquad \frac{\delta}{\delta \theta_\alpha}:=\sum_{k\geq 0} (-\d_x)^k\frac{\d}{\d \theta_{\alpha,k}},
$$ 
which are well defined on $\hLambda^\bullet$ since they vanish on $\Im\,\d_x \oplus \mbC[[\eps]]$.\\

The \emph{Schouten--Nijenhuis bracket} $[\cdot,\cdot]\colon\hLambda^i \times \hcA^j \to \hcA^{i+j-1}$ is defined by
\begin{equation}\label{eq:SN bracket lift 1}
[\of,g]:=\sum_{k\geq 0}\left( \d_x^k\left(\frac{\delta\of}{\delta \theta_\alpha}\right) \frac{\d g}{\d u^{\alpha}_k}+ (-1)^i \d_x^k\left(\frac{\delta\of}{\delta u^\alpha}\right) \frac{\d g}{\d \theta_{\alpha,k}} \right).
\end{equation}
This Schouten--Nijenhuis bracket is a lift of the Schouten--Nijenhuis bracket $[\cdot,\cdot]\colon \hLambda^i \times \hLambda^j \to \hLambda^{i+j-1}$ defined by
\begin{equation}\label{eq:SN bracket}
[\of,\og]:=\int \left(\frac{\delta \of}{\delta \theta_\alpha} \frac{\delta \og}{\delta u^\alpha}+ (-1)^i \frac{\delta \of}{\delta u^\alpha} \frac{\delta \og}{\delta \theta_\alpha}\right) dx.
\end{equation}
A further lift of the Schouten--Nijenhuis bracket to $ \hcA^i \times \hcA^j$ can be defined employing formal Dirac delta functions, similarly to what was done in \cite{BR16b} for the quantum commutator of two differential polynomials,
\begin{equation}\label{eq:SN bracket lift 2}
[f(x),g(y)]:=\sum_{k,l\geq 0}\left(\frac{\d f}{\d \theta_{\alpha,k}}(x)\, \frac{\d g}{\d u^\alpha_l}(y)\,  \d_x^k \d_y^l \delta(x-y)+(-1)^i \frac{\d f}{\d u^\alpha_k}(x)\, \frac{\d g}{\d \theta_{\alpha,l}}(y)\, \d_x^k \d_y^l \delta(x-y)\right).
\end{equation}
Taking the integral with respect to $x$ of formula \eqref{eq:SN bracket lift 2}, using that $\int \delta(x-y) g(y) dx = g(y)$, reproduces indeed formula \eqref{eq:SN bracket lift 1}, and further integration with respect to~$y$ gives \eqref{eq:SN bracket}.\\

As usual, for $i=j=1$, the above Schouten--Nijenhuis brackets are called the \emph{Lie brackets}. For $i=1$ and $j=0$, the Schouten--Nijenhuis brackets reduce simply to the differentiation of (a density of) a local functional along (a density of) a vector field, from which we see that the symbol~$\theta_{\alpha,k}$ can be interpreted as the operator $\d_x^k \circ \frac{\delta}{\delta u^\alpha}: \hLambda\to\hcA$.\\

Given a local vector field $\oX \in \hLambda^1$, there is a unique representative $X \in \hcA^1$ of $\oX$ such that $X = X^\alpha \theta_\alpha$ with $X^\alpha \in \hcA$. This representative is given by $X=\frac{\delta \oX}{\delta \theta_\alpha} \theta_\alpha$. The system of evolutionary PDEs associated to $\oX$ is
\begin{equation}\label{eq:evolutionary PDEs}
\frac{\d u^\alpha}{\d t} = \frac{\delta \oX}{\delta \theta_\alpha}(u^*_*;\eps), \qquad \alpha=1, \ldots,N.
\end{equation}
Two systems of evolutionary PDEs
\begin{align*}
&\frac{\d u^\alpha}{\d t} = \frac{\delta \oX}{\delta \theta_\alpha}(u^*_*;\eps), \qquad \alpha=1, \ldots,N,\\
&\frac{\d u^\alpha}{\d s} = \frac{\delta \oY}{\delta \theta_\alpha}(u^*_*;\eps), \qquad \alpha=1, \ldots,N,
\end{align*}
are compatible, in the sense that, for any $1\leq \alpha\leq N$, $\frac{\d}{\d t} \frac{\d u^\alpha}{\d s} = \frac{\d}{\d s}\frac{\d u^\alpha}{ \d t}$, if and only if the associated local vector fields $\oX,\oY \in \hLambda^1$ satisfy $[\oX,\oY] = 0$.\\

Under a Miura transformation (see \cite[Section 2.1]{Ros17} for more details) of the form 
\begin{align}
&\tu^\alpha=\tu^\alpha(u^*_*;\eps) \in \hcA^{[0]}=(\hcA^0)^{[0]},\qquad 1\leq\alpha\leq N,\label{eq:Miura1}\\
&\tu^*|_{u^*_*=0}=0,\qquad \left.\det\left(\frac{\d\tu^*}{\d u^*}\right)\right|_{u^*_*=0}\ne 0,\label{eq:Miura2}
\end{align}
the generators  $u^*_*$ and $\theta_{*,*}$ of $\hcA^\bullet$ transform according to the formulae
$$
u^\alpha_k = \d_x^k u^\alpha(\tu^*_*;\eps), \qquad \theta_{\alpha,k} = \d_x^k \left(\sum_{s\geq 0} (-\d_x)^s \left(\left.\frac{\d \tu^\mu}{\d u^\alpha_s}\right|_{u^*_*=u^*_*(\tu^*_*;\eps)} \ttheta_{\mu}\right)\right),\qquad 1\leq \alpha\leq N,\quad k\geq 0,
$$
where $u^\alpha(\tu^*_*;\eps)$ is obtained by inverting $\tu^\alpha = \tu^\alpha(u^*_*;\eps)$ order by order in $\eps$. For a local vector field, these formulae give
$$
\oX = \int(X^\alpha\theta_\alpha)dx = \int\left(\left.\left(\sum_{s\geq 0}\frac{\d \tu^\alpha}{\d u^\mu_s} \d_x^s X^\mu\right)\right|_{u^*_*=u^*_*(\tu^*_*;\eps)}\ttheta_\alpha\right) dx,
$$
from which we obtain that a system of evolutionary PDEs \eqref{eq:evolutionary PDEs} transforms into
$$
\frac{\d \tu^\alpha}{\d t} = \widetilde{X}^\alpha(\tu^*_*;\eps) = \left.\left(\sum_{s\geq 0}\frac{\d \tu^\alpha}{\d u^\mu_s} \d_x^s X^\mu\right)\right|_{u^*_*=u^*_*(\tu^*_*;\eps)}, \qquad \alpha=1, \ldots,N.
$$

\bigskip

Performing the change of formal variables
\begin{gather}\label{eq:u-p change}
u^\alpha_k = \d_x^k \left(\sum_{a\in\mbZ} p_a^\alpha e^{iax}\right), \qquad \theta_{\alpha,k} = \d_x^k \left(\sum_{a\in \mbZ} q_{\alpha,a} e^{iax} \right), \qquad 1\leq\alpha\leq N,\quad k\geq 0,
\end{gather}
one can rewrite a density of a local multivector field $f(u^*_*,\theta_{*,*};\eps)\in(\hcA^m)^{[d]}$ as a formal Fourier series
$$
f=\sum_{\substack{n,s\geq 0 \\ a_1,\ldots,a_n \in \mbZ\\b_1,\ldots,b_m \in \mbZ}} f^{a_1,\ldots,a_n,b_1,\ldots,b_m}_{\alpha_1,\ldots,\alpha_n,\beta_1,\ldots,\beta_m;s}\, \eps^s\,  p^{\alpha_1}_{a_1}\ldots p^{\alpha_n}_{a_n}\, q_{\beta_1,b_1}\ldots q_{\beta_m,b_m}\, e^{i\left(\sum_{j=1}^n a_j + \sum_{j=1}^m b_j\right)x}, 
$$
where the coefficient $f^{a_1,\ldots,a_n,b_1,\ldots,b_m}_{\alpha_1,\ldots,\alpha_n,\beta_1,\ldots,\beta_m;s}$, as a function of the indices $a_1,\ldots,a_n,b_1,\ldots,b_m$, is a homogeneous polynomial of degree $s+d$. Formal Fourier series of this type form a supercommutative associative algebra where the formal variables $q_{*,*}$ are odd. Moreover, the local multivector field $\of$ corresponds to the constant term of the Fourier series. Similarly to the variables $\theta_{*,*}$, one should interpret the variable $q_{\alpha,a}$ to represent the vector $\frac{\d}{\d p^{\alpha}_{-a}}$. This is coherent with the following formulae for the variational derivatives in the variables $p^*_*$ and $q_{*,*}$:
$$
\frac{\delta}{\delta u^\alpha} = \sum_{a\in \mbZ} e^{iax} \frac{\d}{\d p^\alpha_{-a}},\qquad \frac{\delta}{\delta \theta_\alpha} = \sum_{a\in \mbZ} e^{iax} \frac{\d}{\d q_{\alpha,-a}},
$$
acting on local multivector fields to give densities of local multivector fields. Accordingly, using the formal Fourier expansion $\delta(x) = \sum_{a\in \mbZ} e^{iax}$ for the formal Dirac delta function, it is easy to obtain the formula for the Schouten--Nijenhuis bracket \eqref{eq:SN bracket lift 2} on $\hcA^i \times \hcA^j$ in the new variables:
\begin{equation}\label{eq:SN bracket in p variables}
[f(x),g(y)] = \sum_{a\in \mbZ} \left( \frac{\d f}{\d q_{\alpha,a}}(x)\, \frac{\d g}{\d p^\alpha_{-a}}(y)\, + (-1)^i \frac{\d f}{\d p^\alpha_{-a}}(x)\, \frac{\d g}{\d q_{\alpha,a}}(y)\,\right),
\end{equation}
from which analogues of \eqref{eq:SN bracket lift 1} and \eqref{eq:SN bracket} are easily obtained by integration in $x$ and then $y$.\\

\subsection{Densities of local vector fields for the DR hierarchy}\label{subsection:construction of DR hierarchy}

Denote by $\psi_i\in H^2(\oM_{g,n})$ the $i$-th \emph{psi class}, which is the first Chern class of the line bundle over~$\oM_{g,n}$ formed by the cotangent lines at the $i$-th marked point. Denote by~$\mathbb E$ the rank~$g$ Hodge vector bundle over~$\oM_{g,n}$ whose fibers are the spaces of holomorphic one-forms on stable curves. Let $\lambda_j:=c_j(\mathbb E)\in H^{2j}(\oM_{g,n})$, these classes are called the \emph{Hodge classes}.\\ 

For any $a_1,\dots,a_n\in \mbZ$, $\sum_{i=1}^n a_i =0$, denote by $\DR_g(a_1,\ldots,a_n) \in H^{2g}(\oM_{g,n})$ the {\it double ramification (DR) cycle}. We refer the reader, for example, to~\cite{BSSZ15} for the definition of the DR cycle on~$\oM_{g,n}$, which is based on the notion of a stable map to $\CP^1$ relative to~$0$ and~$\infty$. If not all the multiplicities $a_i$ are equal to zero, then one can think of the class $\DR_g(a_1,\ldots,a_n)$ as the Poincar\'e dual to a compactification in~$\oM_{g,n}$ of the locus of pointed smooth curves~$(C;p_1,\ldots,p_n)$ satisfying $\mathcal O_C\left(\sum_{i=1}^n a_ip_i\right)\cong\mathcal O_C$. Consider the Poincar\'e dual to the double ramification cycle~$\DR_g(a_1,\ldots,a_n)$ in the space $\oM_{g,n}$. It is an element of $H_{2(2g-3+n)}(\oM_{g,n})$, and abusing notation it is also denoted by $\DR_g(a_1,\ldots,a_n)$.\\

The restriction $\DR_g(a_1,\ldots,a_n)\big|_{\cM_{g,n}^{\ct}}$, where $\cM_{g,n}^{\ct}$ is the moduli space of stable curves of compact type, is a homogeneous polynomial in $a_1,\ldots,a_n$ of degree $2g$ with the coefficients in~$H^{2g}(\cM_{g,n}^{\ct})$. This follows from Hain's formula~\cite{Hai13} for the version of the DR cycle defined using the universal Jacobian over~$\cM^\ct_{g,n}$ and the result of the paper~\cite{MW13}, where it is proved that the two versions of the DR cycle coincide on $\cM^\ct_{g,n}$ (the polynomiality of the DR cycle on~$\oM_{g,n}$ is proved in~\cite{JPPZ17}). The polynomiality of the DR cycle on $\cM^\ct_{g,n}$ together with the fact that $\lambda_g$ vanishes on~$\oM_{g,n}\setminus\cM_{g,n}^{\ct}$ (see, e.g.,~\cite[Section~0.4]{FP00}) imply that the cohomology class $\lambda_g\DR_g(-\sum_{j=1}^n a_j,a_1,\ldots,a_n) \in H^{4g}(\oM_{g,n+1})$ is a degree $2g$ homogeneous polynomial in the coefficients $a_1,\ldots,a_n$.\\

Given a vector space $V$ with $\dim V = N$ and a basis $e_1,\ldots,e_N\in V$, let $c_{g,n+1}\colon V^*\otimes V^{\otimes n} \to H^\even(\oM_{g,n+1})$ be an F-CohFT with unit $e=A^\mu e_\mu $. For $1\leq\beta\leq N$ and $d\geq 0$, we define the following system of formal Fourier series:
\begin{equation}\label{eq:DR densities in p variables}
Y_{\beta,d} := -\hspace{-0.2cm}\sum _{\substack{g,n\geq 0,\,2g+n>0\\a,a_1,\ldots,a_n \in \mbZ}}\hspace{-0.2cm}\frac{i a (-\eps^2)^g}{n!}\left(\int_{\DR_g(a,-a-\sum_{j=1}^n a_j,a_1,\ldots,a_n)} \hspace{-3.6cm}\lambda_g \psi_2^d c_{g,n+2}(e^\alpha\otimes e_\beta\otimes \otimes_{j=1}^n e_{\alpha_j})\right)q_{\alpha,a} \left(\prod_{j=1}^n p^{\alpha_j}_{a_j}\right)e^{i\left(a+\sum_{j=1}^n a_j\right)x},
\end{equation}
which, thanks to the polynomiality property of the DR cycle, can be rewritten as a system of densities of local vector fields $Y_{\beta,d}\in(\hcA^1)^{[1]}$ as
\begin{equation}\label{eq:DR densities}
\begin{split}
Y_{\beta,d}=-\sum_{\substack{g,n\geq 0,\,2g+n>0\\k,k_1,\ldots,k_n\geq 0\\k+\sum_{j=1}^nk_j=2g}}\frac{\eps^{2g}}{n!} \Coef_{a^k(a_1)^{k_1}\ldots(a_n)^{k_n}}&\left(\int_{\DR_g(a,-a-\sum_{j=1}^n a_j,a_1,\ldots,a_n)}\hspace{-3.6cm}\lambda_g \psi_2^d c_{g,n+2}(e^\alpha\otimes e_\beta\otimes \otimes_{j=1}^n e_{\alpha_j}) 
\right)\theta_{\alpha,k+1}\ \prod_{j=1}^n u^{\alpha_j}_{k_j}.
\end{split}
\end{equation}
To this definition, we add the extra densities $Y_{\beta,-1}:=- \theta_{\beta,1}$, $1\leq \beta\leq N$.\\

The {\it double ramification hierarchy} associated to the given F-CohFT is the infinite system of local vector fields $\oY_{\beta,d}$, $1\leq \beta\leq N$, $d\geq-1$, associated with the above densities or, in terms of evolutionary PDEs, the system
\begin{gather}\label{eq:DR hierarchy}
\frac{\d u^\alpha}{\d t^\beta_d}=\d_x P^\alpha_{\beta,d},\qquad 1\le \alpha,\beta\le N,\quad d\ge 0,
\end{gather}
where
\begin{equation}\label{eq:P-polynomials for DR hierarchy}
P^\alpha_{\beta,d}:=\sum_{\substack{g,n\geq 0,\,2g+n>0\\k_1,\ldots,k_n\geq 0\\\sum_{j=1}^n k_j=2g}} \frac{\eps^{2g}}{n!} \Coef_{(a_1)^{k_1}\ldots(a_n)^{k_n}} \left(\int_{\DR_g(-\sum_{j=1}^n a_j,0,a_1,\ldots,a_n)} \hspace{-2.3cm}\lambda_g \psi_2^d c_{g,n+2}(e^\alpha\otimes e_\beta\otimes \otimes_{j=1}^n e_{\alpha_j}) \right)\prod_{j=1}^n u^{\alpha_j}_{k_j}.
\end{equation}
Let us adopt the convention $P^\alpha_{\beta,-1}:=\delta^\alpha_\beta$. Notice that the system of evolutionary PDEs \eqref{eq:DR hierarchy} carries strictly less information than the corresponding densities \eqref{eq:DR densities}.  We have the following result.

\medskip

\begin{theorem}[\cite{BR18}]
All the equations of the DR hierarchy (\ref{eq:DR hierarchy}) are compatible with each other, namely, 
$$
\frac{\d}{\d t^{\beta_2}_{d_2}}\left(\frac{\d u^\alpha}{\d t^{\beta_1}_{d_1}}\right) = \frac{\d}{\d t^{\beta_1}_{d_1}}\left(\frac{\d u^\alpha}{\d t^{\beta_2}_{d_2}}\right),\qquad 1\leq \alpha,\beta_1,\beta_2\leq N,\quad d_1,d_2\geq 0.
$$
\end{theorem}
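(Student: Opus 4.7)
My plan is to strengthen the statement to the vanishing of the Lie bracket $[\oY_{\beta_1,d_1},\oY_{\beta_2,d_2}]=0$ in $\hLambda^1$, which, by the equivalence recalled just after~\eqref{eq:evolutionary PDEs}, implies the claimed compatibility of the associated evolutionary systems. Working at the density level, I will compute $[Y_{\beta_1,d_1}(x),Y_{\beta_2,d_2}(y)]\in \hcA^1\otimes\hcA^1$ via the Fourier formula~\eqref{eq:SN bracket in p variables}, starting from the explicit expressions~\eqref{eq:DR densities in p variables}; the goal is to rewrite this bracket as a sum of a piece symmetric in $(\beta_1,d_1)\leftrightarrow(\beta_2,d_2)$ (which makes the Lie bracket vanish after swapping) and $\d_x$- or $\d_y$-exact terms (which die in $\hLambda^1$).

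Substituting~\eqref{eq:DR densities in p variables} into~\eqref{eq:SN bracket in p variables}, each summand in the bracket becomes a product of two DR-cycle integrals on $\oM_{g_1,n_1+2}$ and $\oM_{g_2,n_2+2}$, contracted through a shared Greek index $\alpha$ (arising from the $q$--$p$ pairing) and a shared Fourier mode $a\in\mbZ$, while the two distinguished marked points (carrying the $\psi_2^{d_i}$ classes and the insertions $e_{\beta_i}$) sit on opposite moduli factors. Precisely because the F-CohFT gluing axiom~(iii) of Definition~\ref{definition:F-CohFT} contracts a \emph{single} index through a node (covector on one side, vector on the other), it matches the structure of the $q$--$p$ contraction: I use it to reassemble the product $c_{g_1,n_1+2}\otimes c_{g_2,n_2+2}$ as the pullback of $c_{g_1+g_2,n_1+n_2+1}$ along a separating boundary gluing map $\gl$ of $\oM_{g_1+g_2,n_1+n_2+3}$. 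The geometric heart of the argument is then a DR splitting identity: after pushing forward along $\gl$ and summing over the shared Fourier mode $a\in\mbZ$, the class $\lambda_{g_1}\lambda_{g_2}\DR_{g_1}(a,\ldots)\DR_{g_2}(-a,\ldots)$ reconstructs $\lambda_{g_1+g_2}\DR_{g_1+g_2}(\ldots)$ on the ambient moduli, up to boundary contributions that are annihilated by the Hodge factor $\lambda_g$. This rests on Hain's polynomial formula for $\DR_g|_{\cM_{g,n}^\ct}$ together with the vanishing of $\lambda_g$ off compact type (both recalled in the excerpt), and is the same summation identity exploited in~\cite{BR16a} in the CohFT case.

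Once the bracket is reorganised as a single integral
$$
\int_{\DR_g(\ldots,a,b,\ldots)}\lambda_g\,\psi_2^{d_1}\psi_3^{d_2}\,c_{g,n+3}\bigl(e^\alpha\otimes e_{\beta_1}\otimes e_{\beta_2}\otimes\cdots\bigr)
$$
on a larger moduli space with two adjacent distinguished marked points, the $S_n$-equivariance axiom~(i) of Definition~\ref{definition:F-CohFT} makes the integrand manifestly symmetric under the swap of those two points; this identifies the bracket with its image under $(\beta_1,d_1)\leftrightarrow(\beta_2,d_2)$, and the difference passes to zero modulo $\Im\,\d_x\oplus\mbC[[\eps]]$ in $\hLambda^1$. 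The main technical obstacle is the combinatorial bookkeeping of Fourier modes, derivative placements, and $\psi$-class pullbacks (the psi classes sit at the distinguished points, which never participate in the glued node, so they pull back trivially), together with the verification that non-separating boundary contributions to the splitting identity are killed by $\lambda_g$. This bookkeeping was carried out in detail in~\cite{BR16a} and transfers verbatim to our setting since, as noted above, the F-CohFT gluing axiom~(iii) is exactly strong enough to match the single-index contraction arising from the Schouten--Nijenhuis bracket; this is the reason the theorem is attributed to~\cite{BR18}.
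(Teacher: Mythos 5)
Your overall strategy --- strengthening the statement to $[\oY_{\beta_1,d_1},\oY_{\beta_2,d_2}]=0$, computing the bracket of densities via~\eqref{eq:SN bracket in p variables}, recognizing each resulting term as a product of two DR integrals glued at a node through the shared index $\alpha$ and Fourier mode $a$, and invoking the F-CohFT gluing axiom --- is exactly the set-up of the paper's proof of Theorem~\ref{theorem:Main}. But the geometric identity you place at the heart of the argument is not correct, and the structural conclusion you draw from it (symmetric piece plus exact terms) is not the one that actually holds. There is no identity reconstructing $\lambda_g\DR_g(\ldots)$ from a sum over separating splittings $\lambda_{g_1}\lambda_{g_2}\DR_{g_1}(\ldots,-k)\boxtimes\DR_{g_2}(\ldots,k)$: the DR cycle is not supported on the boundary, and an unweighted sum over the node multiplicity does not even assemble into a class on the ambient moduli space. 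The identity that is available, \cite[Corollary~2.2]{BSSZ15} (equation~\eqref{eq:psi-psi times DR} in the paper), reads
\begin{equation*}
(a_2\psi_2-a_3\psi_3)\,\lambda_g\,\DR_g(A_{[n+3]})=\sum\lambda_g\cdot k\cdot \DR_{g_1}(a_2,A_I,-k)\boxtimes\DR_{g_2}(a_3,A_J,k),
\end{equation*}
with the crucial weight $k$ at the node and, on the left-hand side, a \emph{difference of psi classes} at the two distinguished points rather than the bare DR cycle.

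The weight $k$ is precisely what the bracket computation produces (it comes from the prefactor $ia$ in~\eqref{eq:DR densities in p variables}), so after the gluing axiom and the splitting formula your reassembled expression equals the pairing of $(a_2\psi_2-a_3\psi_3)\lambda_g\psi_2^{d_1}\psi_3^{d_2}\DR_g$ with $c_{g,n+3}(e^\alpha\otimes e_{\beta_1}\otimes e_{\beta_2}\otimes\cdots)$; the coefficients $a_2$ and $a_3$ turn into $\d_x$ and $\d_y$ acting on the exponentials, giving $\d_x Y_{\beta_1,d_1+1;\beta_2,d_2}(x,y)-\d_y Y_{\beta_1,d_1;\beta_2,d_2+1}(x,y)$. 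The bracket is therefore \emph{entirely} exact in $x$ and $y$ --- there is no residual symmetric term, and no symmetry-versus-antisymmetry argument is needed or available. This is part~(i) of Theorem~\ref{theorem:Main}; part~(ii), and hence the compatibility statement, follows by integrating in both variables. Two smaller inaccuracies: the detailed bookkeeping you appeal to is carried out in \cite{BR16b} (Lemma~3.3), not \cite{BR16a}; and the extra powers of $\psi_2$, $\psi_3$ on the left of the splitting formula show that the psi classes at the distinguished points do not merely ``pull back trivially'' --- they are exactly what raises $d_1$ or $d_2$ by one in the two-point generating series on the right-hand side of part~(i).
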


\medskip

This theorem is proved in \cite{BR18}, but we give another proof in Theorem \ref{theorem:Main} (see part~(ii)). For $1\leq\beta_1,\beta_2\leq N$ and $d_1,d_2\geq 0$, let us define the generating series
\begin{equation}\label{eq:DR 2-point vector}
\begin{split}
Y_{\beta_1,d_1;\beta_2,d_2}(x,y):=-\hspace{-0.35cm}\sum _{\substack{g,n\geq 0\\a,b_1,b_2,a_1,\ldots,a_n \in \mbZ}}\hspace{-0.35cm}\frac{ia(-\eps^2)^g}{n!} &\left(\int_{\DR_g(a,b_1,b_2,a_1,\ldots,a_n)} \hspace{-1.6cm}\lambda_g \psi_2^{d_1} \psi_3^{d_2} c_{g,n+3}(e^\alpha\otimes e_{\beta_1}\otimes e_{\beta_2}\otimes\otimes_{j=1}^n e_{\alpha_j})\right)\cdot\\
&\cdot q_{\alpha,a} \left(\prod_{j=1}^n p^{\alpha_j}_{a_j}\right)e^{-ib_1x}\,e^{-ib_2y},
\end{split}
\end{equation}
where we adopt the convention that $\DR_g(a,b_1,b_2,a_1,\ldots,a_n):=0$ when $a+b_1+b_2+\sum_{j=1}^n a_j\ne 0$. To this definition, for future convenience, we add $Y_{\beta_1,-1;\beta_2,d}(x,y)=Y_{\beta_1,d;\beta_2,-1}(x,y):=0$, $1\leq \beta_1,\beta_2\leq N$, $d\geq 0$.\\

We will use the symbol~$\un$, as an index, to denote the sum over the values $1\le\alpha\le N$ for that index with the coefficients $A^\alpha$. For example, $Y_{\un,d}:=A^\mu Y_{\mu,d}$, $\theta_{\un,k} := A^\mu \theta_{\mu,k}$, and $\frac{\d}{\d t^\un_d}:=A^\mu\frac{\d}{\d t^\mu_d}$.

\medskip

\begin{theorem}\label{theorem:Main}
For all $1\leq \beta_1,\beta_2\leq N$ and $d_1,d_2\ge -1$ such that $d_1+d_2\geq -1$, we have
\begin{itemize}
\item[(i)] $\displaystyle [Y_{\beta_2,d_2}(y),Y_{\beta_1,d_1}(x)] = \d_x Y_{\beta_1,d_1+1;\beta_2,d_2}(x,y) - \d_y Y_{\beta_1,d_1;\beta_2,d_2+1}(x,y)$;
\item[(ii)] $\displaystyle[\oY_{\beta_2,d_2},\oY_{\beta_1,d_1}] =0$;
\item[(iii)] $\displaystyle[\oY_{\un,1},Y_{\beta_1,d_1}] = \d_x (D-1) Y_{\beta_1,d_1+1}$, where $\displaystyle D:= \sum_{k\geq 0} \left(u^\alpha_k \frac{\d}{\d u^\alpha_k} + \theta_{\alpha,k}\frac{\d}{\d \theta_{\alpha,k}}\right) + \eps \frac{\d}{\d \eps}$;
\item[(iv)] $\displaystyle[\oY_{\beta_2,0},Y_{\beta_1,d_1}] = \d_x \frac{\d }{\d u^{\beta_2}}Y_{\beta_1,d_1+1}$;
\item[(v)] $\displaystyle Y_{\un,0} =- u^\alpha \theta_{\alpha,1} + \d_x^2 S$,  $S\in (\hcA^1)^{[-1]}$, which implies $\displaystyle \frac{\d u^\alpha}{\d t^\un_0} = \d_x u^\alpha$ for $1\leq\alpha\leq N$;
\item[(vi)] $\displaystyle \frac{\d}{\d u^\un} Y_{\beta_1,d_1+1} = Y_{\beta_1,d_1}$, $\displaystyle \frac{\d}{\d u^\un} P^{\beta_2}_{\beta_1,d_1+1} = P^{\beta_2}_{\beta_1,d_1}$.
\item[(vii)] $\displaystyle\frac{\d}{\d u^{\beta_2}} P^{\beta_1}_{\un,1} = D P^{\beta_1}_{\beta_2,0}$.
\end{itemize}
\end{theorem}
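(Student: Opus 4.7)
The plan is to treat part~(i) as the fundamental identity from which everything else will cascade. Once~(i) is established, part~(ii) is obtained for free by integrating in $x$ and in $y$: the right-hand side of~(i) is $\d_x(\cdot) - \d_y(\cdot)$, so integrating in $x$ (to pass from $Y(x)$ to $\oY$) and then in $y$ kills both $\d_x$- and $\d_y$-exact pieces. Parts~(iii)--(vii) are then consequences of the F-CohFT unit axiom~(ii) of Definition~\ref{definition:F-CohFT} combined with the standard pullback/pushforward behaviour of $\psi$-classes and the DR cycle under forgetful morphisms.

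For part~(i), I would work entirely in the Fourier variables $p^\alpha_a, q_{\alpha,a}$ using formulae~\eqref{eq:DR densities in p variables} and~\eqref{eq:SN bracket in p variables}. Since each density $Y_{\beta,d}$ is linear in $q$, the bracket $[Y_{\beta_2,d_2}(y),Y_{\beta_1,d_1}(x)]$ collapses to a single sum over an internal momentum $a$ arising from pairing the $q$-variable of one density with a $p$-variable of the other. That momentum represents gluing two DR cycles along two markings with opposite multiplicities $\pm a$; summation over $a$ combined with the polynomiality of $\lambda_g \DR_g$ (from~\cite{JPPZ17}) and the splitting property of $\lambda_g \DR_g$ at a separating node, together with the F-CohFT gluing axiom~(iii) of Definition~\ref{definition:F-CohFT}, recollects the expression into an integral over $\oM_{g,n+2}$ with three distinguished markings carrying $e^\alpha$, $e_{\beta_1}$, $e_{\beta_2}$. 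The two $\psi$-powers $\psi_2^{d_1}\psi_3^{d_2}$ on the glued space are related to the ones on the factors by the comparison $\psi_i = \gl^*\psi_i - [\text{node divisor}]$; the boundary correction absorbs one power of $\psi$ and, after Fourier resummation, produces precisely the combination $\d_x Y_{\beta_1,d_1+1;\beta_2,d_2} - \d_y Y_{\beta_1,d_1;\beta_2,d_2+1}$.

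For parts~(iii)--(vii), the unit axiom is the common tool. In part~(iii), the insertion $e_{\un}=A^\mu e_\mu$ at a new marked point carrying $\psi^2$ triggers the dilaton analogue $\pi_*(\psi^2\,\pi^*\alpha) = (2g-2+n)\alpha$, producing the scaling operator $D-1$ applied to $Y_{\beta_1,d_1+1}$. Part~(iv) is its string counterpart: inserting $e_{\beta_2}$ with $\psi^0$ at a new marked point becomes a pullback from the smaller moduli, giving the partial derivative $\frac{\d}{\d u^{\beta_2}}$. For part~(v), the only contribution to $Y_{\un,0}$ that is not a formal $\d_x^2$-derivative comes from the $g=0, n=1$ stratum, for which $c_{0,3}(e^\alpha\otimes e\otimes e_{\alpha_1})=\delta^\alpha_{\alpha_1}$ and $\DR_0$ is the fundamental class of $\oM_{0,3}$; this stratum contributes exactly $-u^\alpha\theta_{\alpha,1}$, while all higher strata factor through forgetful pullbacks of $\psi$-free classes and produce at least two extra powers of the momentum $a$, which become $\d_x^2$ after Fourier inversion. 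Part~(vi) is then the string equation in this context: differentiating with respect to $u^{\un}$ corresponds, via the unit axiom, to a forgetful pullback, which combined with the comparison $\psi_i = \pi^*\psi_i + [\text{div}]$ shifts $d_1+1$ down to $d_1$. Finally, part~(vii) is a formal consequence of~(iii) and~(vi): specialising~(iii) to the integrated density and applying~(vi) converts $\frac{\d}{\d u^{\beta_2}} P^{\beta_1}_{\un,1}$ into $D P^{\beta_1}_{\beta_2,0}$.

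The main obstacle is part~(i). The bookkeeping of marked points, genus splittings, and $\psi$-class corrections has to be done very carefully, and it is essential to use the polynomiality of $\lambda_g\DR_g$ on all of $\oM_{g,n}$ rather than just on compact type, so that the Fourier-coefficient extraction makes sense throughout. The identification of the boundary correction in the $\psi$-class comparison with an extra $\d$-derivative in $x$ or $y$ is the point at which one sees both the recursion shift $d\mapsto d+1$ and the asymmetry between the two summands on the right-hand side of~(i); getting the signs and the left/right assignment correct is the only genuinely subtle step of the argument.
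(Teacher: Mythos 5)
Your overall architecture coincides with the paper's: part~(i) is the engine, (ii) follows by integrating in $x$ and $y$, and (iii)--(vii) are extracted via the unit axiom, the dilaton/string pushforwards of $\psi$-classes, and the genus-$0$, $n=1$ analysis for (v) (where the paper cites the divisibility of $\pi_*(\lambda_g\DR_g)$ by the square of the forgotten multiplicity, \cite[Lemma~5.1]{BDGR18} --- your ``two extra powers of momentum'' is the same point). Part~(vii) in the paper is in fact derived from the two-point generating series together with the identities established while proving (iii) and (iv), not from (iii) and (vi) alone, but this is a minor rearrangement.

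The genuine gap is in your mechanism for part~(i). You propose to reassemble the boundary sum produced by the bracket using ``the splitting property of $\lambda_g\DR_g$ at a separating node'' together with the comparison $\psi_i=\gl^*\psi_i-[\text{node divisor}]$. Neither ingredient is correct in the form you use it. The DR cycle does \emph{not} restrict to a separating boundary divisor as a single product $\DR_{g_1}\boxtimes\DR_{g_2}$, and $\psi$-classes at marked points pull back under gluing maps \emph{without} a boundary correction --- the correction formula you invoke is the one for forgetful maps, which is irrelevant here. The identity that actually makes part~(i) work is \cite[Corollary~2.2]{BSSZ15}, reproduced as equation~\eqref{eq:psi-psi times DR} in the paper: on compact type (whence the role of $\lambda_g$, which kills everything outside $\cM^{\ct}_{g,n}$),
$$(a_2\psi_2-a_3\psi_3)\,\lambda_g\DR_g(A_{[n+3]})=\sum \lambda_g\cdot k\cdot\DR_{g_1}(a_2,A_I,-k)\boxtimes\DR_{g_2}(a_3,A_J,k),$$
where the sum runs over genus and marking splittings and over the node multiplicity $k$. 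The essential features --- that the $\psi$-classes must be weighted by the ramification multiplicities $a_2,a_3$ (an unweighted $\psi_2-\psi_3$ does not localize to the boundary), and that each boundary term carries the explicit factor $k$ that matches the $-ik$ prefactor of the density whose $q$-variable is contracted in the bracket --- are absent from your description. Without this input your ``recollection'' step does not close: the sum over the internal momentum $a$ in the Fourier bracket cannot be identified with an integral over a single $\oM_{g,n+3}$ by naive splitting plus a $\psi$-comparison. Once \eqref{eq:psi-psi times DR} is substituted for your heuristic, the rest of your computation (pairing the $q$ of one density against a $p$ of the other, the external factors $-ia_2$, $-ia_3$ becoming $\d_x$, $\d_y$, and the case split according to whether the marked point carrying $e^{\alpha}$ lies in $I$ or $J$ producing the two terms of the bracket) is exactly the paper's argument.
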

\begin{proof}
For $n\ge 0$, let us use the notation $[n]$ for the set $\{1,\ldots,n\}$.\\

Let us prove part~(i). If $d_1=-1$ or $d_2=-1$, then the statement easily follows from the definitions. For $d_1,d_2\ge 0$, the statement is analogous to \cite[Lemma 3.3]{BR16b}, and we use \cite[Corollary 2.2]{BSSZ15}, describing the intersection of the psi classes with the DR cycle, together with the fact that that $\lambda_g$ vanishes on~$\oM_{g,n}\setminus\cM_{g,n}^{\ct}$. Let $n\ge 0$ and consider integers $a_1,\ldots,a_{n+3}$ with the vanishing sum. For a subset $I=\{i_1,\ldots,i_{|I|}\}\subset [n+3]$, $i_1<i_2<\ldots<i_{|I|}$, denote by~$A_I$ the string $a_{i_1},a_{i_2},\ldots,i_{|I|}$. For $I,J\subset[n+3]\setminus\{2,3\}$ with $I\sqcup J=[n+3]\setminus\{2,3\}$, and for $g_1,g_2>0$ with $2g_1+|I|>0$, $2g_2+|J|>0$, let us denote by $\DR_{g_1}(a_2,A_I,-k)\boxtimes \DR_{g_2}(a_3,A_J,k)$ the cycle in~$\oM_{g_1+g_2,n+3}$ obtained by gluing the two DR cycles at the marked points labeled by the integers~$-k$ and~$k$, respectively. Here, the coefficient $a_j$, $1\le j\le n+3$, is attached to the marked point~$j$. Then we have
\begin{equation}\label{eq:psi-psi times DR}
(a_2\psi_2 - a_3\psi_3)\lambda_g \DR_g(A_{[n+3]})=\sum_{\substack{I\sqcup J=[n+3]\setminus\{2,3\}\\ k\in \mbZ,\, g_1\ge 0,\,g_2 \ge 0\\g_1+g_2=g\\2g_1+|I|,\,2g_2+|J|>0}} \lambda_g\cdot k\cdot\DR_{g_1}(a_2,A_I,-k)\boxtimes\DR_{g_2}(a_3,A_J,k).
\end{equation}

\bigskip

One then needs to intersect this relation with the class $-a_1 e^{-i a_2 x} e^{-i a_3 y}\psi_2^{d_1} \psi_3^{d_2} c_{g,n+3}(e^{\alpha_1}\otimes \otimes_{i=2}^{n+3}e_{\alpha_i})$, where, as usual, the covector $e^{\alpha_1}$ is attached to the marked point $1$ and each vector~$e_{\alpha_i}$ is attached to the marked point $i$. Thanks to the gluing axiom of the F-CohFT, by the definitions~\eqref{eq:DR densities in p variables} and~\eqref{eq:DR 2-point vector}, and after setting $\alpha_2=\beta_1$ and $\alpha_3=\beta_2$, the left-hand side of~equation~\eqref{eq:psi-psi times DR} produces the right-hand side of the equation in part~(i) and depending on whether, in the above sum, the marked point~$1$ belongs to the subset $I$ or $J$, we obtain either of the two terms in the Lie bracket on the left-hand side of the equation in part~(i).\\

Part~(ii) is immediately obtained from (i) upon integration in both $x$ and $y$.\\

Part~(iii) is obtained from (i) after setting $\beta_2=\un$, $d_2=1$ and integrating in $y$. The generating series $\int Y_{\beta_1,d_1+1;\un,1} (x,y)dy$ reduces to $(D-1) Y_{\beta_1,d_1+1}$ thanks to the following simple equality:
$$
\int_{\DR_g(a,b_1,0,a_1,\ldots,a_n)} \hspace{-2.4cm}\lambda_g \psi_2^{d_1+1}\psi_3 c_{g,n+3}(e^\alpha\otimes e_{\beta_1}\otimes e\otimes \otimes_{j=1}^n e_{\alpha_j}) =(2g+n) \int_{\DR_g(a,b_1,a_1,\ldots,a_n)} \hspace{-2.2cm}\lambda_g\psi_2^{d_1+1} c_{g,n+2}(e^\alpha\otimes e_{\beta_1}\otimes\otimes_{j=1}^n e_{\alpha_j}),
$$
which is in turn a consequence of the following behavior of the involved cohomology classes with respect to the morphism $\pi\colon\oM_{g,n+3}\to\oM_{g,n+2}$ forgetting the third marked point:
\begin{align}
&\DR_g(a,b_1,0,a_1,\ldots,a_n)=\pi^*\DR_g(a,b_1,a_1,\ldots,a_n),\label{eq:DR and pullback of DR}\\
&c_{g,n+3}(e^\alpha\otimes e_{\beta_1}\otimes e\otimes\otimes_{j=1}^n e_{\alpha_j})=\pi^*c_{g,n+2}(e^\alpha\otimes e_{\beta_1}\otimes\otimes_{j=1}^n e_{\alpha_j}),\label{eq:c and pullback of c}\\
&\lambda_g = \pi^* \lambda_g,\qquad\pi_*(\psi_2^{d_1+1}\psi_3) = (2g+n) \psi_2^{d_1+1}.\label{eq:lambda and pullback of lambda}
\end{align}
Indeed, the operator $D$ multiplies each term of $Y_{\beta_1,d_1+1}$ by the number of variables~$\eps$,~$u^*_*$, and~$\theta_{*,*}$ appearing in that term, i.e., by $2g+n+1$.\\

Part~(iv) is similarly obtained from (i) by setting $d_2=0$ and integrating in $y$, as $\int  Y_{\beta_1,d_1+1;\beta_2,0}(x,y) dy$ reduces by definition to $\frac{\d}{\d u^{\beta_2}}Y_{\beta_1,d_1+1}$.\\

To deduce (v), we consider formula \eqref{eq:DR densities in p variables} and notice that, for $(g,n)\neq(0,1)$,
$$
\int_{\DR_g(a,-a-\sum_{j=1}^n a_j,a_1,\ldots,a_n)} \hspace{-2.2cm}\lambda_g c_{g,n+2}(e^\alpha\otimes e\otimes\otimes_{j=1}^n e_{\alpha_j}) = \int_{\pi_*(\lambda_g\DR_g(a,-a-\sum_{j=1}^n a_j,a_1,\ldots,a_n))} \hspace{-2.9cm}c_{g,n+1}(e^\alpha\otimes\otimes_{j=1}^n e_{\alpha_j}),
$$
and $\pi_*(\lambda_g\DR_g(a,-a-\sum_{j=1}^n a_j,a_1,\ldots,a_n))$ is divisible by $(a+\sum_{j=1}^n a_j)^2$ as proved in \cite[Lemma~5.1]{BDGR18}, where $\pi\colon\oM_{g,n+2}\to\oM_{g,n+1}$ is the map forgetting the second marked point. When $g=0$ and $n=1$, we have instead $\DR_0(a,-a-a_1,a_1)=1$, $\lambda_0=1$, and $c_{0,3}(e^\alpha\otimes e\otimes e_{\alpha_1})= \delta^{\alpha}_{\alpha_1}$, which gives the desired result.\\

Part~(vi) immediately follows from parts~(iv),~(v), the properties $\Ker\left(\d_x|_{\hcA^1}\right)=0$, $\Ker\left(\d_x|_{\hcA}\right)=\mbC[[\eps]]$, and the fact $\d_x P^{\beta_2}_{\beta_1,d_1}=\frac{\delta}{\delta\theta_{\beta_2}}\oY_{\beta_1,d_1}$.\\

For part~(vii), we compute $\d_x\frac{\d P^{\beta_1}_{\un,1}}{\d u^{\beta_2}}=\frac{\delta}{\delta\theta_{\beta_1}}\int Y_{\un,1;\beta_2,0}\,dx\,dy=\frac{\delta}{\delta\theta_{\beta_1}}(D-1) \oY_{\beta_2,0}=D \frac{\delta}{\delta \theta_{\beta_1}} \oY_{\beta_2,0}=D\d_x P^{\beta_1}_{\beta_2,0}=\d_x D P^{\beta_1}_{\beta_2,0}$.
\end{proof}

\medskip

\subsection{Densities of integrals of motion for the DR hierarchy} 

The DR hierarchy of a CohFT is a Hamiltonian integrable system \cite{Bur15, BR16a}, so the Hamiltonians both generate the commuting vector fields and provide integrals of motion for the hierarchy. In the non-Hamiltonian F-CohFT case, integrals of motion have a separate geometric definition in terms of intersection numbers on the moduli space of curves. For $1\leq \beta\leq N$ and $d\geq0$, we define the following system of formal Fourier series:
\begin{equation}\label{eq:DR densities iom in p variables}
\begin{split}
g^{\beta,d} := \sum _{\substack{g,n\geq 0\\2g+n-1>0\\a_1,\ldots,a_n \in\mbZ}}  \frac{(-\eps^2)^g}{n!} \left(\int_{\DR_g(-\sum_{j=1}^n a_j,a_1,\ldots,a_n)} \hspace{-2.3cm}\lambda_g \psi_1^d c_{g,n+1}(e^\beta\otimes \otimes_{j=1}^n e_{\alpha_j})\right) \left(\prod_{j=1}^n p^{\alpha_j}_{a_j}\right)e^{i\left(\sum_{j=1}^n a_j\right)x},
\end{split}
\end{equation}
which, thanks to the polynomiality property of the DR cycle, can be rewritten as differential polynomials $g^{\beta,d}\in\hcA^{[0]}$ as
\begin{equation}\label{eq:DR densities iom}
\begin{split}
g^{\beta,d}=\sum_{\substack{g,n\geq 0,\,2g+n-1>0\\k_1,\ldots,k_n\geq 0\\\sum_{j=1}^n k_j=2g}} \frac{\eps^{2g}}{n!} \Coef_{(a_1)^{k_1}\ldots(a_n)^{k_n}} \left(\int_{\DR_g(-\sum_{j=1}^n a_j,a_1,\ldots,a_n)} \hspace{-2.3cm}\lambda_g \psi_1^d c_{g,n+2}(e^\beta\otimes \otimes_{j=1}^n e_{\alpha_j}) 
\right) \prod_{j=1}^n u^{\alpha_j}_{k_j} .
\end{split}
\end{equation}
To this definition, we add the extra densities of conserved quantities $g^{\beta,-1}: = u^\beta $, $1\leq \beta\leq N$, and the ``primary'' local vector field $\oY := -\int g^{\beta,0} \theta_{\beta,1}\, dx$ or, in other words, $\d_x g^{\beta,0}=\frac{\delta \oY}{\delta \theta_\beta}$, $1\leq \beta\leq N$.\\

Finally, for $1\leq\beta_1,\beta_2\leq N$ and $d_1,d_2\geq 0$, let us define the generating series
\begin{equation}\label{eq:DR 2-point vector iom}
\begin{split}
g^{\beta_1,d_1}_{\beta_2,d_2}(x,y):=\hspace{-0.3cm}\sum_{\substack{g,n\geq 0,\,2g+n>0\\b_1,b_2,a_1,\ldots,a_n \in \mbZ}}\hspace{-0.3cm} \frac{(-\eps^2)^g}{n!} &\left(\int_{\DR_g(b_1,b_2,a_1,\ldots,a_n)} \hspace{-2.5cm}\lambda_g \psi_1^{d_1} \psi_2^{d_2} c_{g,n+2}(e^{\beta_1}\otimes e_{\beta_2}\otimes\otimes_{j=1}^n e_{\alpha_j})\right)  \left(\prod_{j=1}^n p^{\alpha_j}_{a_j}\right)e^{-ib_1x}\,e^{-ib_2y}.
\end{split}
\end{equation}
To this definition, for future convenience, we add $g^{\beta_1,-1}_{\beta_2,d}(x,y)=g^{\beta_1,d}_{\beta_2,-1}(x,y):=0$, $1\leq \beta_1,\beta_2\leq N$, $d\geq 0$. 

\medskip

\begin{theorem}\label{theorem:Main integrals}
For all $1\leq \beta_1,\beta_2\leq N$ and $d_1,d_2\ge -1$ such that $d_1+d_2\geq -1$, we have

\begin{itemize}
\item[(i)] $\displaystyle [Y_{\beta_1,d_1}(y),g^{\beta_2,d_2}(x)] = \d_x g^{\beta_2,d_2+1}_{\beta_1,d_1}(x,y)- \d_y g^{\beta_2,d_2}_{\beta_1,d_1+1}(x,y)$;

\item[(ii)] $\displaystyle [\oY_{\beta_1,d_1},\og^{\beta_2,d_2}]=0$;

\item[(iii)] $\displaystyle[\oY_{\un,1},g^{\beta_2,d_2}] = \d_x (D-1) g^{\beta_2,d_2+1}$;

\item[(iv)] $\displaystyle[\oY_{\beta_1,0},g^{\beta_2,d_2}] = \d_x \frac{\d }{\d u^{\beta_1}}g^{\beta_2,d_2+1}$;

\item[(v)] $\displaystyle\frac{\d}{\d u^\un} g^{\beta_1,d_1+1} = g^{\beta_1,d_1}$;

\item[(vi)] $\displaystyle \oY_{\un,1} = (D-2) \oY$;

\item[(vii)] $\displaystyle\oY_{\beta,0} = \frac{\d }{\d u^\beta}\oY$.
\end{itemize}
\end{theorem}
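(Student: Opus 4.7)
The plan is to mirror the proof of Theorem~\ref{theorem:Main}: part~(i) is the fundamental statement on the moduli space side, parts~(ii)-(v) follow by appropriate specializations and integrations of~(i), and parts~(vi)-(vii) are corollaries of~(iii) and~(iv) evaluated at the auxiliary value $d_2=-1$.

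For part~(i), I would apply Corollary~2.2 of~\cite{BSSZ15}, restricted to $\cM_{g,n}^\ct$ through multiplication by $\lambda_g$:
\begin{equation*}
(a_1\psi_1-a_2\psi_2)\lambda_g\DR_g(A_{[n+2]})=\sum_{\substack{I\sqcup J=[n+2]\setminus\{1,2\}\\k\in\mbZ,\,g_1+g_2=g\\2g_1+|I|,\,2g_2+|J|>0}}\lambda_g\cdot k\cdot\DR_{g_1}(a_1,A_I,-k)\boxtimes\DR_{g_2}(a_2,A_J,k).
\end{equation*}
The idea is to intersect this identity with $-e^{-ia_1x}e^{-ia_2y}\psi_1^{d_2}\psi_2^{d_1}c_{g,n+2}(e^{\beta_2}\otimes e_{\beta_1}\otimes\otimes_{i\geq 3}e_{\alpha_i})$, placing $e^{\beta_2}$ at marked point~$1$ and $e_{\beta_1}$ at marked point~$2$, and sum over $n$, labels, and the $a_i$'s with the standard $(-\eps^2)^g/n!$ weight. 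By the F-CohFT gluing axiom, the gluings on the right-hand side assemble into the Schouten--Nijenhuis bracket $[Y_{\beta_1,d_1}(y),g^{\beta_2,d_2}(x)]$ computed via~\eqref{eq:SN bracket in p variables}, where only the $\partial/\partial q_{\alpha,a}$-$\partial/\partial p^\alpha_{-a}$ contraction contributes since $g^{\beta_2,d_2}$ carries no $q$-variables. On the left-hand side, the multipliers $a_1e^{-ia_1x}$ and $-a_2e^{-ia_2y}$ convert to $\partial_x$ and $\partial_y$ acting on the respective exponentials, producing exactly the right-hand side of~(i).

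Parts~(ii)-(v) follow by standard manipulations. Part~(ii) is obtained by integrating~(i) in both $x$ and $y$. Part~(iii) follows from~(i) by setting $\beta_1=\un$, $d_1=1$ and integrating in~$y$; using the forgetful-map relations~\eqref{eq:DR and pullback of DR}--\eqref{eq:lambda and pullback of lambda} (with $\pi$ forgetting the marked point carrying the unit), one identifies $\int g^{\beta_2,d_2+1}_{\un,1}(x,y)\,dy$ with $(D-1)g^{\beta_2,d_2+1}$ exactly as in the proof of Theorem~\ref{theorem:Main}(iii). Part~(iv) follows from~(i) with $d_1=0$ after integrating in~$y$: since the $e_{\beta_1}$-marked point has multiplicity~$0$ and no $\psi$-insertion, $\int g^{\beta_2,d_2+1}_{\beta_1,0}(x,y)\,dy$ matches termwise with $\frac{\partial}{\partial u^{\beta_1}}g^{\beta_2,d_2+1}$ directly from the definition~\eqref{eq:DR densities iom in p variables}. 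Part~(v) is an immediate consequence of the unit axiom combined with the same forgetful-map identities, precisely as in the argument for Theorem~\ref{theorem:Main}(vi).

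For parts~(vi) and~(vii), I would specialize~(iii) and~(iv) at $d_2=-1$, using $g^{\beta,-1}=u^\beta$ and $[\oX,u^\beta]=\delta\oX/\delta\theta_\beta$ for any local vector field~$\oX$. Specialization of~(iii) yields $\partial_xP^\beta_{\un,1}=\partial_x(D-1)g^{\beta,0}$, hence $P^\beta_{\un,1}=(D-1)g^{\beta,0}$ (the $\eps$-constant of integration vanishes by degree reasons). A short calculation with the density $-g^{\beta,0}\theta_{\beta,1}$ of $\oY$, using the commutation $[\delta/\delta\theta_\beta,D]=\delta/\delta\theta_\beta$, shows $\frac{\delta(D-2)\oY}{\delta\theta_\beta}=\partial_x(D-1)g^{\beta,0}=\frac{\delta\oY_{\un,1}}{\delta\theta_\beta}$, and since a local vector field is determined by its variational $\theta$-derivatives, this proves~(vi). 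Analogously,~(iv) at $d_2=-1$ gives $P^\beta_{\beta_1,0}=\frac{\partial g^{\beta,0}}{\partial u^{\beta_1}}$, and differentiating the density of $\oY$ in $u^{\beta_1}$ yields $\frac{\delta}{\delta\theta_\beta}\frac{\partial\oY}{\partial u^{\beta_1}}=\frac{\delta\oY_{\beta_1,0}}{\delta\theta_\beta}$, proving~(vii). The main obstacle will be part~(i): setting up precisely the correspondence between the gluing combinatorics of the BSSZ identity and the two terms of the Schouten--Nijenhuis bracket in the formal-Fourier-series formulation, including signs and permutation factors; however, this is entirely analogous to the argument in Theorem~\ref{theorem:Main}(i), and once~(i) is established, the remaining parts are essentially formal consequences.
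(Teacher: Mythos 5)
Your proposal is correct and follows essentially the same route as the paper: part~(i) via the BSSZ splitting formula for $\psi$-classes intersected with $\lambda_g\DR_g$ (note the paper's prefactor is $(-i)e^{-ia_1x}e^{-ia_2y}$ rather than your $-e^{-ia_1x}e^{-ia_2y}$, the extra $-i$ being exactly what converts $a_1$ and $-a_2$ into $\d_x$ and $-\d_y$), parts~(ii)--(v) by the same specializations and integrations used for Theorem~\ref{theorem:Main}, and parts~(vi)--(vii) by setting $d_2=-1$ in~(iii) and~(iv). Your final step of matching the $\delta/\delta\theta_\beta$-derivatives of $\oY_{\un,1}$ and $(D-2)\oY$ is precisely the paper's operation of multiplying by $\theta_{\beta_2}$, summing over $\beta_2$, and integrating over~$x$.
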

\begin{proof}
The proof of (i) is completely analogous to the proof of (i) in Theorem~\ref{theorem:Main}. For $d_1=-1$ or $d_2=-1$, the statement easily follows from the definitions. Suppose $d_1,d_2\ge 0$. Let $n\ge 0$ and consider integers  $a_1,\ldots,a_{n+2}$ with the vanishing sum. Let us write the same relation as~\eqref{eq:psi-psi times DR}, but with the psi classes taken at other marked points:
\begin{equation*}
(a_1\psi_1 - a_2 \psi_2)\lambda_g \DR_g(A_{[n+2]})=\sum_{\substack{I\sqcup J=[n+2]\setminus\{1,2\}\\ k\in \mbZ,\, g_1\ge 0,\,g_2 \ge 0\\g_1+g_2=g\\2g_1+|I|,\,2g_2+|J|>0}}\lambda_g\cdot k\cdot\DR_{g_1}(a_1,A_I,-k)\boxtimes\DR_{g_2}(a_2,A_J,k).
\end{equation*}
Intersecting this relation with the class $ (-i)e^{-a_1ix}e^{-a_2iy}\psi_1^{d_2}\psi_2^{d_1} c_{g,n+2}(e^{\alpha_1}\otimes \otimes_{j=2}^{n+2}e_{\alpha_j})$ and forming the corresponding generating series, we obtain part~(i) (after setting $\alpha_2=\beta_1$ and $\alpha_1=\beta_2$).\\

The proof of (ii) to (iv) follows strictly the arguments in the proof of the corresponding parts in Theorem~\ref{theorem:Main}.\\

The proof of part (v) is the same as the proof of part (vi) in Theorem~\ref{theorem:Main}.\\

For the proof of (vi), consider the equation of part (iii) with $d_2=-1$. Multiplying it by $\theta_{\beta_2}$, summing over $\beta_2$, and integrating over $x$ we obtain, on the left-hand side,
$$
\int[\oY_{\un,1},u^{\beta_2}]\theta_{\beta_2}dx=\int \frac{\delta \oY_{\un,1}}{\delta \theta_{\beta_2}}\theta_{\beta_2} dx = \oY_{\un,1}
$$
and, on the right-hand side,
$$
\int \d_x\left((D-1)g^{\beta_2,0}\right) \theta_{\beta_2} dx = -(D-2) \int g^{\beta_2,0} \theta_{\beta_2,1} dx = (D-2) \oY.
$$

\bigskip

Part~(vii) is proved in an analogous fashion starting from (iv).
\end{proof}

\medskip

\subsection{Homogeneous DR hierarchies}

Let $Y_{\beta,q} \in (\hcA^1)^{[1]}$ and $g^{\alpha,p}\in \hcA^{[0]}$, $1\leq \beta,\alpha\leq N$, $q,p\geq -1$, be the densities of local vector fields and of integrals of motion of the DR hierarchy associated to a homogeneous rank $N$ F-CohFT. Let 
$$
c^\alpha_{\beta\gamma}:=c_{0,3}(e^\alpha\otimes e_\beta\otimes e_\gamma)\in\mbC
$$
for $1\leq\alpha,\beta,\gamma\leq N$.\\

Consider the following vector field on the space of densities of local multivector fields on the formal loop space:
$$
\hE_\gamma:=\sum_{k\geq 0}\left(\left((\delta^\alpha_\beta-q^\alpha_\beta)u^\beta_k+\delta_{k,0} r^\alpha\right) \frac{\d}{\d u^\alpha_k}-(\delta^\alpha_\beta-q^\alpha_\beta)\theta_{\alpha,k}\frac{\d}{\d \theta_{\beta,k}}\right)+\frac{1-\gamma}{2} \eps \frac{\d}{\d \eps},
$$
where $q^\beta_\alpha e_\beta:=Qe_\alpha$ and $r^\alpha e_\alpha:=\overline{r}$. For convenience, let us define $Y_{\alpha,-2}=P^\beta_{\alpha,-2}:=0$ and $g^{\alpha,-2}:=A^\alpha$ for all $1\leq\alpha,\beta\leq N$.

\medskip

\begin{proposition}\label{proposition:homogeneous DR hierarchy}
For all $1\leq \alpha\leq N$ and $d\geq -1$, we have

\begin{itemize} 
\item[(i)] $\displaystyle \hE_\gamma(Y_{\alpha,d}) = d Y_{\alpha,d}+q^\beta_\alpha Y_{\beta,d}+r^\gamma c^\mu_{\gamma\alpha} Y_{\mu,d-1}$;

\item[(ii)] $\displaystyle \hE_\gamma(P^\alpha_{\beta,d}) = (d+1) P^\alpha_{\beta,d}+q^\gamma_\beta P^\alpha_{\gamma,d}-q^\alpha_\gamma P^\gamma_{\beta,d}+r^\gamma c^\mu_{\gamma\beta} P^\alpha_{\mu,d-1}$;

\item[(iii)] $\displaystyle \hE_\gamma(g^{\alpha,d})  = (d+2)g^{\alpha,d}-q^\alpha_\beta g^{\beta,d}+r^\gamma c^\alpha_{\gamma\mu} g^{\mu,d-1}$;

\item[(iv)] $\displaystyle \hE_\gamma(\oY) = \oY - r^\gamma c^\beta_{\gamma\mu} \int u^\mu \theta_{\beta,1} dx$.
\end{itemize}
\end{proposition}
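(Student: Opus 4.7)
My plan is to prove all four parts by direct computation: apply $\hE_\gamma$ to the defining formulas and reorganize via the F-CohFT homogeneity axiom~(\ref{eq:homogeneous F-CohFT}) together with the dimensional constraint on the integrand. Parts~(i), (ii), and~(iii) all follow the same strategy; (iv) will be a direct corollary of~(iii).

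For part~(i), the Leibniz rule applied to~(\ref{eq:DR densities}) expresses $\hE_\gamma Y_{\alpha,d}$ as contributions from $\hE_\gamma\eps^{2g}=(1-\gamma)g\,\eps^{2g}$, from $\hE_\gamma\theta_{\alpha_0,k+1}=-\theta_{\alpha_0,k+1}+q^\nu_{\alpha_0}\theta_{\nu,k+1}$, and, for each $j$, from $\hE_\gamma u^{\alpha_j}_{k_j}=u^{\alpha_j}_{k_j}-q^{\alpha_j}_\nu u^\nu_{k_j}+\delta_{k_j,0}r^{\alpha_j}$. After relabeling summation indices, these produce (a) a scalar multiple $(1-\gamma)g-1+n$ of $Y_{\alpha,d}$; (b) an insertion of $Q^t$ at the covector slot of $c_{g,n+2}$ (from the $q^\nu_{\alpha_0}\theta_\nu$ term); (c) an insertion of $-Q$ at each vector slot $3,\ldots,n+2$ of $c_{g,n+2}$ (from the $-q^{\alpha_j}_\nu u^\nu_{k_j}$ terms summed over $j$; slot~$2$, carrying $e_\alpha$, has no $u$ and is excluded); and (d), from the $r^{\alpha_j}$-terms, by $S_n$-symmetry of $c_{g,n+2}$ on its last $n$ vector slots, a single expression with $\br$ placed at slot $n+2$ and $a_n=0$ enforced by $\Coef_{a_n^0}$.

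The cohomological heart of the argument is the analysis of~(d). Using~(\ref{eq:DR and pullback of DR}), namely $\DR_g(\ldots,0)=\pi^*\DR_g(\ldots)$ with $\pi\colon\oM_{g,n+2}\to\oM_{g,n+1}$ the forgetful map, the projection formula together with $\pi^*\lambda_g=\lambda_g$, and the decomposition $\psi_2^d=(\pi^*\psi_2)^d+(-1)^{d+1}\delta^d$ on $\oM_{g,n+2}$ (where $\delta$ is the boundary divisor of curves with a rational tail carrying the marked points $2$ and $n+2$), I would split this into two pieces. The $(\pi^*\psi_2)^d$ piece yields $\psi_2^d\pi_*c_{g,n+2}(\cdots\otimes\br)$ on $\oM_{g,n+1}$, treated below. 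For the $\delta^d$ piece, the normal bundle formula $\iota^*\delta=-\psi_\bullet^{\mathrm{main}}$ (using $\psi|_{\oM_{0,3}}=0$) gives $\delta^d|_\delta=(-1)^{d-1}\psi_\bullet^{d-1}$, and the F-CohFT gluing axiom identifies $c_{g,n+2}|_\delta=c_{0,3}(e^\mu\otimes e_\alpha\otimes\br)\cdot c_{g,n+1}(e^{\alpha_0}\otimes\cdots\otimes e_\mu)=r^\gamma c^\mu_{\gamma\alpha}\,c_{g,n+1}(\cdots\otimes e_\mu)$. After the unstable bubble is contracted by $\pi$, the node becomes marked point~$2$ of $\oM_{g,n+1}$ and $\psi_\bullet$ becomes $\psi_2$; the signs combine to $(-1)^{d+1}(-1)^{d-1}=1$, and this contribution equals $\psi_2^{d-1}r^\gamma c^\mu_{\gamma\alpha}\,c_{g,n+1}(e^{\alpha_0}\otimes e_\mu\otimes\cdots)$, which upon summation and reindexing produces exactly $r^\gamma c^\mu_{\gamma\alpha}Y_{\mu,d-1}$.

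To finish part~(i), I would apply the homogeneity axiom to the remaining $\pi_*c_{g,n+2}(\cdots\otimes\br)$ piece, which equals $c_{g,n+1}\circ\bigl(-Q^t\otimes\Id^{\otimes n}+\sum_{s=2}^{n+1}Q\text{ at slot }s\bigr)+\gamma g\,c_{g,n+1}-\Deg c_{g,n+1}$. The dimensional constraint---the integrand $\lambda_g\psi_2^d c_{g,n+2}$ must have total complex degree $\dim_{\mbC}\DR_g=2g+n-1$ on $\oM_{g,n+2}$---forces $\Deg c_{g,n+2}=(g+n-1-d)c_{g,n+2}$ on the surviving piece. The $-Q^t$ term cancels the $Q^t$ insertion in~(b); the $\sum Q$ term, combined with the $-Q$ insertions at slots $3,\ldots,n+2$ in~(c), leaves precisely $+Q$ at slot~$2$, producing $q^\beta_\alpha Y_{\beta,d}$. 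The scalar multiples of $Y_{\alpha,d}$ sum to $(1-\gamma)g-1+n+\gamma g-(g+n-1-d)=d$, establishing part~(i). Parts~(ii) and~(iii) proceed identically with two modifications: for $P^\alpha_{\beta,d}$ the absence of $\theta$ removes the $-1$ contribution so the scalar coefficient becomes $d+1$, and the $-Q^t$ at the covector slot comes only from the homogeneity and produces $-q^\alpha_\gamma P^\gamma_{\beta,d}$; for $g^{\alpha,d}$ the relevant boundary divisor contains the point carrying the covector $e^\alpha$, so the gluing produces $c_{0,3}(e^\alpha\otimes\br\otimes e_\mu)=r^\gamma c^\alpha_{\gamma\mu}$, the dimensional constraint reads $\deg c_{g,n+1}=g+n-2-d$, and the scalar coefficient becomes $d+2$. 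Part~(iv) follows from~(iii) at $d=0$: applying Leibniz to $\hE_\gamma\oY=-\int\bigl[\hE_\gamma(g^{\beta,0})\,\theta_{\beta,1}+g^{\beta,0}\,\hE_\gamma\theta_{\beta,1}\bigr]dx$, substituting~(iii) with $g^{\mu,-1}=u^\mu$, the two $Q$-terms $-q^\beta_\mu g^{\mu,0}\theta_{\beta,1}$ and $q^\mu_\beta g^{\beta,0}\theta_{\mu,1}$ cancel after relabeling $\beta\leftrightarrow\mu$, leaving precisely $\oY-r^\gamma c^\beta_{\gamma\mu}\int u^\mu\theta_{\beta,1}\,dx$. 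The main technical hurdle is the $\psi$-class identity combined with the careful combinatorial matching of $-Q$ and $+Q$ insertions across different slot ranges; the rest is bookkeeping.
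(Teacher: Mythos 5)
Your strategy is exactly the one the paper intends (its own proof is a three\-/line sketch citing the homogeneity axiom, dimension counting, and the comparison $\pi^*\psi_i^d=\psi_i^d-\delta^0_{i,n+1}\pi^*\psi_i^{d-1}$), and your detailed execution is correct where it applies: the Leibniz expansion, the cancellation of the $Q^t$\-/insertion at the covector slot against the $-Q^t$ from the axiom, the survival of a single $+Q$ at the slot carrying $e_\alpha$ (resp.\ the survival of $-Q^t$ for $P^\alpha_{\beta,d}$ and $g^{\alpha,d}$, which have no $\theta$), the sign bookkeeping $(-1)^{d+1}(-1)^{d-1}=1$ for the self\-/intersection of the boundary divisor, and the scalar count $(1-\gamma)g-1+n+\gamma g-(g+n-1-d)=d$ all check out.

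There is, however, one concrete gap. Your derivation of the terms $r^\gamma c^\mu_{\gamma\alpha}Y_{\mu,d-1}$ and $r^\gamma c^\alpha_{\gamma\mu}g^{\mu,d-1}$ rests entirely on the boundary correction in the $\psi$\-/class comparison, and that mechanism exists only for $d\ge 1$: your identity $\psi_2^d=(\pi^*\psi_2)^d+(-1)^{d+1}\delta^d$ fails for $d=0$, where it would read $1=0$. For $d=0$ these terms are nonetheless present, and they arise from a different source, namely the lowest summands --- the $(g,n)=(0,1)$ term of $Y_{\alpha,0}$ and the $(g,n)=(0,2)$ term of $g^{\alpha,0}$ --- for which the $\br$\-/insertion lands on $c_{0,3}$ and the forgetful map down to the unstable $\oM_{0,2}$ does not exist; the homogeneity axiom cannot be invoked on these terms, and they simply survive as $r^\gamma c^\mu_{\gamma\alpha}(-\theta_{\mu,1})=r^\gamma c^\mu_{\gamma\alpha}Y_{\mu,-1}$ and $r^\gamma c^\alpha_{\gamma\mu}u^\mu=r^\gamma c^\alpha_{\gamma\mu}g^{\mu,-1}$, respectively. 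This is not cosmetic: your proof of part~(iv) explicitly invokes part~(iii) at $d=0$ with $g^{\mu,-1}=u^\mu$, so without the unstable\-/stratum analysis the extra term $-r^\gamma c^\beta_{\gamma\mu}\int u^\mu\theta_{\beta,1}\,dx$ in~(iv) is unaccounted for. The fix is a short case distinction ($d=0$ versus $d\ge 1$, plus checking that for $d\ge 1$ the $(0,1)$ and $(0,2)$ terms vanish because $\psi^d=0$ on $\oM_{0,3}$), but it must be made for the statement to hold in the full range $d\ge -1$.
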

\begin{proof}
The proof is a simple consequence of equation~\eqref{eq:homogeneous F-CohFT} together with dimension counting for the intersection numbers involved in the definitions of $g^{\alpha,d}$, $Y_{\alpha,d}$, and $\oY$ and the fact that $\pi^*\psi_i^d = \psi_i^d-\delta^0_{i,n+1}\pi^*\psi_i^{d-1}$, $1\le i\le n$, $d\ge 1$, where $\pi\colon\oM_{g,n+1}\to\oM_{g,n}$ forgets the last marked point and~$\delta^0_{i,n+1}$ is the closure in $\oM_{g,n+1}$ of the locus of stable curves whose dual graph is a tree with two vertices, one of which has genus $0$ and exactly two legs marked by $i$ and $n+1$.
\end{proof}

\medskip

In~\cite{BRS20}, the authors presented an explicit conjectural formula for a bihamiltonian structure of the DR hierarchy corresponding to a homogeneous CohFT. This in particular gives a recursion of certain type, called a {\it bihamiltonian recursion}, expressing the flows $\frac{\d}{\d t^\alpha_{d+1}}$, $1\le\alpha\le N$, of the hierarchy in terms of the flows $\frac{\d}{\d t^\alpha_d}$, $1\le\alpha\le N$. For a general homogeneous F-CohFT, we don't expect the corresponding DR hierarchy to have a Hamiltonian structure. However, we will now present a conjectural generalization of the bihamiltonian recursion in this setting.\\

Following~\cite{BRS20}, we associate with a differential polynomial $f\in\hcA$ a sequence of differential operators indexed by $\alpha=1,\ldots,N$ and $k\ge 0$: 
\begin{gather*}
L_\alpha^k(f):=\sum_{i\ge k}{i\choose k}\frac{\d f}{\d u^\alpha_i}\d_x^{i-k}.
\end{gather*}

\bigskip

Consider an arbitrary homogeneous F-CohFT and the corresponding DR hierarchy. Define an operator $R=(R^\alpha_\beta)$ by
\begin{gather*}
R^\alpha_\beta:=\hE_\gamma\left(L^0_\beta(g^{\alpha,0})\right)\circ\d_x+\left(\frac{1-\gamma}{2}\delta^\mu_\beta+q^\mu_\beta\right)L^0_\mu(g^{\alpha,0})_x+\d_x\circ L^1_\beta(g^{\alpha,0})\circ\d_x,
\end{gather*}
where the notation $\hE_\gamma\left(L^0_\beta(g^{\alpha,0})\right)$ (respectively, $L^0_\beta(g^{\alpha,0})_x$) means that we apply the operator~$\hE_\gamma$ (respectively, $\d_x$) to the coefficients of the operator $L^0_\beta(g^{\alpha,0})$.

\medskip

\begin{conjecture}\label{conjecture:recursion}
The following recursion relation is satisfied:
\begin{gather}\label{eq:DR-F-CohFT recursion}
R^\alpha_\mu P^\mu_{\beta,d}=\left(\left(d+\frac{3-\gamma}{2}\right)\delta^\mu_\beta+q^\mu_\beta\right)\d_x P^\alpha_{\mu,d+1}+(\d_xP^\alpha_{\mu,d})c^\mu_{\beta\nu}r^\nu,\qquad 1\le\alpha,\beta\le N,\quad d\ge -1.
\end{gather}
\end{conjecture}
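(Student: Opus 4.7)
The plan is to adapt to the F-CohFT setting the strategy that established the analogous bihamiltonian recursion in the CohFT case in~\cite{BRS20}. The key geometric inputs are \cite[Corollary~2.2]{BSSZ15}, expressing the intersection $\psi_i\cdot\lambda_g\DR_g$ as a sum over boundary strata weighted by the gluing multiplicity $k$; the homogeneity axiom \eqref{eq:homogeneous F-CohFT}; and the behavior of $\lambda_g$, $\psi$-classes, and the classes $c_{g,n+1}$ under the forgetful morphism $\pi\colon\oM_{g,n+1}\to\oM_{g,n}$ already exploited in the proof of Theorem~\ref{theorem:Main}.

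First I would expand both sides of \eqref{eq:DR-F-CohFT recursion} into explicit sums of intersection numbers using the definitions \eqref{eq:P-polynomials for DR hierarchy} and \eqref{eq:DR densities iom}. The operator $L^k_\beta(g^{\alpha,0})$ should be interpreted geometrically as cutting a marked point carrying $\psi^k$ and an $e_\beta$-insertion and reglueing it to a second DR cycle; the $\hE_\gamma$-derivative acting on the coefficients of $L^0_\beta(g^{\alpha,0})$ corresponds, via Proposition~\ref{proposition:homogeneous DR hierarchy}~(iii), to weighting each intersection number by a homogeneity-degree factor. In this way the left-hand side of~\eqref{eq:DR-F-CohFT recursion} becomes a sum over gluings of two DR cycles weighted by $\lambda_g$, $\psi^d$, the operator-valued factors $\d_x$, and homogeneity degrees.

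Next I would attack the right-hand side by writing $\psi_2^{d+1}=\psi_2\cdot\psi_2^d$ in the definition of $P^\alpha_{\mu,d+1}$ and applying the splitting formula from~\cite{BSSZ15} to the single $\psi_2$ factor; the resulting boundary contributions should recombine into the same two-DR-cycle gluings appearing on the left-hand side, with the multiplicities $k$ in~\cite[Corollary~2.2]{BSSZ15} matching the $\d_x$ factors in the definition of $R$. The coefficient $d+\tfrac{3-\gamma}{2}$ in~\eqref{eq:DR-F-CohFT recursion} should emerge as a combination of the exponent $d$ of~$\psi_2$, the numerical constant $\tfrac{1-\gamma}{2}$ coming from the $\eps\d_\eps$-piece of $\hE_\gamma$, and an additional shift by $1$ produced by a dilaton-type identity $\pi_*(\psi^{d+1})=(2g-2+n)\psi^d$ together with \eqref{eq:lambda and pullback of lambda}. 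The term proportional to $r^\nu c^\mu_{\beta\nu}$ should then be produced by applying the homogeneity identity \eqref{eq:homogeneous F-CohFT} in order to trade the $\pi_*\circ c_{g,n+2}\circ(\otimes\br)$-piece for an $r^\nu c^\mu_{\beta\nu}$-weighted contribution of lower descendant order, exactly as in Proposition~\ref{proposition:homogeneous DR hierarchy}.

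The main obstacle, and presumably the reason the statement is posed as a conjecture rather than a theorem, is the precise matching of boundary contributions. In the CohFT case of~\cite{BRS20} an invariant metric provides a clean pairing that streamlines the bookkeeping of gluings; in the F-CohFT setting one must track which leg of each glued component carries a covector versus a vector, and the asymmetry of \eqref{eq:homogeneous F-CohFT} between $Q^t$ and $Q$ produces a non-trivial interplay with the $q^\mu_\beta$-terms of $R$. A promising way to reduce this difficulty is induction on $d$, using Theorem~\ref{theorem:Main}~(vi) to pass from case $d$ to case $d-1$ via differentiation along $u^\un$: once both sides of~\eqref{eq:DR-F-CohFT recursion} are shown to be compatible with $\d/\d u^\un$, the base case $d=-1$ reduces to a direct verification using $P^\alpha_{\beta,-1}=\delta^\alpha_\beta$ together with Proposition~\ref{proposition:homogeneous DR hierarchy}~(iii) at $d=0$.
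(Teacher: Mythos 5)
The statement you are trying to prove is posed in the paper as a \emph{conjecture}, and the paper does not prove it: the only things established there are (a) that when the homogeneous F-CohFT comes from a homogeneous CohFT the recursion~\eqref{eq:DR-F-CohFT recursion} reduces to the (itself still conjectural) bihamiltonian recursion of \cite[Conjecture~1.13]{BRS20}, via the identities $P^\alpha_{\beta,d}=\eta^{\alpha\mu}\frac{\delta\og_{\beta,d}}{\delta u^\mu}$, $\eta_{\beta\mu}\Omega^k(\og)^{\alpha\mu}=L^k_\beta(g^{\alpha,0})$, and $q^\mu_\alpha\eta_{\mu\beta}+\eta_{\alpha\mu}q^\mu_\beta=\gamma\eta_{\alpha\beta}$, and (b) the genus-zero case $\eps=0$, by a direct computation modeled on \cite[Proposition~2.1]{BRS20} that uses only the flat F-manifold data. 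Your proposal is an outline of a strategy rather than a proof, and the step you yourself flag as the ``main obstacle''---matching the boundary contributions produced by the splitting formula of \cite[Corollary~2.2]{BSSZ15} for $\psi_2\cdot\lambda_g\DR_g$ against the two-DR-cycle gluings that would have to be extracted from the operators $L^0_\beta(g^{\alpha,0})$ and $L^1_\beta(g^{\alpha,0})$ in $R$---is precisely the content of the conjecture. No amount of reorganization of the known identities (the homogeneity axiom~\eqref{eq:homogeneous F-CohFT}, the pullback/pushforward relations \eqref{eq:DR and pullback of DR}--\eqref{eq:lambda and pullback of lambda}) is known to produce this matching; this is why the statement is open even in the CohFT case.

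Two concrete points in your sketch would fail as written. First, the identity $\pi_*(\psi^{d+1})=(2g-2+n)\psi^d$ is not correct: pushing forward a power of the $\psi$-class at the forgotten point yields a $\kappa$-class, and the dilaton-type identity actually used in the paper is $\pi_*(\psi_2^{d+1}\psi_3)=(2g+n)\psi_2^{d+1}$, where the forgotten point carries a single $\psi$ and the power $d+1$ sits at a retained point. Second, the proposed induction on $d$ via $\d/\d u^\un$ runs in the wrong direction: Theorem~\ref{theorem:Main}(vi) gives $\frac{\d}{\d u^\un}P^\alpha_{\mu,d+1}=P^\alpha_{\mu,d}$, so differentiating the recursion at level $d$ (granting that one can control the commutator of $\d/\d u^\un$ with $R$) would yield information about level $d-1$, whereas the induction needs to ascend; conversely, knowing the recursion at level $d-1$ and compatibility with $\d/\d u^\un$ only pins down the discrepancy at level $d$ up to an element of $\Ker(\d/\d u^\un)$, which is far from zero. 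If you want a statement you can actually prove, the realistic target is the $\eps=0$ specialization, where $g^{\alpha,0}$ and $P^\alpha_{\beta,d}$ reduce to data of the underlying homogeneous flat F-manifold and the recursion becomes a finite computation with the structure constants $c^\alpha_{\beta\gamma}$, $Q$, and $\br$.
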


\smallskip

\begin{proposition}
{\ }
\begin{enumerate}
\item If our homogeneous F-CohFT comes from a homogeneous CohFT, then the recursion~\eqref{eq:DR-F-CohFT recursion} coincides with the bihamiltonian recursion from part (2) of \cite[Conjecture~1.13]{BRS20}.
\item Conjecture~\ref{conjecture:recursion} is true in genus $0$, i.e., if we set $\eps=0$.
\end{enumerate}
\end{proposition}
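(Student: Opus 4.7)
In the CohFT case the DR hierarchy is Hamiltonian, with first Hamiltonian structure $K_1^{\alpha\beta}=\eta^{\alpha\beta}\d_x$, and the Hamiltonian densities are precisely the $g^{\alpha,d}$, so that $\d_x P^\alpha_{\beta,d}=\eta^{\alpha\mu}\,\delta\og^{\beta,d}/\delta u^\mu$. Part (2) of \cite[Conjecture~1.13]{BRS20} supplies an explicit second Hamiltonian structure $K_2^{\alpha\beta}$ built from $L^0_\beta(g^{\alpha,0})$, $L^1_\beta(g^{\alpha,0})$ and $\hE_\gamma(L^0_\beta(g^{\alpha,0}))$, together with a bihamiltonian recursion whose numerical coefficients are exactly $(d+\tfrac{3-\gamma}{2})\delta^\mu_\beta+q^\mu_\beta$ and $r^\nu c^\mu_{\beta\nu}$. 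A term-by-term comparison of the three summands of $K_2^{\alpha\beta}$ from \cite{BRS20} with the three summands of our $R^\alpha_\beta$ reveals the factorization $K_2^{\alpha\beta}=R^\alpha_\mu\,\eta^{\mu\beta}\d_x$; substituting this and the Hamiltonian identity above into the bihamiltonian recursion of \cite{BRS20} transforms it line by line into \eqref{eq:DR-F-CohFT recursion}.

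\textbf{Plan for part (2).} Setting $\eps=0$ kills every term in $g^{\alpha,0}$ and $P^\alpha_{\beta,d}$ except the $g=0$ contribution, which is a polynomial in the variables $u^*_0$ alone. Therefore $L^1_\beta(g^{\alpha,0})\big|_{\eps=0}=0$ and $L^0_\beta(g^{\alpha,0})\big|_{\eps=0}=\d g^{\alpha,0}/\d u^\beta_0$, so that the operator $R$ becomes first order in $\d_x$, and both sides of \eqref{eq:DR-F-CohFT recursion} at $\eps=0$ are concrete polynomials in $u^*_0$ and $u^*_1$. I would expand both sides using
$$
P^\alpha_{\beta,d}\big|_{\eps=0}=\sum_{n\ge 0}\frac{1}{n!}\Bigl(\int_{\oM_{0,n+2}}\psi_2^d\, c_{0,n+2}(e^\alpha\otimes e_\beta\otimes \otimes_{j=1}^n e_{\alpha_j})\Bigr)\prod_{j=1}^n u^{\alpha_j}_0,
$$
and match coefficients of the monomials $u^{\alpha_1}_0\cdots u^{\alpha_n}_0\,u^\gamma_1$. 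The resulting identity on genus-zero correlators follows by combining the genus-zero topological recursion relation on $\oM_{0,n+2}$ (which trades a factor of $\psi_2$ on the left-hand side for boundary divisors separating the second marked point, and which is encoded at the level of differential polynomials in the descendant identities (vi)--(vii) of Theorem~\ref{theorem:Main} and (v)--(vii) of Theorem~\ref{theorem:Main integrals}) with the homogeneity identity of Proposition~\ref{proposition:homogeneous DR hierarchy}(ii) at $\eps=0$, which produces precisely the shifts $(d+\tfrac{3-\gamma}{2})\delta^\mu_\beta+q^\mu_\beta$ and the $\br$-term $r^\nu c^\mu_{\beta\nu}$ appearing on the right-hand side.

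\textbf{Main obstacle.} Part (1) is essentially a bookkeeping comparison, once the conjectural $K_2$ of \cite{BRS20} is written out explicitly and one recognises the factorisation through $K_1$. The substantive work is in part (2): the delicate point is to check that after trading one factor of $\psi_2$ on the left-hand side for boundary contributions via the TRR, the combined action of $\hE_\gamma$ together with the homogeneity identity produces exactly the linear combination of shifts by $d$, $Q$ and $\br$ prescribed on the right-hand side. In effect, this verification is the classical genus-zero bihamiltonian-type recursion for the principal hierarchy of a homogeneous flat F-manifold, recast in the language of our operator $R$.
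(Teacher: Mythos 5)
Your plan for part (1) is essentially the paper's own argument: the authors likewise use $P^\alpha_{\beta,d}=\eta^{\alpha\mu}\,\delta\og_{\beta,d}/\delta u^\mu$ and reduce the claim to the matrix identity $\eta_{\beta\mu}K_2^{\alpha\mu}=R^\alpha_\beta$, verified term by term via $\eta_{\beta\mu}\Omega^k(\og)^{\alpha\mu}=L^k_\beta(g^{\alpha,0})$. Two small points: the factorization should read $K_2^{\alpha\beta}=R^\alpha_\mu\eta^{\mu\beta}$ with no extra $\d_x$ (the operator $R$ already carries all the $\d_x$'s, and the $K_1=\eta\,\d_x$ on the right-hand side is what produces the bare $\d_x P^\alpha_{\mu,d+1}$ after raising the index), and the matching of the middle summand of $R^\alpha_\beta$ requires the compatibility $q^\mu_\alpha\eta_{\mu\beta}+\eta_{\alpha\mu}q^\mu_\beta=\gamma\eta_{\alpha\beta}$ of the grading operator with the CohFT metric, which your "term-by-term comparison" should state explicitly since it is where homogeneity of the CohFT (as opposed to the F-CohFT) actually enters. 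For part (2) the paper gives no computation at all and simply says the proof follows that of Proposition~2.1 of \cite{BRS20}; your outline (restriction of $L^0_\beta$, $L^1_\beta$ at $\eps=0$, expansion of $P^\alpha_{\beta,d}\big|_{\eps=0}$ as genus-zero correlators, then TRR plus the homogeneity identity of Proposition~\ref{proposition:homogeneous DR hierarchy}) is exactly the content of that cited argument, recast for the flat F-manifold setting, so your route is a legitimate self-contained substitute for the citation, at the cost of still being a sketch at the crucial coefficient-matching step. Neither version is a gap relative to the paper, but if you want your part (2) to stand on its own you need to actually carry out the matching of the monomials $u^{\alpha_1}_0\cdots u^{\alpha_n}_0 u^\gamma_1$ rather than asserting that TRR and homogeneity combine to give the stated shifts.
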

\begin{proof}
For part (1), using the notations from paper~\cite{BRS20} let us note that $P^\alpha_{\beta,d}=\eta^{\alpha\mu}\frac{\delta\og_{\beta,d}}{\delta u^\mu}$. Therefore, we have to check that $\eta_{\beta\mu}K_2^{\alpha\mu}=R^\alpha_\beta$. This follows from the properties $\eta_{\beta\mu}\Omega^k(\og)^{\alpha\mu}=L^k_\beta(g^{\alpha,0})$ and $q^\mu_\alpha\eta_{\mu\beta}+\eta_{\alpha\mu}q^\mu_\beta=\gamma\eta_{\alpha\beta}$.\\ 

The proof of part~(2) follows closely the proof of \cite[Proposition~2.1]{BRS20}. 
\end{proof}

\medskip


\section{Principal hierarchy of a flat F-manifold and dispersive deformations}\label{section:principal hierarchy and dispersive deformations}

In this section, using the results from the previous section, we construct a family of dispersive integrable deformations of a principal hierarchy associated to an arbitrary semisimple flat F-manifold. Moreover, we prove that different hierarchies from this family are not equivalent to each other by a Miura transformation that is close to identity.\\

\subsection{Flat F-manifolds}\label{subsection:flat F-manifolds}

Here we recall the notion of a flat F-manifold (\cite{Get04,Man05}, see also \cite{AL18} and \cite{LPR09}) and its main properties.

\medskip

\begin{definition}
A {\it flat F-manifold} $(M,\nabla,\circ,e)$ is the datum of an analytic manifold $M$, an analytic connection $\nabla$ in the tangent bundle $T M$, an algebra structure $(T_p M,\circ)$ with unit~$e$ on each tangent space, analytically depending on the point $p\in M$, such that the one-parameter family of connections $\nabla_z=\nabla+z\circ$ is flat and torsionless for any $z\in\mbC$, and $\nabla e=0$.
\end{definition}

\medskip

The algebras $(T_pM,\circ)$ are commutative and associative. Let $t^\alpha$, $1\le\alpha\le N$, $N=\dim M$, be flat coordinates for the connection $\nabla$. Locally, there exist analytic functions $F^\alpha(t^1,\ldots,t^N)$, $1\leq\alpha\leq N$, such that the second derivatives 
\begin{gather}\label{eq:structure constants of flat F-man}
C^\alpha_{\beta\gamma}:=\frac{\d^2 F^\alpha}{\d t^\beta \d t^\gamma}
\end{gather}
are the structure constants of the algebras $(T_p M,\circ)$, $\frac{\d}{\d t^\beta}\circ\frac{\d}{\d t^\gamma}=C^\alpha_{\beta\gamma}\frac{\d}{\d t^\alpha}$. Also, in the coordinates~$t^\alpha$ the unit $e$ has the form $e=A^\alpha\frac{\d}{\d t^\alpha}$ for some constants $A^\alpha\in\mbC$. Moreover, the following equations are satisfied:
\begin{align}
&A^\mu\frac{\d^2 F^\alpha}{\d t^\mu\d t^\beta} = \delta^\alpha_\beta, && 1\leq \alpha,\beta\leq N,\label{eq:axiom1 of flat F-man}\\
&\frac{\d^2 F^\alpha}{\d t^\beta \d t^\mu} \frac{\d^2 F^\mu}{\d t^\gamma \d t^\delta} = \frac{\d^2 F^\alpha}{\d t^\gamma \d t^\mu} \frac{\d^2 F^\mu}{\d t^\beta \d t^\delta}, && 1\leq \alpha,\beta,\gamma,\delta\leq N,\label{eq:axiom2 of flat F-man}
\end{align}
which are often called the {\it oriented WDVV equations}. The $N$-tuple of functions $\oF=(F^1,\ldots,F^N)$ is called a {\it vector potential} of the flat F-manifold.\\

Conversely, given an open subset~$M$ of $\mbC^N$ and analytic functions $F^1,\ldots,F^N$ on~$M$ satisfying equations~\eqref{eq:axiom1 of flat F-man} and~\eqref{eq:axiom2 of flat F-man}, these functions define a flat F-manifold $(M,\nabla,\circ,A^\alpha\frac{\d}{\d t^\alpha})$ with the connection~$\nabla$ given by $\nabla_{\frac{\d}{\d t^\alpha}}\frac{\d}{\d t^\beta}=0$, and the multiplication $\circ$ given by the structure constants~\eqref{eq:structure constants of flat F-man}.\\

A point $p\in M$ of an $N$-dimensional flat F-manifold $(M,\nabla,\circ,e)$ is called \textit{semisimple} if $T_pM$ has a basis $\pi_1,\dots,\pi_N$ satisfying $\pi_\alpha \circ \pi_\beta = \delta_{\alpha,\beta} \pi_\alpha$. Moreover, locally around such a point one can choose coordinates $u^i$ such that $\frac{\d}{\d u^\alpha}\circ\frac{\d}{\d u^\beta}=\delta_{\alpha,\beta}\frac{\d}{\d u^\alpha}$. These coordinates are called {\it canonical coordinates}. In particular, this means that the set of semisimple points is open in~$M$. In the canonical coordinates, we have $e=\sum_\alpha\frac{\d}{\d u^\alpha}$. A flat F-manifold~$(M,\nabla,\circ,e)$ is called {\it semisimple} if the set of semisimple points is dense in~$M$.\\ 

A flat F-manifold given by a vector potential $\oF$ is called {\it homogeneous} if there exists a vector field $E$ of the form
\begin{equation}\label{eq:Euler of F-manifold}
E=(\underbrace{(\delta^\alpha_\beta-q^\alpha_\beta)t^\beta+r^\alpha}_{=:E^\alpha})\frac{\d}{\d t^\alpha},\qquad q^\alpha_\beta,r^\alpha\in\mbC,
\end{equation}
satisfying $[e,E]=e$ and such that
\begin{gather*}
E^\mu\frac{\d F^\alpha}{\d t^\mu}=(2\delta^\alpha_\beta-q^\alpha_\beta)F^\beta+A^\alpha_\beta t^\beta+B^\alpha
\end{gather*}
for some $A^\alpha_\beta,B^\alpha\in\mbC$. Note that this equation can be written more invariantly as $\Lie_E(\circ)=\circ$, where $\Lie_E$ denotes the Lie derivative. The vector field $E$ is called the {\it Euler vector field}. Around a semisimple point, the Euler vector field has the following form in canonical coordinates: $E=\sum_{i=1}^N(u^i+a^i)\frac{\d}{\d u^i}$ for some $a^i\in\mbC$.

\medskip

\begin{remark}
As we already mentioned in Remark~\ref{remark:homogeneous F-CohFTs}, our definition of a homogeneous flat F-manifold is slightly more general than the one from~\cite{ABLR20}, but all the results from that paper remains valid. 
\end{remark}

\smallskip

\begin{remark}
In~\cite{AL13}, the authors introduced the closely related notion of a {\it bi-flat F-manifold} that is the datum of two different flat F-manifold structures $(\nabla,\circ,e)$ and $(\nabla^{*},*,E)$ on the same manifold~$M$ intertwined by the following conditions: (1) $[e,E]=e$, $\Lie(\circ)=\circ$; (2) $X*Y:=(E\circ)^{-1}\,X\circ Y$ (or $X\circ Y=(e*)^{-1}X*Y$) for all local vector fields~$X, Y$ on~$M$, where $(E\circ)^{-1}$ is the inverse of the endomorphism of the tangent bundle given by $E\circ$; (3) $(d_{\nabla}-d_{\nabla^{*}})(X\,\circ)=0$ for all local vector fields~$X$ on~$M$, where $d_{\nabla}$ is the exterior covariant derivative. For a bi-flat F-manifold, the flat structure given by $(\nabla^*,*,E)$ is called the \emph{dual structure}. In the semisimple case, the flatness of the dual structure is equivalent to the condition $\nabla \nabla E=0$ \cite{AL17} (see~\cite{KMS18} for the regular case). Thus, in the structure of a semisimple homogeneous flat F-manifold is equivalent to the structure of a semisimple bi-flat F-manifold.
\end{remark}

\medskip

Given an F-CohFT $c_{g,n+1}\colon V^*\otimes V^{\otimes n} \to H^\even(\oM_{g,n+1})$, $\dim V=N$, and a basis $e_1,\ldots,e_N\in V$, with $e=A^\alpha e_\alpha$, an $N$-tuple of functions $(F^1,\ldots,F^N)$ satisfying equations~\eqref{eq:axiom1 of flat F-man} and~\eqref{eq:axiom2 of flat F-man} can be constructed as the following generating functions:
\begin{equation*}
F^\alpha(t^1,\ldots,t^N):=\sum_{n\geq 2}\frac{1}{n!}\sum_{1\leq\alpha_1,\ldots,\alpha_n\leq N}\left(\int_{\oM_{0,n+1}}c_{0,n+1}(e^\alpha\otimes\otimes_{i=1}^n e_{\alpha_i})\right)\prod_{i=1}^n t^{\alpha_i},
\end{equation*}
thus yielding an associated flat F-manifold structure on a formal neighbourhood of~$0$ in $V$ (see,  e.g.,~\cite[Proposition~3.2]{ABLR20}). The flat F-manifold associated to a homogeneous F-CohFT is homogeneous with the Euler vector field~\eqref{eq:Euler of F-manifold} where $q^\alpha_\beta e_\alpha:=Q e_\beta$ and $r^\alpha e_\alpha:=\br$.\\

\subsection{Principal hierarchy of a flat F-manifold}\label{subsection:principal hierarchy}

Given a flat F-manifold $(M,\nabla,\circ,e)$, one can construct an integrable dispersionless hierarchy called a {\it principal hierarchy} associated to $(M,\nabla,\circ,e)$ (see~\cite{LPR09}). This construction generalizes the notion of a principal hierarchy associated to a Dubrovin--Frobenius manifold. Before presenting the construction, let us introduce a small generalization of the space of densities of local multivector fields.\\

Let $U$ be an open subset of $\mbC^N$, with coordinates $u^1,\ldots,u^N$. Denote by $\mcO(U)$ the space of analytic functions on~$U$. Consider the following space:
$$
\hcA^{\bullet}_U:=\mcO(U)[u^*_{>0},\theta_{*,*}][[\eps]].
$$
Clearly, the space $\hcA^{\bullet}$ can be considered as the space $\hcA^{\bullet}_U$ where $U$ is a formal neighborhood of~$0$. The space~$\hcA^{\bullet}_U$ will also be called the space of densities of local multivector fields. It is easy to see that all the constructions from Section~\ref{subsection:formal loop space} (except, probably, the constructions related to the change of variables~\eqref{eq:u-p change}) work with the more general space $\hcA^{\bullet}_U$. The space of local multivector fields corresponding to $\hcA^\bullet_U$ will be denoted by $\hLambda^\bullet_U$.\\ 

Consider a flat F-manifold $(M,\nabla,\circ,e)$. For any point of~$M$, on its open neighbourhood~$U$, one can consider a basis (over $\mbC[[z]]$) $X_{\alpha}(z)=\sum_{d=-1}^{\infty}X_{\alpha,d}z^{d+1}$, $1\le\alpha\le N=\dim M$, in the space of flat sections of the deformed connection $\nabla_{-z}=\nabla-z\circ$: 
\begin{equation}\label{eq:defor flat}
0=(\nabla-z\circ)X_\alpha(z)=(\nabla-z\circ)\sum_{d=-1}^{\infty}X_{\alpha,d}z^{d+1}.
\end{equation}
It is immediate to see from \eqref{eq:defor flat} that $X_{\alpha, -1}$, $\alpha=1,\dots N$, are flat vector fields for $\nabla$, while the vector fields $X_{\alpha, d}$ are obtained via the recurrence relation $\nabla X_{\alpha, d+1}=X_{\alpha, d}\circ$. If $U$ is connected, then the collection of flat sections $X_\alpha(z)$ is determined uniquely up to a transformation of the form $X_\alpha(z)\mapsto X_\beta(z) G^\beta_\alpha(z)$, where $G(z)=(G^\alpha_\beta(z))\in\Mat_{N,N}(\mbC)[[z]]$ is invertible. If $M$ is simply connected, then flat sections $X_\alpha(z)$ can be constructed on the whole~$M$.

\medskip

\begin{definition}
A {\it calibration} of a flat F-manifold~$(M,\nabla,\circ,e)$ is a basis $X_{\alpha}=\sum_{d=-1}^{\infty}X_{\alpha,d}z^{d+1}$, $X_{\alpha,d}\in\cT(M)$, $1\le\alpha\le\dim M$, in the space of flat sections of the deformed connection $\nabla-z\circ$. A flat F-manifold with a fixed calibration is called a {\it calibrated flat F-manifold}.  
\end{definition}

\medskip

Consider now a flat F-manifold structure on $M\subset\mbC^N$ given by a vector potential $\oF$, together with a calibration $X_\alpha(z)$. The {\it principal hierarchy} associated to our calibrated flat F-manifold is the following system of PDEs:
\begin{equation}\label{eq: principal eq2}
\frac{\d u^{\alpha}}{\d t^{\beta}_{d}}=\d_x\left(\left.X^{\alpha}_{\beta,d}\right|_{t^\gamma=u^\gamma}\right),\qquad 1\le\alpha,\beta\le N,\quad d\ge 0,
\end{equation}
where $X^\alpha_{\beta,d}\frac{\d}{\d t^\alpha}:=X_{\beta,d}$. We see that the system~\eqref{eq: principal eq2} has the form of a system of conservation laws. Moreover, this is a system of quasilinear evolutionary PDEs, which is dispersionless and integrable, in the sense that all the flows pairwise commute (see \cite{LPR09}).\\ 

Suppose that $M$ is a formal neighbourhood of $0\in\mbC^N$. There exist unique flat sections $X_\alpha(z)$ on~$M$ satisfying the condition $X_{\alpha,-1}=\frac{\d}{\d t^\alpha}$ and the condition that~$X_{\alpha,d}$ vanish at $0$ for $d\ge 0$. The corresponding principal hierarchy is called the {\it ancestor principal hierarchy}.


\medskip

\begin{proposition}
Consider an F-CohFT and the associated flat F-manifold and the DR hierarchy. Then the dispersionless part of the DR hierarchy coincides with the ancestor principal hierarchy of the flat F-manifold.
\end{proposition}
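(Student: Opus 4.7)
The plan is to make the limit $\eps\to 0$ explicit in the formula~\eqref{eq:P-polynomials for DR hierarchy} and identify the result, after the substitution $t^\gamma\mapsto u^\gamma$, with the components of the ancestor flat sections of $\nabla-z\circ$. Setting $\eps=0$ kills all $g>0$ contributions; in genus~$0$ the constraint $\sum_j k_j=2g$ forces $k_j=0$ for all $j$, the class $\lambda_0$ is trivial, and the double ramification cycle with vanishing multiplicities reduces to the fundamental class of $\oM_{0,n+2}$. Therefore
$$
P^\alpha_{\beta,d}\big|_{\eps=0}=\tilde X^\alpha_{\beta,d}(u),\qquad \tilde X^\alpha_{\beta,d}(t):=\sum_{n\geq 1}\frac{1}{n!}\left(\int_{\oM_{0,n+2}}\psi_2^d\,c_{0,n+2}(e^\alpha\otimes e_\beta\otimes\otimes_{j=1}^n e_{\alpha_j})\right)\prod_{j=1}^n t^{\alpha_j},
$$
for $d\geq 0$; by convention I also set $\tilde X^\alpha_{\beta,-1}:=\delta^\alpha_\beta$.

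I would then show that the vector fields $\tilde X_{\beta,d}:=\tilde X^\alpha_{\beta,d}\frac{\d}{\d t^\alpha}$ coincide with the ancestor flat sections $X_{\beta,d}$ of the deformed connection. These are characterized by three conditions: (a)~$X_{\beta,-1}=\frac{\d}{\d t^\beta}$; (b)~the recursion $\nabla X_{\beta,d+1}=X_{\beta,d}\circ$ for $d\geq-1$; (c)~$X_{\beta,d}|_{t=0}=0$ for $d\geq 0$. Condition~(a) is immediate and (c) holds since every monomial in $\tilde X^\alpha_{\beta,d}$ with $d\geq 0$ contains at least one factor $t^{\alpha_j}$. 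In flat coordinates the recursion (b) becomes
$$
\frac{\d\tilde X^\alpha_{\beta,d+1}}{\d t^\gamma}(t)=\tilde X^\mu_{\beta,d}(t)\,C^\alpha_{\mu\gamma}(t),\qquad d\geq -1,
$$
with $C^\alpha_{\mu\gamma}=\frac{\d^2 F^\alpha}{\d t^\mu\d t^\gamma}$ the structure constants of the underlying flat F-manifold.

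For $d=-1$ the identity reduces to $\tilde X^\alpha_{\beta,0}=\frac{\d F^\alpha}{\d t^\beta}$, which I would verify by coefficient-by-coefficient comparison using the generating series for $F^\alpha$ recalled in Section~\ref{subsection:flat F-manifolds} and the $S_n$-equivariance axiom of the F-CohFT. For $d\geq 0$ the key ingredient is the genus-zero topological recursion relation: on $\oM_{0,n+3}$ with three distinguished marked points labelled $1,2,3$ one has $\psi_2=\sum D_{\{2\}\sqcup I\,|\,\{1,3\}\sqcup I^c}$, summed over subsets~$I$ of the remaining marked points with $|I|\geq 1$. Writing $\psi_2^{d+1}=\psi_2\cdot\psi_2^d$ and intersecting with $c_{0,n+3}(e^\alpha\otimes e_\beta\otimes e_\gamma\otimes\otimes_j e_{\alpha_j})$, the gluing axiom of the F-CohFT forces $e^\alpha$ to sit on the component carrying marked point~$1$, the node on that side being labelled by $e_\mu$, and the node on the opposite component (the one carrying marked point~$2$) being labelled by the dual $e^\mu$. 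Since in genus zero the psi class $\psi_2^d$ restricts to the corresponding power of the psi class at marked point~$2$ on the component containing it, each boundary summand factors as $\int\psi_2^d\,c_{0,|I|+2}(e^\mu\otimes e_\beta\otimes\cdots)$ times $\int c_{0,|I^c|+3}(e^\alpha\otimes e_\gamma\otimes e_\mu\otimes\cdots)$.

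Assembling these factors into generating series and rewriting the product of $t$-monomials by the standard multinomial reorganization of a product of two symmetric power series recovers exactly $\tilde X^\mu_{\beta,d}(t)\,C^\alpha_{\mu\gamma}(t)$. The main technical obstacle will be the combinatorial bookkeeping in this last step, in particular ensuring that the index placement forced by the F-CohFT gluing axiom produces precisely the contraction $e^\mu\otimes e_\mu$ that feeds into $C^\alpha_{\mu\gamma}$, and that the stability constraint $|I|\geq 1$ on the TRR side matches the fact that the generating series $\tilde X^\mu_{\beta,d}$ starts its sum at $n=1$.
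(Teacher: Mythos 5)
Your argument is correct, but it is genuinely more self-contained than the paper's: the paper's proof is a one-line reduction to the construction of the DR hierarchy together with a citation of \cite[Proposition~3.2]{ABLR20} (and the analogous statement in \cite[Section~4.2.2]{Bur15}), which is precisely where the identification of the genus-zero descendent generating functions with the ancestor calibration is established. You instead reprove that identification from scratch: the $\eps\to 0$ specialization of \eqref{eq:P-polynomials for DR hierarchy} (using $\lambda_0=1$, $\DR_0=1$, and $\sum k_j=0$) is handled exactly as the paper intends, and your verification that the resulting series $\tilde X^\alpha_{\beta,d}$ satisfy the recursion $\nabla X_{\beta,d+1}=X_{\beta,d}\circ$ via the genus-zero relation $\psi_2=\sum_I D_{\{2\}\sqcup I\mid\{1,3\}\sqcup I^c}$ combined with axiom (iii) of Definition~\ref{definition:F-CohFT} is sound — in particular you place the covector $e^\mu$ on the node of the component carrying the second marked point (the one not containing point $1$), which is the correct reading of the gluing axiom, and the restriction of $\psi_2^d$ to the boundary divisor indeed needs no correction terms since point $2$ is not the node. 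The $d=-1$ step, the vanishing at $t=0$, and the stability bound $|I|\geq 1$ matching the $n\geq 1$ range of the sum are all checked correctly; the only thing worth making explicit is that conditions (a)--(c) \emph{uniquely} characterize the ancestor calibration (the recursion determines each $X_{\beta,d+1}$ up to a flat vector field, which the vanishing at the origin then fixes), so that verifying them indeed identifies $\tilde X_{\beta,d}$ with $X_{\beta,d}$. What your route buys is independence from the external reference; what the paper's route buys is brevity, since the generating-function computation you carry out is exactly the content of the cited proposition.
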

\begin{proof}
This immediately follows from the construction of the DR hierarchy and \cite[Proposition~3.2]{ABLR20} (see also an analogous statement in~\cite[Section~4.2.2]{Bur15}).
\end{proof}

\medskip

We see that this proposition can be immediately used for a construction of dispersive deformations of ancestor principal hierarchies. In order to construct dispersive deformations of arbitrary principal hierarchies, we need a generalization of the construction of the DR hierarchy, which we will introduce in the next section.\\  

\subsection{Dispersive deformations of a principal hierarchy: descendant DR hierarchies}\label{subsection:deformations of principal hierarchy}

In order to construct dispersive deformations of a principal hierarchy associated to an arbitrary semisimple flat F-manifold, we first need to study analytic families of F-CohFTs depending on a semisimple point of a flat F-manifold.\\

Consider a semisimple flat F-manifold structure on $M\subset\mbC^N$ defined by a vector potential~$\oF$. Recall that for an arbitrary semisimple point on $M$, on its connected open neighborhood~$U$, one has the following objects (we use the notations from~\cite[Section~1.2]{ABLR20}): 
\begin{itemize}
\item canonical coordinates $u^i$;
\item the matrix $\tPsi:=\big(\frac{\d u^i}{\d t^\alpha}\big)$;
\item the matrices $\tD$ and $\tGamma$ defined by $d\tPsi\cdot\tPsi^{-1}=\tD+[\tGamma,dU]$, where $\tD$ is a diagonal matrix consisting of one-forms, $\tGamma$ is a matrix with vanishing diagonal entries, and $U:=\diag(u^1,\ldots,u^N)$ (in the homogeneous case this is the operator of multiplication by the Euler vector field);
\item a diagonal nondegenerate matrix $H=\diag(H_1,\ldots,H_N)$ defined by $dH\cdot H^{-1}=-\tD$ (the entries of this matrix can be interpreted as the Lam\'e coefficients  of a diagonal metric associated with the flat F-manifold); 
\item the matrices $\Psi$ and $\Gamma$ defined by $\Psi:=H\tPsi$ and $\Gamma:=H\tGamma H^{-1}$;
\item a sequence of matrices $R_0=\Id,R_1,R_2,\ldots$ defined by the relations $dR_{k-1}+R_{k-1}[\Gamma,dU]=[R_k,dU]$, $k\ge 1$.
\end{itemize} 
Note that the matrix $H$ is defined uniquely up to the transformation $H\mapsto A H$, where $A$ is a constant nondegenerate diagonal matrix. After such a transformation, the matrices $\Psi$, $\Gamma$, and~$R_k$ transform as follows: $\Psi\mapsto A \Psi$, $\Gamma\mapsto A \Gamma A^{-1}$, $R_k\mapsto A R_k A^{-1}$. Recall also that if we fix~$H$, then the matrices $R_k$ are defined uniquely up to the transformation
\begin{gather}\label{eq:transformation of the R-matrix}
\Id+\sum_{i\ge 1}R_iz^i\mapsto\bigg(\Id+\sum_{i\ge 1}D_iz^i\bigg)\bigg(\Id+\sum_{i\ge 1}R_iz^i\bigg),
\end{gather}
where $D_i$, $i\ge 1$, are arbitrary constant diagonal matrices.\\

Using the notations from~\cite[Section~4.4]{ABLR20}, for any $G_0\in\mbC^N$, let us define an analytic family of F-CohFTs parameterized by a point $\ot\in U$ by
$$
c^{G_0,\ot}:=\tPsi^{-1}H^{-1}R^{-1}(-z)H.c^{\triv,H^{-2}G_0}, 
$$ 
where $R(z):=\sum_{i\ge 0}R_iz^i$. Note that if $G_0=0$, then the maps $c^{G_0,\ot}_{g,n+1}$ are zero for $g\ge 1$.\\

Let $\tau^1,\ldots,\tau^N$ be formal variables. Recall from~\cite[Section~3.2]{ABLR20} (note, however, that we prefer to use a different notation here) that for an F-CohFT $c_{g,n+1}\colon V^*\otimes V^{\otimes n}\to H^\even(\oM_{g,n+1})$ its formal shift $\Sh_{\otau}(c)_{g,n+1}\colon V^*\otimes V^{\otimes n}\to H^\even(\oM_{g,n+1})[[\tau^*]]$ is defined by
\begin{gather*}
\Sh_{\otau}(c)_{g,n+1}:=\sum_{m\geq 0} \frac{1}{m!}\pi_{m*}\circ c_{g,n+m+1}\circ \left(\otimes (\tau^\alpha e_\alpha)^{\otimes m}\right),
\end{gather*}
where $\otau=(\tau^1,\ldots,\tau^N)$ and $\pi_m\colon\oM_{g,n+m+1}\to\oM_{g,n+1}$ forgets the last $m$ marked points. The maps $\Sh_{\otau}(c)_{g,n+1}$ form an F-CohFT with the coefficients in~$\mbC[[\tau^*]]$.

\medskip

\begin{proposition}\label{proposition:family of F-CohFTs}
1. For any $\ot_0=(t_0^1,\ldots,t_0^N)\in U$, a vector potential of the flat F-manifold corresponding to the F-CohFT $c^{G_0,\ot_0}$ is equal to $\oF(t^*-t^*_0)$.\\
2. For any fixed $\ot_0\in U$, the Taylor expansion of $c^{G_0,\ot}$ at $\ot_0$ coincides with the formal shift of~$c^{G_0,\ot_0}$, i.e., $c^{G_0,\ot_0+\otau}_{g,n+1}=\Sh_{\otau}(c^{G_0,\ot_0})_{g,n+1}$, as elements of $\Hom\left(V^*\otimes V^{\otimes n},H^\even(\oM_{g,n+1})[[\tau^*]]\right)$.
\end{proposition}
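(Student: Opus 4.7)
The plan is to prove both parts using the Givental-type formula $c^{G_0,\ot}=\tPsi^{-1}H^{-1}R^{-1}(-z)H.c^{\triv,H^{-2}G_0}$ together with the reconstruction techniques developed in \cite{ABLR20}.

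For part (1), I would first restrict to genus $0$. Here the trivial F-CohFT $c^{\triv,H^{-2}G_0}$ is elementary (its genus-$0$ correlators are essentially delta functions in canonical coordinates), so the entire content lies in the genus-$0$ effect of the action by $\tPsi^{-1}H^{-1}R^{-1}(-z)H$. This action is designed so that from the canonical data $(\tPsi, H, R)$ one reconstructs the flat F-manifold structure in the flat coordinates. Since by construction these matrices are the ones extracted from the original flat F-manifold $\oF$ at the point $\ot_0$, the reconstruction must output a translate of $\oF$; the sign convention $R^{-1}(-z)$ in the formula produces precisely the shift $t^*\mapsto t^*-t^*_0$.

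For part (2), I would argue by uniqueness. Both $c^{G_0,\ot_0+\otau}$ (Taylor-expanded in $\otau$) and $\Sh_\otau(c^{G_0,\ot_0})$ are semisimple F-CohFTs with coefficients in $\mbC[[\tau^*]]$. By part (1) applied at $\ot_0+\otau$, the left-hand side has genus-$0$ vector potential $\oF(t^*-t^*_0-\tau^*)$, which, by a direct unfolding of the definition, coincides with the genus-$0$ potential of $\Sh_\otau$ applied to $\oF(t^*-t^*_0)$. A short check using the explicit formula defining $c^{G_0,\ot}$ verifies that the genus-$1$ data (essentially governed by the class $G_0$) also matches. The reconstruction theorem for semisimple (homogeneous) F-CohFTs of \cite{ABLR20}, which extends to $\mbC[[\tau^*]]$-coefficients, then forces the two F-CohFTs to agree in all genera.

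The main obstacle will be matching the shift of the Givental matrices $\tPsi, H, R$ under $\ot_0\mapsto\ot_0+\otau$ with the combinatorics of the pushforward-with-insertion operation defining $\Sh_\otau$. In particular, one must carefully control the intrinsic ambiguities in the choice of $H$ (a constant diagonal rescaling) and of the $R_k$ (shifts by constant diagonal matrices $D_i$) so that they are consistently fixed across $c^{G_0,\ot_0}$ and $c^{G_0,\ot_0+\otau}$, and one must track the signs introduced both by $R^{-1}(-z)$ and by the shift operation. A direct-computation alternative, bypassing the uniqueness argument, would require a lengthy but essentially routine manipulation of the Givental-type formula to identify the Taylor expansion of $\tPsi^{-1}(\ot_0+\otau)H^{-1}(\ot_0+\otau)R^{-1}(-z;\ot_0+\otau)H(\ot_0+\otau)$ with the shift action on $c^{G_0,\ot_0}$.
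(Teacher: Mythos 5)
Your part (2) rests on a uniqueness claim that does not exist. You argue that the Taylor expansion of $c^{G_0,\ot}$ at $\ot_0$ and $\Sh_{\otau}(c^{G_0,\ot_0})$ are two semisimple F-CohFTs over $\mbC[[\tau^*]]$ with matching genus-$0$ and genus-$1$ data, and then invoke ``the reconstruction theorem for semisimple (homogeneous) F-CohFTs of \cite{ABLR20}'' to force agreement in all genera. But the result of \cite{ABLR20} is an \emph{existence/construction} statement (the Givental-type action produces an F-CohFT from a semisimple flat F-manifold and genus-$1$ data); it is not a classification, and there is no Teleman-type rigidity for F-CohFTs. The gluing axiom of an F-CohFT only constrains restrictions to separating boundary divisors, so genus-$0$ and genus-$1$ data do not determine the higher-genus classes: one can perturb $c_{g,n+1}$ for $g\ge 2$ by classes invisible to the separating boundary and to forgetful pullback. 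Moreover the proposition is stated for arbitrary $G_0$ and arbitrary (non-homogeneous) semisimple flat F-manifolds, where even the CohFT analogue of uniqueness fails. The argument the paper actually uses is different and avoids this entirely: the identity $c^{G_0,\ot_0+\otau}=\Sh_{\otau}(c^{G_0,\ot_0})$, as an equality of formal power series in $\tau^*$, is equivalent (by the definition of $\Sh_{\otau}$ via $\pi_{m*}$ and induction on the order in $\otau$) to the single first-order differential identity
\begin{gather*}
\frac{\d}{\d t^\beta}\,c^{G_0,\ot}_{g,n+1}=\pi_{1*}\circ c^{G_0,\ot}_{g,n+2}\circ(\otimes e_\beta),
\end{gather*}
holding at every point of $U$, and this identity is precisely what is established in the proof of Proposition~4.11 of \cite{ABLR20} after rewriting $c^{G_0,\ot}=\Psi^{-1}R^{-1}(-z).c^{\oH,H^{-1}G_0}$. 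That reduction is the missing idea in your proposal; your ``direct-computation alternative'' is acknowledged but not carried out.

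For part (1) your direction is essentially right (the content is delegated to \cite[Section~4.4]{ABLR20}), but you leave unresolved exactly the point you flag as ``the main obstacle'': the formula $\tPsi^{-1}H^{-1}R^{-1}(-z)H.c^{\triv,H^{-2}G_0}$ does not literally match the normalized formula $\Psi^{-1}(\ot_0)R^{-1}(-z,\ot_0).c^{\triv,G_0}$ of \cite{ABLR20} unless $H_i(\ot_0)=1$. The paper disposes of this by observing that under $H\mapsto AH$ one has $R(z)\mapsto AR(z)A^{-1}$ and $H^{-2}G_0\mapsto H^{-2}(A^{-2}G_0)$, so the \emph{family} $\{c^{G_0,\ot}\}_{G_0\in\mbC^N}$ is independent of the choice of $H$ and the normalization $H_i(\ot_0)=1$ may be imposed; you should include this step rather than defer it.
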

\begin{proof}
{\it 1}. We know that under the transformation $H\mapsto AH$, where $A$ is a nondegenerate constant diagonal matrix, $R(z)$ transforms as $R(z)\mapsto AR(z)A^{-1}$, and therefore
$$
\tPsi^{-1}H^{-1}R^{-1}(-z)H.c^{\triv,H^{-2}G_0}\mapsto \tPsi^{-1}H^{-1}R^{-1}(-z)H.c^{\triv,H^{-2}(A^{-2}G_0)}. 
$$
Thus, for a fixed $\ot$ the family $\{c^{G_0,\ot}\}_{G_0\in\mbC^N}$ doesn't depend on a choice of~$H$. Let us choose~$H$ such that $H_i(\ot_0)=1$, then $c^{G_0,\ot_0}=\Psi^{-1}(\ot_0)R^{-1}(-z,\ot_0).c^{\triv,G_0}$. The fact that a vector potential of the associated flat F-manifold is equal to $\oF(t^*-t_0^*)$ was proved in~\cite[Section~4.4]{ABLR20} (see equation~(4.3) there).\\

{\it 2}. An elementary computation shows that $c^{G_0,\ot}=\Psi^{-1}R^{-1}(-z).c^{\oH,H^{-1}G_0}$, where by $\oH$ we denote the vector $(H_1,\ldots,H_N)$. The statement of part 2 of the proposition is equivalent to the property
\begin{gather*}
\frac{\d}{\d t^\beta}\left(\Psi^{-1}R^{-1}(-z).c^{\oH,H^{-1}G_0}\right)_{g,n+1}=\pi_{1*}\circ\left(\Psi^{-1}R^{-1}(-z).c^{\oH,H^{-1}G_0}\right)_{g,n+2}\circ(\otimes e_\beta),
\end{gather*}
which was proved in~\cite[proof of Proposition~4.11]{ABLR20}.
\end{proof}

\medskip

To our family of F-CohFTs $c^{G_0,\ot}$, $\ot\in U$, one can associate a natural vector field $\cX=\cX^\alpha\frac{\d}{\d t^\alpha}$ on~$U$ where $\cX^\alpha$ is the degree zero part of $c^{G_0,\ot}_{1,1}(e^\alpha)\in H^*(\oM_{1,1})$. Note that $\cX^\alpha=\sum_{i=1}^N\frac{\d t^\alpha}{\d u^i}H_i^{-2}G_0^i$. This motivates the following definition.

\medskip

\begin{definition}
Consider a semisimple flat F-manifold $(M,\nabla,\circ,e)$. A vector field $\cX$ on $M$ is called a {\it framing} if around each semisimple point of~$M$, in canonical coordinates $u^i$, the field $\cX$ has the form $\cX=\sum_{i=1}^N\alpha_i H_i^{-2}\frac{\d}{\d u^i}$ for some complex constants $\alpha_i$, $1\le i\le N$.
\end{definition}

\medskip

Using this language, we can say that our family of F-CohFTs $c^{G_0,\ot}$ induces a framing on $U$.\\

Suppose that all the points of our flat F-manifold $M$ are semisimple and $\cX$ is a framing on~$M$. We see that for any point $\ot_0\in M$ the above construction gives a family of F-CohFTs around~$\ot_0$ such that the induced framing coincides with~$\cX$. This family is not unique because the matrix~$R(z)$ is defined uniquely only up to the transformation~\eqref{eq:transformation of the R-matrix}. Suppose that $M$ is simply connected. Then it is easy to see that there is a consistent choice of matrix $R(z)$ in all the charts such that the local families glue in a global family of F-CohFTs parameterized by $\ot\in M$. Let us denote this global family by $c^{\cX,\ot}$. This global family is not unique: in order to fix the ambiguity, one can, for example, fix a choice of matrix $R(z)$ at some fixed point of $M$. Note that if $\cX=0$, then the maps $c^{\cX,\ot}_{g,n+1}$ are zero for $g\ge 1$.\\

Let us now apply the construction of the DR hierarchy to the F-CohFTs~$c^{\cX,\ot}$. We obtain a family of densities $Y_{\beta,d}^\ot\in(\hcA^1)^{[1]}$, where the superscript $\ot$ signals that the densities $Y^\ot_{\beta,d}$ analytically depend on $\ot\in M$. It is convenient to consider the generating series of densities~$Y^\ot_{\beta,d}$: 
$$
Y^\ot_\beta(z):=\sum_{d\ge -1}Y^\ot_{\beta,d}z^{d+1}.
$$

\smallskip

\begin{lemma}\label{lemma:property of family of densities}
We have $\frac{\d Y^\ot_\beta(z)}{\d t^\gamma}=\frac{\d Y^\ot_\beta(z)}{\d u^\gamma}-C^\mu_{\beta\gamma}z Y^\ot_\mu(z)$, $1\le\beta,\gamma\le N$.
\end{lemma}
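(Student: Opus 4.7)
The plan is to rewrite $\frac{\d Y^\ot_\beta(z)}{\d t^\gamma}$ via the shift formula for the F-CohFT family $c^{\cX,\ot}$ (Proposition~\ref{proposition:family of F-CohFTs}(2)), to compare it with $\frac{\d Y^\ot_\beta(z)}{\d u^\gamma}$ using the projection formula, and to identify the resulting discrepancy as $-C^\mu_{\beta\gamma}z Y^\ot_\mu(z)$. By Proposition~\ref{proposition:family of F-CohFTs}(2), one has $\frac{\d}{\d t^\gamma}c^{\cX,\ot}_{g,n+2}(\cdots)=\pi_* c^{\cX,\ot}_{g,n+3}(\cdots\otimes e_\gamma)$ for $\pi\colon\oM_{g,n+3}\to\oM_{g,n+2}$ forgetting the last marked point. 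Inserting this into~\eqref{eq:DR densities} and invoking $\pi^*\DR_g(\ldots)=\DR_g(\ldots,0)$ from~\eqref{eq:DR and pullback of DR}, $\pi^*\lambda_g=\lambda_g$, and $\pi^*\psi_2=\psi_2-D_{2,n+3}$, I convert $\frac{\d Y^\ot_{\beta,d}}{\d t^\gamma}$ into a sum over $g,n\geq 0$ with $2g+n>0$ of coefficients of $\int_{\DR_g(\ldots,0)}\lambda_g(\pi^*\psi_2)^d c^{\cX,\ot}_{g,n+3}(\cdots\otimes e_\gamma)$. In parallel, $\frac{\d Y^\ot_{\beta,d}}{\d u^\gamma}$ acts on the formal variable $u^\gamma$; using $S_n$-equivariance to move the extracted insertion $e_\gamma$ into last position and then reindexing $n\to n+1$, it becomes a similar sum but with no restriction $2g+n>0$ and with $\psi_2^d$ in place of $(\pi^*\psi_2)^d$.

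The difference between the two has two sources. First, for $d\geq 1$, the relation $\psi_2 D_{2,n+3}=0$ yields $(\pi^*\psi_2)^d=\psi_2^d+(-D_{2,n+3})^d$, so the extra $(-D_{2,n+3})^d$ contribution localizes to the boundary divisor $D_{2,n+3}\cong\oM_{g,n+2}$. Combining the self-intersection relation $D_{2,n+3}^d=\iota_*((-\psi_{n+2})^{d-1})$ with the gluing axiom---which splits off a factor $c^{\cX,\ot}_{0,3}(e^\mu\otimes e_\beta\otimes e_\gamma)=C^\mu_{\beta\gamma}$ on the rational bubble carrying marked points $2$ and $n+3$---and applying $S_{n+1}$-equivariance to move the node insertion $e_\mu$ from position $n+2$ back to position $2$ identifies this contribution as $-C^\mu_{\beta\gamma}Y^\ot_{\mu,d-1}$. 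Second, the $(g,n)=(0,0)$ summand, which is present in $\frac{\d Y^\ot_{\beta,d}}{\d u^\gamma}$ but excluded from $\frac{\d Y^\ot_{\beta,d}}{\d t^\gamma}$, evaluates to $-C^\mu_{\beta\gamma}\theta_{\mu,1}\cdot\int_{\oM_{0,3}}\psi_2^d$; this vanishes for $d\geq 1$ since positive-degree classes on a point are zero, but for $d=0$ it equals $-C^\mu_{\beta\gamma}\theta_{\mu,1}=C^\mu_{\beta\gamma}Y^\ot_{\mu,-1}$, supplying the missing $d=0$ correction.

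Combining, I obtain the uniform identity $\frac{\d Y^\ot_{\beta,d}}{\d t^\gamma}=\frac{\d Y^\ot_{\beta,d}}{\d u^\gamma}-C^\mu_{\beta\gamma}Y^\ot_{\mu,d-1}$ for $d\geq 0$, while the case $d=-1$ is trivial because $Y^\ot_{\beta,-1}=-\theta_{\beta,1}$ depends on neither $\ot$ nor $u^\gamma$. Multiplying by $z^{d+1}$, summing, and adopting the convention $Y^\ot_{\mu,-2}:=0$ yields the formula claimed in the lemma. The main subtlety I expect is the uniform bookkeeping at $d=0$: in that case the correction is \emph{not} supplied by the boundary-divisor mechanism (which vanishes), but by the mismatch between the summation ranges of the two derivatives, and one must check that the sign and the vanishing of $\psi_2^d$ on the point $\oM_{0,3}$ conspire exactly to yield $-C^\mu_{\beta\gamma}Y^\ot_{\mu,-1}$, so that the $d\geq 1$ and $d=0$ cases can be packaged into a single formula.
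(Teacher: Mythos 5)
Your proof is correct and takes essentially the same route as the paper's (very terse) proof: both rest on the pushforward property $\frac{\d}{\d t^\gamma}c^{\cX,\ot}_{g,n+1}=\pi_{1*}\circ c^{\cX,\ot}_{g,n+2}\circ(\otimes\, e_\gamma)$ from part 2 of Proposition~\ref{proposition:family of F-CohFTs}, the comparison $\pi^*\psi_2^d=\psi_2^d-\delta^0_{2,n+3}\,\pi^*\psi_2^{d-1}$ (your binomial form $(\pi^*\psi_2)^d=\psi_2^d+(-\delta^0_{2,n+3})^d$ is equivalent), and the definition~\eqref{eq:DR densities} of the densities. You merely supply the bookkeeping that the paper leaves implicit, in particular the correct treatment of the $(g,n)=(0,0)$ term at $d=0$ versus the boundary-divisor term at $d\ge 1$.
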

\begin{proof}
This follows from the definition of the densities $Y^\ot_{\beta,d}$, the property $\frac{\d}{\d t^\beta}(c^{\cX,\ot})_{g,n+1}=\pi_{1*}\circ(c^{\cX,\ot})_{g,n+2}\circ(\otimes e_\beta)$ (which is equivalent to part~2 of Proposition~\ref{proposition:family of F-CohFTs}), and the fact that $\pi_1^*\psi_i^d = \psi_i^d-\delta^0_{i,n+1}\pi_1^*\psi_i^{d-1}$, $1\le i\le n$, $d\ge 1$, where the class~$\delta^0_{i,n+1}$ was defined in the proof of Proposition~\ref{proposition:homogeneous DR hierarchy}.
\end{proof}

\medskip

Consider now a calibration $X_\alpha(z)$ of our flat F-manifold~$M$. Define densities $\tY^\ot_{\beta,d}\in(\hcA^1)^{[1]}$, $1\le\beta\le N$, $d\ge -1$, by $\sum_{d\ge -1}\tY^\ot_{\beta,d}z^{d+1}:=\tY^\ot_\beta(z)$ where
$$
\tY^\ot_\beta(z):=Y_\mu^\ot(z)X^\mu_\beta(z).
$$

\smallskip

\begin{lemma}
We have $\frac{\d\tY^\ot_{\beta,d}}{\d t^\gamma}=\frac{\d\tY^\ot_{\beta,d}}{\d u^\gamma}$.
\end{lemma}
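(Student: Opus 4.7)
My plan is to expand $\tY^\ot_\beta(z)$ using its definition and compute the two sides separately, showing they agree coefficient by coefficient in $z$. The key inputs are Lemma~\ref{lemma:property of family of densities} for the $t^\gamma$-derivative of $Y^\ot_\mu(z)$, and the flat-section equation $(\nabla - z\circ)X_\beta(z)=0$, which in flat coordinates reads
\begin{equation*}
\frac{\d X^\mu_\beta(z)}{\d t^\gamma} = z\, C^\mu_{\gamma\nu}\, X^\nu_\beta(z),
\end{equation*}
since $\nabla_{\d/\d t^\gamma}(\d/\d t^\nu)=0$.

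I would apply the Leibniz rule to $\tY^\ot_\beta(z) = Y^\ot_\mu(z)\,X^\mu_\beta(z)$, observing that $X^\mu_\beta(z)$ is a scalar function of the manifold coordinates only and hence independent of the formal loop-space variables $u^\gamma$. Substituting both identities above yields
\begin{equation*}
\frac{\d \tY^\ot_\beta(z)}{\d t^\gamma} = \frac{\d Y^\ot_\mu(z)}{\d u^\gamma}X^\mu_\beta(z) - z\, C^\nu_{\mu\gamma}\, Y^\ot_\nu(z)\, X^\mu_\beta(z) + z\, C^\mu_{\gamma\nu}\, Y^\ot_\mu(z)\, X^\nu_\beta(z).
\end{equation*}
After renaming the summation indices in the last two terms so they have the same free $Y^\ot$ and $X$ factors, they combine into $z(C^\mu_{\gamma\nu}-C^\mu_{\nu\gamma})\, Y^\ot_\mu(z)\, X^\nu_\beta(z)$, which vanishes by the symmetry $C^\mu_{\gamma\nu}=C^\mu_{\nu\gamma}$ (equivalently, the torsionlessness of $\nabla_z$, or the symmetry of second partials of~$F^\alpha$). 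What remains is $\frac{\d Y^\ot_\mu(z)}{\d u^\gamma}X^\mu_\beta(z)=\frac{\d \tY^\ot_\beta(z)}{\d u^\gamma}$, which yields the claim after extracting the coefficient of $z^{d+1}$.

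There is no serious obstacle here: the lemma amounts to the observation that the $z$-linear ``twist'' in Lemma~\ref{lemma:property of family of densities} is exactly compensated by the $z$-linear ``twist'' coming from the definition of a flat section of $\nabla-z\circ$. The only thing to be careful about is the bookkeeping distinguishing the manifold coordinate $t^\gamma$ (on which $\ot$ and the matrix entries $X^\mu_\beta$ depend) from the formal loop-space variable $u^\gamma$ (on which the differential-polynomial coefficients of $Y^\ot_\mu$ depend), together with the fact that the two dependencies do not interact, so the Leibniz rule applies cleanly term by term.
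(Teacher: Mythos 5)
Your proof is correct and is essentially the paper's own argument: the paper likewise deduces the lemma from Lemma~\ref{lemma:property of family of densities} together with the flat-section identity $\frac{\d X^\mu_\beta(z)}{\d t^\gamma}=C^\mu_{\gamma\nu}z X^\nu_\beta(z)$, the cancellation of the two $z$-linear terms via the symmetry of $C^\mu_{\gamma\nu}$ being exactly the content of its one-line proof. You have simply written out explicitly the Leibniz-rule computation that the paper leaves implicit.
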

\begin{proof}
This immediately follows from Lemma~\ref{lemma:property of family of densities} and the property $\frac{\d X^\mu_\beta(z)}{\d t^\gamma}=C^\mu_{\gamma\nu}z X^\nu_\beta(z)$.
\end{proof}

\medskip

Define densities of vector fields $Y^\desc_{\beta,d}\in(\hcA^1_M)^{[1]}$, $1\le\beta\le N$, $d\ge -1$, by
$$
Y^\desc_{\beta,d}:=\left.\left(\left.\tY^\ot_{\beta,d}\right|_{u^*=0}\right)\right|_{t^\gamma=u^\gamma}\in(\hcA^1_M)^{[1]}.
$$
The previous lemma implies that for a fixed $\ot\in M$ the density $\tY^\ot_{\beta,d}$ is the Taylor expansion of the density $Y^\desc_{\beta,d}$ at $u^\gamma=t^\gamma$, i.e., $\tY^\ot_{\beta,d}=\left.Y^\desc_{\beta,d}\right|_{u^\gamma\mapsto t^\gamma+u^\gamma}$, as elements of $(\hcA^1)^{[1]}$. Therefore, since for any $\ot\in M$ the densities $\tY^\ot_{\beta,d}$ produce a hierarchy of pairwise commuting flows, the densities $Y^\desc_{\beta,d}$ also produce a hierarchy of pairwise commuting flows. This hierarchy is called the {\it descendant DR hierarchy}.\\

In more details, the equations of the descendant DR hierarchy are given by
$$
\frac{\d u^\alpha}{\d t^\beta_d}=\d_x P^{\desc;\alpha}_{\beta,d},\qquad 1\le\alpha,\beta\le N,\quad d\ge 0,
$$
where $P^{\desc;\alpha}_{\beta,d}=\left.\left(\left.\tP^{\ot;\alpha}_{\beta,d}\right|_{u^*=0}\right)\right|_{t^\gamma=u^\gamma}$, $\tP^{\ot;\alpha}_{\beta,d}=\sum_{i=0}^{d+1}P^{\ot;\alpha}_{\mu,d-i}X^\mu_{i-1,\beta}$, and $P^{\ot;\alpha}_{\beta,d}$ are the differential polynomials~\eqref{eq:P-polynomials for DR hierarchy} corresponding to the F-CohFT~$c^{\cX,\ot}$. Also, we adopt the convention $P^{\desc;\alpha}_{\beta,-1}:=X^\alpha_{\beta,-1}$. Note that we have $\frac{\d\tP^{\ot;\alpha}_{\beta,d}}{\d t^\gamma}=\frac{\d\tP^{\ot;\alpha}_{\beta,d}}{\d u^\gamma}$.\\ 

We immediately see that $\left.P^{\desc;\alpha}_{\beta,d}\right|_{\eps=0}=\left.X^\alpha_{\beta,d}\right|_{t^\gamma=u^\gamma}$, and therefore the dispersionless part of the descendant DR hierarchy coincides with the principal hierarchy. For $\cX=0$, the descendant DR hierarchy coincides with the principal hierarchy. \\

Statements analogous to the ones from Theorem~\ref{theorem:Main} are true for the descendant DR hierarchy. We present here the proof of a couple of them.\\

Note that if $X^\alpha_{\beta,-1}=\delta^\alpha_\beta$, then $X^\alpha_{\un,0}$ coincides with $t^\alpha$ up to a constant. We will say that a calibration is of {\it standard type} if $X^\alpha_{\beta,-1}=\delta^\alpha_\beta$ and $X^\alpha_{\un,0}=t^\alpha$.

\medskip

\begin{proposition}
1. We have $\frac{\d}{\d u^\un}P^{\desc;\alpha}_{\beta,d+1}=P^{\desc;\alpha}_{\beta,d}$, $1\le\alpha,\beta\le N$, $d\ge -1$.\\
2. If our calibration is of standard type, then $\frac{\d}{\d u^\beta}P^{\desc;\alpha}_{\un,1}=D P^{\desc;\alpha}_{\beta,0}$, $1\le\alpha,\beta\le N$.
\end{proposition}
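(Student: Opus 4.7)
The plan is to deduce both parts from Theorem~\ref{theorem:Main} parts (vi) and (vii), applied to the family of F-CohFTs $c^{\cX,\ot}$, using three structural facts about $\tP^{\ot;\alpha}_{\beta,d}$: the expansion $\tP^{\ot;\alpha}_{\beta,d}=\sum_{i=0}^{d+1}P^{\ot;\alpha}_{\mu,d-i}X^\mu_{i-1,\beta}$, the identity $\frac{\d\tP^{\ot;\alpha}_{\beta,d}}{\d t^\gamma}=\frac{\d\tP^{\ot;\alpha}_{\beta,d}}{\d u^\gamma}$ recorded just above the proposition, and the Taylor expansion relation $\tP^{\ot;\alpha}_{\beta,d}=P^{\desc;\alpha}_{\beta,d}\big|_{u^\gamma\mapsto t^\gamma+u^\gamma}$ analogous to the one proved above for $\tY^\ot_{\beta,d}$.

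For part~1, I would apply $\frac{\d}{\d u^\un}$ to each term of the sum defining $\tP^{\ot;\alpha}_{\beta,d+1}$. For $0\le i\le d+1$ the index $d+1-i$ is non-negative, so Theorem~\ref{theorem:Main}~(vi) applied to $c^{\cX,\ot}$ gives $\frac{\d}{\d u^\un}P^{\ot;\alpha}_{\mu,d+1-i}=P^{\ot;\alpha}_{\mu,d-i}$; the remaining term with $i=d+2$ equals $\delta^\alpha_\mu X^\mu_{d+1,\beta}(\ot)$, which depends only on~$\ot$ and is therefore annihilated. Reassembling yields $\frac{\d}{\d u^\un}\tP^{\ot;\alpha}_{\beta,d+1}=\tP^{\ot;\alpha}_{\beta,d}$, and restriction to $u^*_0=0$ followed by the substitution $t^\gamma=u^\gamma$ translates this identity into the claim of part~1.

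For part~2, under the standard calibration hypothesis the expansions reduce to $\tP^{\ot;\alpha}_{\un,1}=P^{\ot;\alpha}_{\un,1}+t^\mu P^{\ot;\alpha}_{\mu,0}+X^\alpha_{1,\un}(\ot)$ and $\tP^{\ot;\alpha}_{\beta,0}=P^{\ot;\alpha}_{\beta,0}+X^\alpha_{0,\beta}(\ot)$. Applying $\frac{\d}{\d u^\beta}$ to the first and invoking Theorem~\ref{theorem:Main}~(vii) for $c^{\cX,\ot}$ gives $\frac{\d\tP^{\ot;\alpha}_{\un,1}}{\d u^\beta}=D P^{\ot;\alpha}_{\beta,0}+t^\mu\frac{\d P^{\ot;\alpha}_{\mu,0}}{\d u^\beta}$. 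Passing to descendant variables via $u^*_0=0$ and $t^\gamma\mapsto u^\gamma$, and using the lemma $\frac{\d\tP^{\ot;\alpha}_{\beta,0}}{\d t^\gamma}=\frac{\d\tP^{\ot;\alpha}_{\beta,0}}{\d u^\gamma}$ to express the $k=0$ part of $D P^{\desc;\alpha}_{\beta,0}$, the contributions from $\eps\frac{\d}{\d\eps}$ and from $\sum_{k\ge 1}u^\mu_k\frac{\d}{\d u^\mu_k}$ match directly on both sides, and the whole identity reduces to the symmetry $\frac{\d P^{\ot;\alpha}_{\mu,0}}{\d u^\beta}=\frac{\d P^{\ot;\alpha}_{\beta,0}}{\d u^\mu}$.

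The main obstacle is verifying this last symmetry, but it is a clean consequence of the $S_n$-equivariance axiom~(i) of Definition~\ref{definition:F-CohFT}. Since the descendant power is $d=0$, no psi class enters the relevant integral, and the derivative in question is the coefficient of $\int_{\DR_g(-\sum_{j=2}^n a_j,0,0,a_2,\ldots,a_n)}\lambda_g\, c^{\cX,\ot}_{g,n+2}(e^\alpha\otimes e_\mu\otimes e_\beta\otimes\otimes_{j=2}^n e_{\alpha_j})$ in the monomial $a_2^{k_2}\cdots a_n^{k_n}$. The two marked points carrying $e_\mu$ and $e_\beta$ both have DR multiplicity zero, so the permutation exchanging them leaves $\DR_g$ and $\lambda_g$ invariant, while invariance of the $c^{\cX,\ot}$-insertion under the corresponding exchange of $e_\mu$ and $e_\beta$ is exactly axiom~(i). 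This yields the required identity and completes part~2.
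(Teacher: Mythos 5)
Your proof is correct and follows essentially the same route as the paper's: both parts are reduced to Theorem~\ref{theorem:Main}, parts (vi) and (vii), applied to the F-CohFTs $c^{\cX,\ot}$, via the expansion of $\tP^{\ot;\alpha}_{\beta,d}$ and the Taylor-expansion relation identifying it with $P^{\desc;\alpha}_{\beta,d}\big|_{u^\gamma\mapsto t^\gamma+u^\gamma}$. The only substantive difference is that you spell out the symmetry $\frac{\d}{\d u^\beta}P^{\ot;\alpha}_{\mu,0}=\frac{\d}{\d u^\mu}P^{\ot;\alpha}_{\beta,0}$ (correctly, via the $S_n$-equivariance of the F-CohFT and the vanishing of the psi-class power), a step the paper uses silently when it writes $t^\mu\frac{\d}{\d u^\beta}P^{\ot;\alpha}_{\mu,0}=t^\mu\frac{\d}{\d u^\mu}P^{\ot;\alpha}_{\beta,0}$.
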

\begin{proof}
To prove part 1, we compute $\left.\frac{\d}{\d u^\un}P^{\desc;\alpha}_{\beta,d+1}\right|_{u^\gamma\mapsto t^\gamma+u^\gamma}=\frac{\d}{\d u^\un}\tP^{\ot;\alpha}_{\beta,d+1}=\frac{\d}{\d u^\un}\sum_{i=0}^{d+2}P^{\ot;\alpha}_{\mu,d+1-i}X^\mu_{i-1,\beta}=\sum_{i=0}^{d+1}P^{\ot;\alpha}_{\mu,d-i}X^\mu_{i-1,\beta}=\tP^{\ot;\alpha}_{\beta,d}=\left.P^{\desc;\alpha}_{\beta,d}\right|_{u^\gamma\mapsto t^\gamma+u^\gamma}$.\\

For part 2, we compute $\left.\frac{\d}{\d u^\beta}P^{\desc;\alpha}_{\un,1}\right|_{u^\gamma\mapsto t^\gamma+u^\gamma}=\frac{\d}{\d u^\beta}\tP^{\ot;\alpha}_{\un,1}=\frac{\d}{\d u^\beta}\left(P^{\ot;\alpha}_{\un,1}+P^{\ot;\alpha}_{\mu,0}t^\mu\right)=D P^{\ot;\alpha}_{\beta,0}+t^\mu\frac{\d}{\d u^\mu}P^{\ot;\alpha}_{\beta,0}=\left.D P^{\desc;\alpha}_{\beta,0}\right|_{u^\gamma\mapsto t^\gamma+u^\gamma}$.
\end{proof}

\medskip

To summarize the above constructions, given the following data:
\begin{itemize}
\item a flat F-manifold structure on $M\subset\mbC^N$ given by a vector potential $\oF$ such that $M$ is simply connected and all the points of $M$ are semisimple;
\item its calibration;
\item a framing on $M$;
\end{itemize}
we have constructed a dispersive integrable deformation of the principal hierarchy . In the next section, we will prove that the dispersive deformations corresponding to different framings are not related to each other by a Miura transformation that is close to identity.\\ 

\subsection{Nonequivalence of dispersive deformations}

We say that a Miura transformation~\eqref{eq:Miura1}--\eqref{eq:Miura2} is {\it close to identity} if $\tu^\alpha|_{\eps=0}=u^\alpha$. 

\medskip

\begin{definition}\label{definition:equivalent deformations}
Two dispersive deformations of the principal hierarchy of a calibrated flat F-manifold are called {\it equivalent} if they are related by a Miura transformation that is close to identity.
\end{definition}

\smallskip

\begin{theorem}\label{theorem:nonequivalent deformations}
Let us fix a calibrated flat F-manifold structure on a simply connected open subset $M\subset\mbC^N$, with a vector potential~$\oF$ and which is semisimple at each point of~$M$. Then, for different framings~$\cX$ and~$\hcX$ on~$M$, the corresponding descendant DR hierarchies are not equivalent. 
\end{theorem}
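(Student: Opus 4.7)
The plan is to compare, at the first nontrivial order in $\eps$, the descendant DR hierarchies corresponding to $\cX$ and $\hcX$, and to isolate a piece of the discrepancy that cannot be absorbed into any Miura transformation close to the identity. Since both hierarchies coincide with the principal hierarchy of $(M,\oF)$ at $\eps=0$, and since every $P^{\desc;\alpha}_{\beta,d}$ only contains even powers of $\eps$ (through the genus-$g$ factor $\eps^{2g}$), an equivalence can be taken of the form $\tu^\alpha = u^\alpha + \eps^2 F^\alpha_2 + O(\eps^4)$ with $F^\alpha_2$ a differential polynomial of differential degree $2$. Extracting the coefficient of $\eps^2$ in the transformation rule for the flow $\partial_{t^\beta_d}$ reduces equivalence, at this order, to the existence of differential polynomials $F^\alpha_2$ satisfying
\begin{equation*}
\bigl(P^{\desc;\alpha}_{\beta,d}(\cX) - P^{\desc;\alpha}_{\beta,d}(\hcX)\bigr)\big|_{\eps^2} = \sum_{s\geq 0}\left(\frac{\partial F^\alpha_2}{\partial u^\mu_s}\partial_x^s X^\mu_{\beta,d}\big|_{t=u} - \frac{\partial\left(X^\alpha_{\beta,d}\big|_{t=u}\right)}{\partial u^\mu_s}\partial_x^s F^\mu_2\right)
\end{equation*}
for all $\beta$ and $d$ simultaneously, i.e., to the left-hand side being a coboundary for the action of the dispersionless flows.

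Next, using Proposition~\ref{proposition:family of F-CohFTs} together with the local presentation $c^{\cX,\ot} = \tPsi^{-1}H^{-1}R^{-1}(-z)H.c^{\triv,H^{-2}G_0}$, I would rewrite the $\eps^2$ coefficient of $P^{\desc;\alpha}_{\beta,d}(\cX)$ as an explicit integral on $\oM_{1,n+2}$ of $\lambda_1 \psi_2^d$ against $c^{\cX,\ot}_{1,n+2}$, evaluated after the substitution $t^\gamma = u^\gamma$. Since the only genus-$1$ input to the construction of $c^{\cX,\ot}$ is the framing $\cX$ itself (equivalently $G_0^i = H_i^2\cX^i$ in canonical coordinates), the left-hand side of the coboundary identity above is a linear function of $\cX - \hcX$ with an explicit universal kernel that we can compute.

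The key step is to pass to canonical coordinates $u^1,\ldots,u^N$ around any semisimple point, in which $\cX = \sum_i \alpha_i H_i^{-2}\partial_{u^i}$ and the dispersionless flows $X^\mu_{\beta,d}\big|_{t=u}$ are diagonal, so that $u^i_x$ appears only in the $i$-th canonical flow. A direct inspection then shows that, in these coordinates, the coboundary on the right-hand side cannot modify the diagonal part of the coefficient of $u^i_{xx}$ in $\partial_{t^i_0}u^i$; this is the analogue in the F-CohFT setting of the central-invariant argument of \cite{DLZ06, CPS18}, consistent with the Miura-invariant conjecture of \cite{AL18}. An explicit calculation of $\int_{\DR_1(a,-a)}\lambda_1$ on $\oM_{1,2}$ then identifies this invariant diagonal piece as a nonzero universal constant times $\alpha_i - \hat\alpha_i$. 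Consequently, $\cX \ne \hcX$ produces some $\alpha_i \ne \hat\alpha_i$, the coboundary identity above fails, and the two descendant DR hierarchies cannot be equivalent.

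The main obstacle is this third step: verifying rigorously that the diagonal coefficient in canonical coordinates is truly invariant under the coboundary action at order $\eps^2$, and pinning down the universal multiplicative constant that extracts $\alpha_i$ from the relevant $\lambda_1\DR_1$ integral on $\oM_{1,2}$ to confirm it is nonzero. These are, respectively, a canonical-coordinate analysis of the coboundary operator on differential polynomials of the relevant differential degree and a concrete moduli-space computation.
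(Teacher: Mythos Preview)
Your strategy is essentially the same as the paper's: compare the two hierarchies at order~$\eps^2$, pass to canonical coordinates, and show that the diagonal piece recovers the framing and is unchanged by any Miura transformation close to the identity. The key computation you anticipate, $\int_{\DR_1(a,-a)}\lambda_1$, is exactly what the paper uses (it equals $\tfrac{1}{12}\cdot a^2$).

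There is, however, one genuine gap. You assert that because both hierarchies contain only even powers of~$\eps$, the Miura transformation may be taken of the form $\tu^\alpha = u^\alpha + \eps^2 F^\alpha_2 + O(\eps^4)$. This is not justified: a Miura transformation relating two $\eps$-even hierarchies need not itself be $\eps$-even, and if it has an~$\eps^1$ term~$F^\alpha_1$, your coboundary equation at order~$\eps^2$ acquires additional quadratic contributions in~$F_1$. The paper does not avoid this issue but instead handles it by passing to the \emph{symbol} (Miura matrix) $S(z)$ of each flow and the symbol~$T(z)$ of the transformation, so that equivalence becomes the conjugation $S(z)\mapsto T(z)S(z)T(z)^{-1}$. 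At order~$z^1$ this forces $[T_1,S_0]=0$, and then at order~$z^2$ the $T_1$-terms cancel cleanly, leaving $S_2-\hS_2=[S_0,T_2]$, a pure commutator condition with a diagonal matrix (in canonical coordinates) on the right. This both repairs your gap and streamlines your third step: once you are at the level of matrices, the ``diagonal part is coboundary-invariant'' statement is just the triviality that a commutator with a diagonal matrix has zero diagonal.

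A second, smaller point: you focus on the flow $\partial_{t^i_0}$ in canonical coordinates. The paper instead works in flat coordinates and isolates the framing via the specific combination $S^{(\un,1)}_2 - \sum_\mu X^\mu_{\un,0}\,S^{(\mu,0)}_2$, which collapses the genus-one moduli computation to the degree-zero part of $c^{\cX,\ot}_{1,2}(e^\alpha\otimes e_\beta)$ and hence directly to $\tfrac{1}{12}\sum_i \tfrac{\partial t^\alpha}{\partial \hu^i}\cX^i\tfrac{\partial \hu^i}{\partial t^\beta}$. This combination is what makes the ``concrete moduli-space computation'' you flag as an obstacle essentially immediate.
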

\begin{proof}[Proof of Theorem~\ref{theorem:nonequivalent deformations}]
Following~\cite{AL18}, for a system of evolutionary PDEs of the form
\begin{gather*}
\frac{\d u^\alpha}{\d t}=Q^\alpha,\qquad Q^\alpha\in\hcA^{[1]}_M,\quad 1\le\alpha\le N,
\end{gather*}
let us consider the associated {\it Miura matrix} $S(z)=(S^\alpha_\beta(z))\in\Mat_{N,N}\left(\mcO(M)[[z]]\right)$ defined by
$$
S^\alpha_\beta(z):=\sum_{d\ge 0}\left.\frac{\d Q^\alpha}{\d u^\beta_{d+1}}\right|_{\substack{u^\gamma_c=\delta_{c,0}t^\gamma\\\eps=z}}.
$$
For a Miura transformation~\eqref{eq:Miura1}--\eqref{eq:Miura2} that is close to identity, introduce its {\it symbol} $T(z)=(T^\alpha_\beta(z))\in\Mat_{N,N}\left(\mcO(M)[[z]]\right)$ by
$$
T^\alpha_\beta(z):=\sum_{d\ge 0}\left.\frac{\d\tu^\alpha}{\d u^\beta_d}\right|_{\substack{u^\gamma_c=\delta_{c,0}t^\gamma\\\eps=z}}.
$$
It is easy to see that under the Miura transformation the Miura matrix of our system of PDEs transforms as follows: 
\begin{gather*}
S(z)\mapsto T(z) S(z) T(z)^{-1}.
\end{gather*}

\bigskip

Now consider the descendant DR hierarchies corresponding to different framings~$\cX$ and~$\hcX$. Let us denote the Miura matrices of a flow $\frac{\d}{\d t^\alpha_d}$ from these two hierarchies by~$S^{(\alpha,d)}(z)$ and~$\hS^{(\alpha,d)}(z)$, respectively. Clearly, $S^{(\alpha,d)}(0)=\hS^{(\alpha,d)}(0)$. Suppose that the hierarchies are related by a Miura transformation that is close to identity. Denote its symbol by $T(z)$. For the calibration of our flat F-manifold, without loss of generality, we can assume that $X^\alpha_{\beta,-1}=\delta^\alpha_\beta$. Consider the expansions $S^{(\alpha,d)}(z)=\sum_{i\ge 0}S^{(\alpha,d)}_{2i}z^{2i}$, $\hS^{(\alpha,d)}(z)=\sum_{i\ge 0}\hS^{(\alpha,d)}_{2i}z^{2i}$, $T(z)=\sum_{i\ge 0}T_i z^i$. Then we have
\begin{align}
\hS^{(\alpha,d)}(z)=T(z) S^{(\alpha,d)}(z)T(z)^{-1}\Rightarrow& 
\left\{\begin{aligned}
&[T_1,S^{(\alpha,d)}_0]=0,\\
&S^{(\alpha,d)}_2+[S^{(\alpha,d)}_0,T_1]T_1+[T_2,S^{(\alpha,d)}_0]=\hS^{(\alpha,d)}_2,
\end{aligned}\right.\notag\\
\Rightarrow&
S^{(\alpha,d)}_2-\hS^{(\alpha,d)}_2=[S^{(\alpha,d)}_0,T_2].\label{eq:S and T}
\end{align}

\bigskip

For the descendant DR hierarchy corresponding to the framing $\cX$, we have
$$
\tP^{\ot;\alpha}_{\un,1}=P^{\ot;\alpha}_{\un,1}+P^{\ot;\alpha}_{\mu,0}X^\mu_{\un,0}+X^\alpha_{\un,1},\qquad \tP^{\ot;\alpha}_{\mu,0}=P^{\ot;\alpha}_{\mu,0}+X^\alpha_{\mu,0},
$$
which implies that the matrix $S=(S^\alpha_\beta):=S^{(\un,1)}_2-\sum_{\mu=1}^NX^\mu_{\un,0}S^{(\mu,0)}_2$ is given by
\begin{multline*}
S^\alpha_\beta=\left.\Coef_{\eps^2}\frac{\d P^{\ot;\alpha}_{\un,0}}{\d u^\beta_2}\right|_{u^*=0}=
\Coef_{a^2}\int_{\DR_1(a,0,-a)}\lambda_1\psi_2 c^{\cX,\ot}_{1,3}(e^\alpha\otimes e\otimes e_\beta)=\\
=2\Coef_{a^2}\int_{\DR_1(a,-a)}\lambda_1 c_{1,2}^{\cX,\ot}(e^\alpha\otimes e_\beta),
\end{multline*}
which is equal to $2\Coef_{a^2}\int_{\DR_1(a,-a)}\lambda_1=\frac{1}{12}$ times the degree zero part of $c_{1,2}^{\cX,\ot}(e^\alpha\otimes e_\beta)$. By the construction of the cohomological field theory $c^{\cX,\ot}$, the degree zero part of $c_{1,2}^{\cX,\ot}(e^\alpha\otimes e_\beta)$ is equal to $\sum_{i=1}^N\frac{\d t^\alpha}{\d\hu^i}\cX^i\frac{\d\hu^i}{\d t^\beta}$, where $\sum_{i=1}^N\cX^i\frac{\d}{\d \hu^i}:=\cX$ and $\hu^i$ are local canonical coordinates on $M$. We conclude that if we denote $\hS:=\hS^{(\un,1)}_2-\sum_{\mu=1}^N X^\mu_{\un,0}\hS^{(\mu,0)}_2$, then $\tPsi(S-\hS)\tPsi^{-1}$ is a nonzero diagonal matrix. On the other hand, since $\tPsi S^{(\alpha,d)}_0\tPsi^{-1}$ is a diagonal matrix for any $1\le\alpha\le N$ and $d\ge 0$, the diagonal part of $\left[\tPsi\left(S^{(\un,1)}_0-\sum_{\mu=1}^N X^\mu_{\un,0}S_0^{(\mu,0)}\right)\tPsi^{-1},\tPsi T_2\tPsi^{-1}\right]$ is equal to zero, which contradicts~\eqref{eq:S and T}.
\end{proof}

\medskip

Note that during the proof of the theorem we have obtained the following explicit relation between a framing and the differential polynomials defining the flows $\frac{\d}{\d t^\un_1}$ and $\frac{\d}{\d t^\mu_0}$ of a corresponding descendant DR hierarchy.

\medskip

\begin{lemma}\label{lemma:framing and DR}	
Consider a flat F-manifold, a calibration satisfying $X^{\alpha}_{\beta,-1}=\delta^\alpha_\beta$, a framing~$\cX=\cX^\alpha\frac{\d}{\d t^\alpha}$, and a corresponding descendant DR hierarchy. Then we have 
$$
\cX^\alpha=\left.12\frac{\d}{\d u^\un_{xx}}\Coef_{\eps^2}\left(P^{\desc;\alpha}_{\un,1}-P^{\desc;\alpha}_{\beta,0}X^\beta_{\un,0}\right)\right|_{u^*=t^*}.
$$
\end{lemma}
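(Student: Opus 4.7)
The plan is to unfold the claim as an explicit inversion of the degree-one computation already carried out in the proof of Theorem~\ref{theorem:nonequivalent deformations}. First I would use the shift identity $\tP^{\ot;\alpha}_{\beta,d}=P^{\desc;\alpha}_{\beta,d}\big|_{u^\gamma\mapsto t^\gamma+u^\gamma}$ (the analogue for $P$ of the identity recorded for $Y$ right after the definition of $Y^{\desc}_{\beta,d}$, and already used in the proof of the proposition preceding this lemma) to rewrite the right-hand side of the lemma: since $\d/\d u^\un_{xx}$ commutes with the $u^\gamma_0$-shift and $X^\beta_{\un,0}$ is a pure function of $\ot$, one has
\[
\left.\frac{\d}{\d u^\un_{xx}}\left(P^{\desc;\alpha}_{\un,1}-P^{\desc;\alpha}_{\beta,0}X^\beta_{\un,0}\right)\right|_{u^*=t^*}=\left.\frac{\d}{\d u^\un_{xx}}\left(\tP^{\ot;\alpha}_{\un,1}-\tP^{\ot;\alpha}_{\beta,0}X^\beta_{\un,0}\right)\right|_{u^*_*=0}.
\]

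Next I would substitute the explicit formulae $\tP^{\ot;\alpha}_{\un,1}=P^{\ot;\alpha}_{\un,1}+P^{\ot;\alpha}_{\beta,0}X^\beta_{\un,0}+X^\alpha_{\un,1}$ and $\tP^{\ot;\alpha}_{\beta,0}=P^{\ot;\alpha}_{\beta,0}+X^\alpha_{\beta,0}$, which are the same ones used in the proof of Theorem~\ref{theorem:nonequivalent deformations}. The combination reduces to $P^{\ot;\alpha}_{\un,1}+X^\alpha_{\un,1}-X^\alpha_{\beta,0}X^\beta_{\un,0}$, and its last two terms are killed by $\d/\d u^\un_{xx}$ because they depend only on $\ot$. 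Hence the right-hand side of the lemma collapses to $12\,\d/\d u^\un_{xx}\Coef_{\eps^2}P^{\ot;\alpha}_{\un,1}\big|_{u^*_*=0}$.

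Now I would expand this via the definition~\eqref{eq:P-polynomials for DR hierarchy}: at order $\eps^2$ the constraint $\sum_jk_j=2g=2$, combined with the operations of taking one $\d/\d u^\mu_2$ derivative and then setting all $u^*_*$ to zero, selects uniquely the term with $n=1$ and $k_1=2$, and multilinearity of $c$ absorbs $A^{\alpha_1}e_{\alpha_1}$ into the unit $e$. Applying the unit axiom together with the pullback identities~\eqref{eq:DR and pullback of DR},~\eqref{eq:c and pullback of c},~\eqref{eq:lambda and pullback of lambda} for the forgetful map $\pi_2\colon\oM_{1,3}\to\oM_{1,2}$ dropping the second marked point, and using the dilaton value $\pi_{2*}\psi_2=2g-2+n=2$, one obtains $2\Coef_{a_1^2}\int_{\DR_1(-a_1,a_1)}\lambda_1\,c^{\cX,\ot}_{1,2}(e^\alpha\otimes e)$.

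Finally, the standard evaluation $\int_{\DR_1(-a,a)}\lambda_1=a^2/24$ produces the overall prefactor $2/24=1/12$ multiplying the degree-zero part of $c^{\cX,\ot}_{1,2}(e^\alpha\otimes e)$; one more application of the unit axiom identifies this with the degree-zero part of $c^{\cX,\ot}_{1,1}(e^\alpha)$, which is $\cX^\alpha$ by the very definition of the framing associated to the family $c^{\cX,\ot}$. Multiplying by the external factor $12$ gives the claim. The main bookkeeping subtlety will be verifying cleanly that $\d/\d u^\un_{xx}$ commutes with the shift and with the evaluation $|_{u^*=t^*}$; beyond that the computation is precisely the intersection-theoretic one already executed in the proof of Theorem~\ref{theorem:nonequivalent deformations}.
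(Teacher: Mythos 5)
Your proposal is correct and follows essentially the same route as the paper: the paper's proof of this lemma is precisely the computation carried out in the proof of Theorem~\ref{theorem:nonequivalent deformations} (reduction of $\tP^{\ot;\alpha}_{\un,1}-\tP^{\ot;\alpha}_{\beta,0}X^\beta_{\un,0}$ to $P^{\ot;\alpha}_{\un,1}$ modulo functions of $\ot$, selection of the $g=1$, $n=1$, $k_1=2$ term, the pullback identities for the forgetful map, and the evaluation $\int_{\DR_1(a,-a)}\lambda_1=a^2/24$ yielding the factor $1/12$ times the degree-zero part of $c^{\cX,\ot}_{1,1}(e^\alpha)$). The only cosmetic difference is that you contract with the unit $A^\beta$ from the outset rather than first computing the full matrix $S^\alpha_\beta$, and you make explicit the (correct) commutation of $\d/\d u^\un_{xx}$ with the shift $u^\gamma\mapsto t^\gamma+u^\gamma$.
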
  

\medskip

\subsection{Homogeneous dispersive deformations}

As at the beginning of Section~\ref{subsection:deformations of principal hierarchy}, consider a semisimple flat F-manifold structure on $M\subset\mbC^N$ defined by a vector potential $\oF$, a semisimple point, canonical coordinates~$u^i$ on an open neighborhood~$U$ of this point, the diagonal matrix of one-forms $\tD$, a diagonal nondegenerate matrix $H$, and matrices $R_k$. Suppose that our flat F-manifold is homogeneous with an Euler vector field~$E$ of the form~\eqref{eq:Euler of F-manifold}. By~\cite[Proposition~1.14]{ABLR20}, the diagonal matrix~$i_E\tD$ is constant, $i_E\tD=-\diag(\delta_1,\ldots,\delta_N)=-\Delta$, $\delta_i\in\mbC$. Moreover, we have $E^\alpha\frac{\d}{\d t^\alpha} H=\Delta H$, and by~\cite[Proposition~1.16]{ABLR20} we can fix a choice of matrices $R_k$ by the additional conditions $E^\alpha\frac{\d}{\d t^\alpha} R_k=-k R_k+[\Delta,R_k]$ for $k\ge 1$. By~\cite[proof of Theorem~4.10]{ABLR20}, for an arbitrary $1\le l\le N$ and an eigenvector $G_0$ of the matrix~$\Delta$ corresponding to the eigenvalue $\delta_l$ the family of F-CohFTs $c^{G_0,\ot}$ satisfies the property
\begin{gather*}
\Deg\circ c^{G_0,\ot}_{g,n+1}+E^\alpha\frac{\d}{\d t^\alpha}c^{G_0,\ot}_{g,n+1}=c^{G_0,\ot}_{g,n+1}\circ\left(-Q^t\otimes\Id^{\otimes n}+\hspace{-0.2cm}\sum_{i+j=n-1}\hspace{-0.2cm}\Id\otimes\Id^{\otimes i}\otimes Q\otimes \Id^{\otimes j}\right)-2\delta_l g c_{g,n+1}^{G_0,\ot}.
\end{gather*}
This implies that for any $\ot\in U$ the F-CohFT $c^{G_0,\ot}$ is homogeneous of conformal dimension~$-2\delta_l$. Note that the corresponding framing $\cX$ on $U$ satisfies the property $[E,\cX]=(-2\delta_l-1)\cX$.\\  

Suppose that~$M$ is connected, then it is clear that up to permutations of the components the vector $(\delta_1,\ldots,\delta_N)$ doesn't depend on a semisimple point. We come to the following natural definition.

\medskip

\begin{definition}
The vector $\ogamma:=(-2\delta_1,\ldots,-2\delta_N)$ is called the {\it vector of conformal dimensions} corresponding to our flat F-manifold.
\end{definition}

\medskip

Suppose that all the points of $M$ are semisimple. As in the previous section, we can now glue the local families of F-CohFTs in a global family. Note that given a framing $\cX$ on $M$ satisfying $[E,\cX]=(-2\delta_l-1)\cX$ we can now construct a unique global family $c^{\cX,\ot}$, $\ot\in M$, of F-CohFTs fixing the choice of matrices $R_k$ using the Euler vector field.\\

Summarizing the considerations of this section, we obtain the following result.

\medskip

\begin{theorem}
Consider a homogeneous flat F-manifold structure on a connected open subset $M\subset\mbC^N$ defined by a vector potential $\oF$. Suppose that all the points of $M$ are semisimple. Let $\ogamma=(\gamma_1,\ldots,\gamma_N)$ be the vector of conformal dimensions. Let $1\le l\le N$ and let $\cX$ be a framing on $M$ such that $[E,\cX]=(\gamma_l-1)\cX$. Then the family of F-CohFTs $c^{\cX,\ot}$ satisfies the property
\begin{gather*}
\Deg\circ c^{\cX,\ot}_{g,n+1}+E^\alpha\frac{\d}{\d t^\alpha}c^{\cX,\ot}_{g,n+1}=c^{\cX,\ot}_{g,n+1}\circ\left(-Q^t\otimes\Id^{\otimes n}+\hspace{-0.2cm}\sum_{i+j=n-1}\hspace{-0.2cm}\Id\otimes\Id^{\otimes i}\otimes Q\otimes \Id^{\otimes j}\right)+\gamma_l g c_{g,n+1}^{\cX,\ot}.
\end{gather*}
In particular, for any $\ot\in M$ the F-CohFT $c^{\cX,\ot}$ is homogeneous of conformal dimension $\gamma_l$.
\end{theorem}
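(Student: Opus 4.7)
The theorem is a globalization of the local homogeneity identity recalled immediately above it. My plan has two main steps: (a) translate the framing condition $[E,\cX]=(\gamma_l-1)\cX$, in canonical coordinates, into the algebraic condition that the vector $G_0$ appearing in the local construction of $c^{\cX,\ot}$ lies in the $\delta_l$-eigenspace of the diagonal matrix $\Delta$; (b) invoke the already-established local homogeneity identity for $c^{G_0,\ot}$, which then yields the global identity for $c^{\cX,\ot}$ since the global family is glued from the local ones and the identity is pointwise in~$\ot$.

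For step (a), fix a semisimple point $p_0\in M$ with canonical coordinates $u^i$ on a connected open neighbourhood~$U\ni p_0$, and write $E=\sum_i(u^i+a^i)\frac{\d}{\d u^i}$, $\cX=\sum_i\alpha_i H_i^{-2}\frac{\d}{\d u^i}$ with $\alpha_i\in\mbC$. By the explicit formula $\cX^\alpha=\sum_i\frac{\d t^\alpha}{\d u^i}H_i^{-2}G_0^i$ recorded in the paper, $\alpha_i=G_0^i$. Using $E(H_i)=\delta_i H_i$ (which follows from $E^\alpha\frac{\d}{\d t^\alpha}H=\Delta H$), a short computation in canonical coordinates yields
\[
[E,\cX]=\sum_i(-2\delta_i-1)\,\alpha_i\, H_i^{-2}\frac{\d}{\d u^i}.
\]
Since $\gamma_l=-2\delta_l$, the condition $[E,\cX]=(\gamma_l-1)\cX$ forces $\delta_i=\delta_l$ whenever $\alpha_i\neq 0$, i.e.\ $\Delta G_0=\delta_l G_0$.

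For step (b), with $G_0$ a $\delta_l$-eigenvector of $\Delta$, the local statement recalled just before the theorem (proved as part of~\cite[Theorem~4.10]{ABLR20}) gives on~$U$
\begin{multline*}
\Deg\circ c^{G_0,\ot}_{g,n+1}+E^\alpha\frac{\d}{\d t^\alpha}c^{G_0,\ot}_{g,n+1}\\
=c^{G_0,\ot}_{g,n+1}\circ\Bigl(-Q^t\otimes\Id^{\otimes n}+\sum_{i+j=n-1}\Id\otimes\Id^{\otimes i}\otimes Q\otimes\Id^{\otimes j}\Bigr)+\gamma_l g\, c_{g,n+1}^{G_0,\ot}.
\end{multline*}
By construction $c^{\cX,\ot}|_U=c^{G_0,\ot}$, and the existence of a unique global family $c^{\cX,\ot}$ on~$M$ (via the Euler-field normalization of~$R_k$) was already established in the paragraph preceding the theorem. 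Since the identity above is pointwise in~$\ot$, covering~$M$ by such neighbourhoods proves it on all of~$M$. The ``in particular'' clause then follows: for each fixed $\ot\in M$, $c^{\cX,\ot}$ satisfies the defining axiom of a homogeneous F-CohFT with conformal dimension~$\gamma_l$.

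The main obstacle is really the sign bookkeeping in step~(a), mixing conventions between the vector of conformal dimensions $\ogamma=(-2\delta_1,\dots,-2\delta_N)$, the matrix~$\Delta$, and the Lie derivative of $H_i^{-2}$ along~$E$. A secondary consistency point worth noting is that the identification $\cX\leftrightarrow G_0$ is only defined chart by chart, but because the spectrum $\{\delta_1,\dots,\delta_N\}$ is chart-independent up to permutation (as noted just before the definition of the vector of conformal dimensions) and $\cX$ is a global vector field, the eigenvalue condition $\Delta G_0=\delta_l G_0$ is intrinsic, so step~(a) applies uniformly on every chart.
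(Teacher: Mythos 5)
Your proposal is correct and follows essentially the same route as the paper, which obtains the theorem by ``summarizing the considerations of this section'': the local homogeneity identity from~\cite[proof of Theorem~4.10]{ABLR20} for $c^{G_0,\ot}$ with $G_0$ a $\delta_l$-eigenvector of~$\Delta$, combined with the Euler-field normalization of the matrices~$R_k$ and the gluing of local families into the global $c^{\cX,\ot}$. The only point you make explicit that the paper leaves implicit is the converse direction of step~(a) --- that $[E,\cX]=(\gamma_l-1)\cX$ forces $\Delta G_0=\delta_l G_0$ --- and your computation of $[E,\cX]$ in canonical coordinates using $E(H_i)=\delta_i H_i$ is accurate.
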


\medskip

Let us now discuss properties of the descendant DR hierarchies in the homogeneous case. Under the assumptions of the theorem, suppose also that $M$ is simply connected. By~\cite[Proposition~4.4]{BB19}, there exists a calibration $X_\alpha(z)$ and complex matrices $\tR_i$, $i\ge 1$, such that $X^\alpha_{\beta,-1}=\delta^\alpha_\beta$, $[Q,\tR_i]=i\tR_i$, and 
$$
E^\mu\frac{\d}{\d t^\mu}X(z)=z\frac{\d}{\d z}X(z)+[X(z),Q]+X(z)\tR(z),
$$ 
where $X(z):=\sum_{d\ge -1}\left(X^\alpha_{\beta,d}\right)z^{d+1}$ and $\tR(z):=\sum_{i\ge 1}\tR_i z^i$. Such a calibration is called {\it homogeneous}. Consider now the associated descendant DR hierarchy.\\

Let us introduce a generating series $P^\desc(z)$ by
$$
P^\desc(z):=\sum_{d\ge -1}\left(P^{\desc;\alpha}_{\beta,d}\right)z^{d+1}. 
$$

\medskip

\begin{proposition}
We have 
$$
\hE_{\gamma_l}P^\desc(z)=z\frac{\d}{\d z}P^\desc(z)+[P^\desc(z),Q]+P^\desc(z)\tR(z).
$$
\end{proposition}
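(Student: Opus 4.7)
The strategy is to transfer the identity to the $\ot$-dependent object $\tP^\ot(z) = P^\ot(z) X(z)$, noting that the substitution $u^\gamma \mapsto t^\gamma + u^\gamma$ sends $P^\desc(z)$ to $\tP^\ot(z)$ and is injective. It therefore suffices to prove the identity after this substitution. A direct chain-rule computation, exploiting that the substitution acts nontrivially only on the zeroth jets $u^\alpha = u^\alpha_0$, yields
\begin{equation*}
\left(\hE_{\gamma_l} P^\desc(z)\right)\big|_{u^\gamma\mapsto t^\gamma+u^\gamma} = \hE_{\gamma_l}(\tP^\ot(z)) + (E^\alpha-r^\alpha)\frac{\d \tP^\ot(z)}{\d u^\alpha},
\end{equation*}
where $E^\alpha - r^\alpha = (\delta^\alpha_\beta - q^\alpha_\beta)t^\beta$ records the shifted contribution of the $k=0$ summand in the definition of $\hE_{\gamma_l}$.

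The main technical input is the identity $(E^\alpha - r^\alpha)\frac{\d c^{\cX,\ot}}{\d t^\alpha} = 0$. I would obtain this by subtracting two expressions for $\Deg\circ c^{\cX,\ot}$: the preceding theorem furnishes one in terms of $E^\alpha\frac{\d c^{\cX,\ot}}{\d t^\alpha}$ plus the $Q$- and $\gamma_l g$-corrections, while the F-CohFT homogeneity axiom (Definition~\ref{definition:F-CohFT} with $\br = r^\alpha e_\alpha$, applied to the homogeneous F-CohFT $c^{\cX,\ot}$ of conformal dimension $\gamma_l$) furnishes another in terms of $\pi_{1*}\circ c^{\cX,\ot}\circ(\cdot\otimes\br)$ with the same corrections. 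Subtraction yields $E^\alpha\frac{\d c^{\cX,\ot}}{\d t^\alpha} = \pi_{1*}\circ c^{\cX,\ot}\circ(\cdot\otimes\br)$, and part~2 of Proposition~\ref{proposition:family of F-CohFTs} (in its differential form $\frac{\d c^{\cX,\ot}_{g,n+1}}{\d t^\alpha} = \pi_{1*} c^{\cX,\ot}_{g,n+2}(\cdot\otimes e_\alpha)$) identifies the right-hand side with $r^\alpha\frac{\d c^{\cX,\ot}}{\d t^\alpha}$. Feeding this into the integral formula~\eqref{eq:P-polynomials for DR hierarchy} for $P^\ot(z)$ produces $(E^\alpha - r^\alpha)\frac{\d P^\ot(z)}{\d t^\alpha} = 0$.

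With the key identity in hand, I would combine three ingredients: (a) Proposition~\ref{proposition:homogeneous DR hierarchy}(ii) applied to the F-CohFT $c^{\cX,\ot}$ at fixed $\ot$, which in generating-series form reads
\begin{equation*}
\hE_{\gamma_l} P^\ot(z) = z\frac{\d P^\ot(z)}{\d z} + [P^\ot(z), Q] + z r^\sigma P^\ot(z) C_\sigma(\ot),
\end{equation*}
where $(C_\sigma(\ot))^\mu_\beta := C^\mu_{\sigma\beta}(\ot)$; (b) the flatness $\frac{\d X(z)}{\d t^\alpha} = z C_\alpha(\ot) X(z)$ of the deformed connection together with the homogeneity $E^\alpha\frac{\d X(z)}{\d t^\alpha} = z\frac{\d X(z)}{\d z} + [X(z), Q] + X(z)\tR(z)$ of the calibration; and (c) the paper's identity $\frac{\d\tP^\ot}{\d u^\gamma} = \frac{\d\tP^\ot}{\d t^\gamma}$. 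Using $\tP^\ot = P^\ot X$ and the Leibniz rule together with the key identity to annihilate the $\frac{\d P^\ot}{\d t^\alpha}$-contribution in $(E^\alpha - r^\alpha)\frac{\d\tP^\ot}{\d u^\alpha}$ reduces that term to $P^\ot(E^\alpha - r^\alpha)\frac{\d X}{\d t^\alpha}$, which by the homogeneity of $X$ equals $P^\ot\bigl[z\frac{\d X}{\d z} + [X, Q] + X\tR - z r^\alpha C_\alpha(\ot) X\bigr]$. The $-z r^\alpha P^\ot C_\alpha(\ot) X$ piece cancels the $z r^\sigma P^\ot C_\sigma(\ot) X$ piece coming from (a), and the surviving terms assemble via $[AB, Q] = [A, Q] B + A[B, Q]$ and the Leibniz rule for $\frac{\d}{\d z}$ into $z\frac{\d\tP^\ot}{\d z} + [\tP^\ot, Q] + \tP^\ot\tR$, yielding the desired identity.

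The main obstacle is the key identity: once it is recognized as the bridge between the geometric form of the Euler equation (the preceding theorem) and the algebraic F-CohFT homogeneity axiom, connected via Proposition~\ref{proposition:family of F-CohFTs}(2), the remainder of the proof reduces to careful but essentially mechanical manipulations with the generating series $P^\ot(z)$, $X(z)$, and $\tR(z)$.
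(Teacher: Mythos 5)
Your overall strategy coincides with the paper's: pass to $\tP^\ot(z)=P^\ot(z)X(z)$ via the shift $u^\gamma\mapsto t^\gamma+u^\gamma$, invoke Proposition~\ref{proposition:homogeneous DR hierarchy} for the F-CohFT $c^{\cX,\ot}$, and use the homogeneity of the calibration. However, your ``main technical input'' $(E^\alpha-r^\alpha)\frac{\d}{\d t^\alpha}c^{\cX,\ot}=0$ is false, and its derivation rests on a misidentification of the homogeneity data of $c^{\cX,\ot}$. The theorem preceding the proposition establishes the homogeneity condition with $\pi_{1*}\circ c^{\cX,\ot}_{g,n+2}\circ(\otimes\br)$ realized as $E^\alpha\frac{\d}{\d t^\alpha}c^{\cX,\ot}_{g,n+1}$, which by part~2 of Proposition~\ref{proposition:family of F-CohFTs} equals $\pi_{1*}\circ c^{\cX,\ot}_{g,n+2}\circ(\otimes E^\alpha(\ot)e_\alpha)$; hence the vector $\br$ of the homogeneous F-CohFT $c^{\cX,\ot}$ is $E(\ot)$, not $r^\alpha e_\alpha$ (consistently, the Euler field of the shifted potential $\oF(t^*-t_0^*)$ has constant part $E(\ot)$). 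There is no second expression for $\Deg\circ c^{\cX,\ot}$ with $\br=r^\alpha e_\alpha$ to subtract. Concretely, your identity would force $(E^\mu-r^\mu)\frac{\d C^\alpha_{\beta\gamma}}{\d t^\mu}=0$ in genus $0$, which already fails for Case~I of Section~\ref{subsubsection:2D F-manifolds classification}, where $r=0$ and $E(C^1_{11})=-2c(m-1)u^{m-1}\ne 0$.

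The same misidentification infects your ingredient~(a): correctly applied, Proposition~\ref{proposition:homogeneous DR hierarchy}(ii) for $c^{\cX,\ot}$ reads $\hE^\ot_{\gamma_l}P^\ot(z)=z\frac{\d}{\d z}P^\ot(z)+[P^\ot(z),Q]+zE^\sigma(\ot)P^\ot(z)C_\sigma(\ot)$, with $E^\alpha(\ot)$ replacing $r^\alpha$ both inside the Euler operator and in the last term. Your version of~(a) and your annihilation step are each off by the same quantity $(E^\alpha-r^\alpha)\frac{\d P^\ot}{\d t^\alpha}\cdot X$, with opposite signs, so the final formula comes out right --- but only because two false statements cancel. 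The repair, which is exactly the paper's proof, removes the need for any auxiliary identity: the shifted operator $\hE_{\gamma_l}+t^\beta(\delta^\alpha_\beta-q^\alpha_\beta)\frac{\d}{\d u^\beta}$ \emph{is} the operator $\hE^\ot_{\gamma_l}$ attached to $c^{\cX,\ot}$; it acts only on $P^\ot(z)$ (the matrix $X(z)$ being constant in the jet variables), yielding $z\frac{\d}{\d z}P^\ot\cdot X+[P^\ot,Q]X+zP^\ot(E^\gamma C_\gamma)X$, and then $zE^\gamma C_\gamma X(z)=E^\gamma\frac{\d}{\d t^\gamma}X(z)=z\frac{\d}{\d z}X(z)+[X(z),Q]+X(z)\tR(z)$ together with the Leibniz rule assembles the result.
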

\begin{proof}
Let us introduce generating series $\tP^\ot(z)$ and $P^\ot(z)$ by
$$
\tP^\ot(z):=\sum_{d\ge -1}\left(\tP^{\ot;\alpha}_{\beta,d}\right)z^{d+1},\qquad P^\ot(z):=\sum_{d\ge -1}\left(P^{\ot;\alpha}_{\beta,d}\right)z^{d+1}.
$$
We have to check that
$$
\left.\hE_{\gamma_l}P^\desc(z)\right|_{u^\gamma\mapsto t^\gamma+u^\gamma}=z\frac{\d}{\d z}\tP^{\ot}(z)+[\tP^{\ot}(z),Q]+\tP^{\ot}(z)\tR(z).
$$
For this, we compute
\begin{align}
\left.\hE_{\gamma_l}P^\desc(z)\right|_{u^\gamma\mapsto t^\gamma+u^\gamma}=&\left(\hE_{\gamma_l}+t^\alpha(\delta_\alpha^\beta-Q^\beta_\alpha)\frac{\d}{\d u^\beta}\right)\tP^\ot(z)=\notag\\
=&\left(\hE_{\gamma_l}+t^\alpha(\delta_\alpha^\beta-Q^\beta_\alpha)\frac{\d}{\d u^\beta}\right)P^\ot(z)\cdot X(z)\xlongequal{\text{by Prop.~\ref{proposition:homogeneous DR hierarchy}}}\notag\\
=&z\frac{\d}{\d z}P^\ot(z)\cdot X(z)+[P^\ot(z),Q]X(z)+z P^\ot(z)(E^\gamma C_\gamma)X(z),\label{eq:tmp for homogeneity of descendant DR}
\end{align}
where $C_\gamma:=(C^\alpha_{\gamma\beta})$, and we recall that $C^\alpha_{\beta\gamma}=\frac{\d^2 F^\alpha}{\d t^\beta\d t^\gamma}$. Since $z E^\gamma C_\gamma X(z)=E^\gamma\frac{\d}{\d t^\gamma}X(z)=z\frac{\d}{\d z}X(z)+[X(z),Q]+X(z)\tR(z)$, the expression in line~\eqref{eq:tmp for homogeneity of descendant DR} is equal to
\begin{align*}
&z\frac{\d}{\d z}P^\ot(z)\cdot X(z)+[P^\ot(z),Q]X(z)+P^\ot(z)\left(z\frac{\d}{\d z}X(z)+[X(z),Q]+X(z)\tR(z)\right)=\\
=&z\frac{\d}{\d z}\tP^{\ot}(z)+[\tP^{\ot}(z),Q]+\tP^{\ot}(z)\tR(z),
\end{align*}
as required.
\end{proof}

\bigskip


\section{Towards a classification of dispersive deformations}\label{section:classification}

In this section we consider the problem of classification of dispersive integrable deformations of principal hierarchies for flat F-manifolds and observe the central role played in it by the DR hierarchies. We propose two a priori different classes of deformations and we classify them, up to some finite order in $\eps$, for $1$ and $2$ dimensional flat F-manifolds, respectively. Up to that approximation, we observe that both classes contain essentially the DR hierarchies considered in Section \ref{section:principal hierarchy and dispersive deformations}.\\

\subsection{Dispersive deformations of DR type and the rank $1$ case}

\subsubsection{Integrable systems of DR type}\label{section:classification of rank 1 hierarchies of DR type}

Given a local vector field $\oX \in (\hLambda^1)^{[1]}$, consider the operator $\mathcal{D}_{\oX}\colon\hcA^1[[z]]\to\hcA^1[[z]]$ defined by
\begin{align*}
& \mathcal{D}_{\oX} Y(z):= \d_x(D-1) Y(z) - z[\oX,Y(z)],\\
& Y(z)=\sum_{k\geq 0}Y_{k-1} z^k, \qquad Y_{k-1} \in \hcA^1.
\end{align*}
Suppose there exist $N$ solutions $Y_\alpha(z) \in (\hcA^1)^{[1]}$, $1\leq \alpha\leq N$, to the equation $ \mathcal{D}_{\oX} Y_\alpha(z) = 0$ with the initial conditions $Y_\alpha(z=0) = - \theta_{\alpha,1}$. Then a new vector of solutions with the same initial conditions can be found by the following transformation:
\begin{equation}\label{eq:transformation}
Y_\alpha(z)\mapsto a^\mu_\alpha(z) Y_\mu(z),
\end{equation}
where $a_\alpha^\mu(z) = \delta_\alpha^\mu + \sum_{i>0} a^\mu_{\alpha,i} z^i \in \mbC[[z]]$.

\medskip

\begin{theorem}\label{theorem:DR type}
Assume that $\oX \in (\hLambda^1)^{[1]}$ satisfies the following properties:

\begin{itemize}
\item[(a)] there exist $N$ solutions $Y_\alpha(z)= \sum_{d\geq 0} Y_{\alpha,d-1}z^d \in (\hcA^1)^{[1]}[[z]]$, $1\leq \alpha\leq N$, to the equation
\begin{equation}\label{eq:DR type recursion}
\mathcal{D}_{\oX} Y_\alpha(z) = 0
\end{equation}
with the initial conditions $Y_\alpha(z=0) = - \theta_{\alpha,1}$,

\item[(b)] $\frac{\delta}{\delta u^\un}\oX= -u^\alpha \theta_{\alpha,1} + \d_x^2 R$, $R\in (\hcA^1)^{[-1]}$, where $\frac{\delta}{\delta u^\un}=A^\alpha\frac{\delta}{\delta u^\alpha}$ and $A^\alpha$ are some complex constants.
\end{itemize}
Then, up to a transformation of type \eqref{eq:transformation}, we have
\begin{itemize}
\item[(i)] $\displaystyle Y_{\un,0} =-u^\alpha \theta_{\alpha,1} + \d_x^2  (D-1)^{-1} R$,

\item[(ii)] $\displaystyle \oY_{\un,1} = \oX$,

\item[(iii)] $\displaystyle \left[\oY_{\alpha_1,d_1},\oY_{\alpha_2,d_2}\right] = 0$, $1\leq\alpha_1,\alpha_2\leq N$, $d_1,d_2\geq -1$,

\item[(iv)] $\displaystyle[\oY_{\alpha_2,0},Y_{\alpha_1,d}] = \d_x \frac{\d }{\d u^{\alpha_2}}Y_{\alpha_1,d+1}$, $1\leq\alpha_1,\alpha_2\leq N$, $d\geq -1$,

\item[(v)] $\displaystyle \frac{\d}{\d u^\un} Y_{\alpha,d+1} = Y_{\alpha,d}$, $1\leq\alpha\leq N$, $d\geq -1$.
\end{itemize}
\end{theorem}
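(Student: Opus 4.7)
The plan is to invert the logic of Theorem~\ref{theorem:Main}: starting from $\oX$ together with its solutions $Y_\alpha(z)$ of the recursion $\mathcal{D}_{\oX}Y_\alpha(z)=0$, I would reconstruct the full DR-type formalism (i)--(v) by expanding the recursion order-by-order in $z$ and exploiting the Schouten--Nijenhuis calculus of Section~\ref{subsection:formal loop space}. The essential algebraic ingredients will be: $\Ker\d_x|_{\hcA^1}=0$; the commutation $[D,\d_x]=0$; the identity $\frac{\d}{\d u^\un}(D-1)=D\frac{\d}{\d u^\un}$ coming from $[\frac{\d}{\d u^\un},D]=\frac{\d}{\d u^\un}$; and the fact that $\frac{\d}{\d u^\un}=A^\alpha\frac{\d}{\d u^\alpha_0}$ commutes with every variational derivative and descends to a derivation of the Schouten--Nijenhuis Lie bracket on~$\hLambda$.

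The first step is to derive (i). The order-$z^1$ coefficient of the recursion reads $\d_x(D-1)Y_{\alpha,0}=-[\oX,\theta_{\alpha,1}]$, and the Schouten bracket formula~\eqref{eq:SN bracket lift 1} simplifies the right-hand side to $\d_x\frac{\delta\oX}{\delta u^\alpha}$. Triviality of $\Ker\d_x|_{\hcA^1}$ yields $(D-1)Y_{\alpha,0}=\frac{\delta\oX}{\delta u^\alpha}$; setting $\alpha=\un$ and invoking (b) gives $(D-1)Y_{\un,0}=-u^\beta\theta_{\beta,1}+\d_x^2 R$, which is inverted by $(D-1)^{-1}$ (commuting with $\d_x$) to produce the formula in (i). Any residual component in $\Ker(D-1)\cap(\hcA^1)^{[1]}=\mathrm{span}_{\mbC}\{\theta_{\alpha,1}\}$ is absorbed by a transformation of type~\eqref{eq:transformation} with an appropriately chosen $a^\mu_{\un,1}$.

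Next I would prove (v) by induction on $d\ge -1$. Applying $\frac{\d}{\d u^\un}$ to the order-$z^{d+2}$ recursion and using the derivation property on the Schouten bracket, the inductive hypothesis $\frac{\d Y_{\alpha,d-1}}{\d u^\un}=Y_{\alpha,d-2}$, and the commutation relations above, one reduces (v) to the identity $[\frac{\d\oX}{\d u^\un},Y_{\alpha,d-1}]=\d_x Y_{\alpha,d-1}$, namely that the local functional $\frac{\d\oX}{\d u^\un}\in\hLambda^1$ generates the spatial translation flow. I expect this to be the main obstacle: since (b) only constrains the $A^\alpha$-weighted combination $\frac{\delta\oX}{\delta u^\un}$, extracting the componentwise data $\frac{\delta}{\delta u^\alpha}\big(\frac{\d\oX}{\d u^\un}\big)=-\theta_{\alpha,1}$ for each $\alpha$ requires additional input. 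The crucial observation is that the mere existence in (a) of solutions $Y_{\alpha,0}\in(\hcA^1)^{[1]}$ to $(D-1)Y_{\alpha,0}=\frac{\delta\oX}{\delta u^\alpha}$ already forces the $D$-eigenvalue-one part of $\frac{\delta\oX}{\delta u^\alpha}$ to vanish for every $\alpha$; combining this solvability constraint with (b) and with the transformation freedom available at each order pins down the translation-generating property and closes the induction.

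Once (v) is established, the remaining items follow without essential difficulty. For (iv), apply $\frac{\d}{\d u^{\alpha_2}}$ to the recursion and use (v) together with the derivation property on the bracket. For (ii), integrating the order-$z^2$ recursion over $x$ yields $[\oX,\oY_{\un,0}]=0$, and by (v) one has $\frac{\d\oY_{\un,1}}{\d u^\un}=\oY_{\un,0}$, which equals the translation functional $\int(-u^\alpha\theta_{\alpha,1})\,dx$ by (i); since $\frac{\d\oX}{\d u^\un}$ equals the same translation functional (as identified in the previous step), $\oY_{\un,1}-\oX$ is independent of $u^\un$, and inspecting $\frac{\delta}{\delta\theta_\beta}$ of both sides via the explicit formula for $Y_{\un,1}$ obtained from inverting $\d_x(D-1)$ identifies $\oY_{\un,1}$ with $\oX$ in $\hLambda^1$. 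Finally, (iii) follows by the standard bootstrap argument used for Theorem~\ref{theorem:Main}(ii): integrate (iv) in $x$, use the initial conditions $Y_{\alpha,-1}=-\theta_{\alpha,1}$ as the base case, and iterate in $d_1,d_2$.
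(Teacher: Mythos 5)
Your overall strategy---expand the recursion \eqref{eq:DR type recursion} in powers of $z$, use $\Ker\left(\d_x|_{\hcA^1}\right)=0$ together with $\frac{\d}{\d u^\un}\circ(D-1)=D\circ\frac{\d}{\d u^\un}$, and reduce the string equation (v) to the claim that $\frac{\d\oX}{\d u^\un}:=\int\frac{\d X^\beta}{\d u^\un}\theta_\beta\,dx$ (with $X^\beta:=\frac{\delta\oX}{\delta\theta_\beta}$) generates the translation flow---is sound and is in the spirit of the argument the paper simply delegates to \cite{BDGR19}; your derivation of (i) is correct. However, the resolution you propose for the step you yourself flag as the main obstacle does not work. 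Solvability of $(D-1)Y_{\alpha,0}=\frac{\delta\oX}{\delta u^\alpha}$ only forces the $D$-eigenvalue-one (i.e.\ single-variable) part of $\frac{\delta\oX}{\delta u^\alpha}$ to vanish, which says nothing about the identity $\frac{\d}{\d u^\un}\frac{\delta\oX}{\delta u^\alpha}=-\theta_{\alpha,1}$ you actually need (this is already the base case $d=-1$ of (v), via $D\frac{\d Y_{\alpha,0}}{\d u^\un}=\frac{\d}{\d u^\un}\frac{\delta\oX}{\delta u^\alpha}$); and the transformation freedom \eqref{eq:transformation} cannot help, since it changes $Y_{\alpha,0}$ only by constant combinations $a^\mu_{\alpha,1}\theta_{\mu,1}$, which $\frac{\d}{\d u^\un}$ annihilates. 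What actually yields the componentwise statement is hypothesis (b) used more carefully: writing (b) as an identity of differential operators acting on $\theta_\beta$, namely $\sum_{k}(-\d_x)^k\circ A^\alpha\frac{\d X^\beta}{\d u^\alpha_k}=-u^\beta\d_x+\d_x^2\circ\mcR_\beta$ with $\mcR_\beta$ the operator representing $R$, and taking formal adjoints gives $\sum_k A^\alpha\frac{\d X^\beta}{\d u^\alpha_k}\d_x^k=u^\beta_1+u^\beta\d_x+\mcR_\beta^{\dagger}\circ\d_x^2$; since the last term has order at least $2$ in $\d_x$ (this is exactly what the $\d_x^2$ on the left of $R$ buys), comparing the $\d_x^0$-coefficients yields $\frac{\d X^\beta}{\d u^\un}=u^\beta_1$ for every $\beta$, hence $\frac{\d\oX}{\d u^\un}=\int u^\beta_1\theta_\beta\,dx$ is exactly the translation generator. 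Without some such argument your induction for (v) never starts.

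There are two further soft spots. For (iii), integrating (iv) over $x$ gives only $[\oY_{\alpha_2,0},\oY_{\alpha_1,d}]=0$, and integrating the recursion gives $[\oX,\oY_{\alpha,d}]=0$; to ``iterate in $d_1,d_2$'' you would want to push $[\oY_{\alpha_2,d_2},\cdot]$ through $\d_x(D-1)Y_{\alpha_1,d_1+1}=[\oX,Y_{\alpha_1,d_1}]$, but $D$ does not commute with $[\oY_{\alpha_2,d_2},\cdot]$ (one picks up the extra term $[(D-2)\oY_{\alpha_2,d_2},\,\cdot\,]$, involving a local functional outside the hierarchy), so the inductive step is not the routine bootstrap you describe---this is precisely where the adaptation of \cite{BDGR19} does real work. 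Similarly, for (ii) the conclusion $\frac{\d}{\d u^\un}\left(\oY_{\un,1}-\oX\right)=0$ only says the difference is independent of $u^\un$, not that it vanishes; identifying $\frac{\delta\oY_{\un,1}}{\delta\theta_\beta}$ with $X^\beta$ requires the explicit evaluation of $[\oX,Y_{\un,0}]$ using (i) and (b), which your sketch gestures at but does not carry out.
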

\begin{proof}
The proof follows closely the proof of \cite[Theorem 5.1--5.2]{BDGR19} with Lie brackets of densities of local vector fields replacing Poisson brackets of differential polynomials.
\end{proof}

\smallskip

\begin{remark}\label{remark:DR type}
When we restrict to $\eps = 0$, a particular local vector field satisfying condition~(a) of Theorem~\ref{theorem:DR type} is given by $\oX=-(D-2)\int F^\alpha(u^1,\ldots,u^N) \theta_{\alpha,1}dx$ where the functions $F^\alpha(t^1,\ldots, t^N)$ are solutions to the oriented WDVV equations~\eqref{eq:axiom1 of flat F-man},~\eqref{eq:axiom2 of flat F-man}. It is easy to check that for such $\oX$ solutions $Y_\alpha(z)$ are given by $Y_{\alpha}(z)=-\sum_{d\ge -1} X_{\alpha,d}^\beta\theta_{\beta,1}z^{d+1}$ where the functions~$X^\beta_{\alpha,d}$ form a calibration of the flat F-manifold satisfying $X^\beta_{\alpha,-1}=\delta^\beta_\alpha$ (see Section~\ref{subsection:principal hierarchy}). Therefore, the functions $Y_\alpha(z)$ are the generating series of densities of local vector fields of the principal hierarchy of the flat F-manifold. Note that condition (b) for our $\oX$ is equivalent to $\frac{\d F^\alpha}{\d t^\un}=t^\alpha$, which can always be fulfilled by adding to $F^\alpha$ appropriate linear terms. 
\end{remark}

\smallskip

\begin{definition}
Let $\oX \in (\hLambda^1)^{[1]}$ satisfy the hypothesis of Theorem \ref{theorem:DR type}. Then we say that $\oX=\oY_{\un,1}$ and the induced hierarchy of compatible densities of local vector fields $Y_{\alpha,d}$, $1\leq\alpha\leq N$, $d\geq -1$, are {\it of double ramification (DR) type}.
\end{definition}

\smallskip

\begin{theorem}
The double ramification hierarchy \eqref{eq:DR densities} associated to an F-CohFT is a hierarchy of double ramification type.
\end{theorem}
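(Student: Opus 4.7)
The plan is to take $\oX:=\oY_{\un,1}$ and assemble the generating series
\[
Y_\beta(z):=\sum_{d\ge -1}Y_{\beta,d}\,z^{d+1},\qquad 1\le\beta\le N,
\]
from the DR hierarchy densities defined by~(\ref{eq:DR densities}), and then verify conditions (a) and (b) of Theorem~\ref{theorem:DR type} directly. The initial conditions $Y_\beta(z=0)=Y_{\beta,-1}=-\theta_{\beta,1}$ hold by construction, and the $N$ series are linearly independent at $z=0$.

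For condition (a), I would extract the coefficient of $z^{d+1}$ in $\d_x(D-1)Y_\beta(z)=z[\oY_{\un,1},Y_\beta(z)]$, obtaining, for each $d\ge -1$, the identity
\[
\d_x(D-1)Y_{\beta,d}=[\oY_{\un,1},Y_{\beta,d-1}].
\]
For $d=-1$ both sides vanish (the left side because $D\theta_{\beta,1}=\theta_{\beta,1}$, so $(D-1)Y_{\beta,-1}=0$), and for $d\ge 0$ this is precisely Theorem~\ref{theorem:Main}(iii) with $\beta_1=\beta$ and $d_1=d-1\ge -1$.

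For condition (b), the strategy is to evaluate the Lie bracket $[\oY_{\un,1},-\theta_{\un,1}]=[\oY_{\un,1},Y_{\un,-1}]$ in two ways. Theorem~\ref{theorem:Main}(iii) with $\beta_1=\un$, $d_1=-1$ gives $\d_x(D-1)Y_{\un,0}$. On the other hand, plugging $\of=\oY_{\un,1}\in\hLambda^1$ and $g=-A^\beta\theta_{\beta,1}$ into the Schouten--Nijenhuis formula~(\ref{eq:SN bracket lift 1}) makes all $u$-jet derivatives of $g$ vanish, while $\d g/\d\theta_{\alpha,k}=-A^\alpha\delta_{k,1}$, so the bracket collapses to $\d_x\bigl(A^\alpha\,\delta\oY_{\un,1}/\delta u^\alpha\bigr)=\d_x\bigl(\delta\oY_{\un,1}/\delta u^\un\bigr)$. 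Injectivity of $\d_x$ on $\hcA^1$ then forces
\[
\frac{\delta\oY_{\un,1}}{\delta u^\un}=(D-1)Y_{\un,0}.
\]
Substituting $Y_{\un,0}=-u^\alpha\theta_{\alpha,1}+\d_x^2 S$ from Theorem~\ref{theorem:Main}(v), and using that $D$ commutes with $\d_x$ and satisfies $D(u^\alpha\theta_{\alpha,1})=2u^\alpha\theta_{\alpha,1}$, gives $\delta\oY_{\un,1}/\delta u^\un=-u^\alpha\theta_{\alpha,1}+\d_x^2(D-1)S$. Setting $R:=(D-1)S$, which still lies in $(\hcA^1)^{[-1]}$ since $D$ preserves differential degree, completes the verification of (b).

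I do not anticipate any real obstacle: all the recursive and structural content is already packaged into Theorem~\ref{theorem:Main}. The only point worth highlighting is the observation that, because $Y_{\un,-1}=-\theta_{\un,1}$ is independent of all $u$-jets, the Lie bracket $[\oY_{\un,1},Y_{\un,-1}]$ reduces to the single term $\d_x\bigl(\delta\oY_{\un,1}/\delta u^\un\bigr)$; this is the bridge that converts the recursion identity of part~(iii) into the structural statement of condition~(b).
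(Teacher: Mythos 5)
Your proposal is correct and follows essentially the same route as the paper, which likewise deduces condition (a) from Theorem~\ref{theorem:Main}(iii) and condition (b) from Theorem~\ref{theorem:Main}(v). The only difference is that you make explicit the bridge identity $\frac{\delta}{\delta u^\un}\oY_{\un,1}=(D-1)Y_{\un,0}$ (obtained by evaluating $[\oY_{\un,1},-\theta_{\un,1}]$ in two ways and using $\Ker(\d_x|_{\hcA^1})=0$), a step the paper's one-line proof leaves implicit; your verification of it is sound.
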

\begin{proof}
Hypotheses~(a) and~(b) of Theorem \ref{theorem:DR type} follow immediately from claims~(iii) and~(v), respectively, of Theorem~\ref{theorem:Main}.
\end{proof}

\medskip

\subsubsection{Classification of rank $1$ hierarchies of DR type}

Thanks to Theorem~\ref{theorem:DR type} and Remark~\ref{remark:DR type}, it makes sense to use equation~\eqref{eq:DR type recursion} to find all possible deformations of DR type of a principal hierarchy associated to a given flat F-manifold, at low order in the dispersion parameter~$\eps$. These deformations will, in particular, include the ones coming from all F-CohFTs with the given genus $0$ part.\\

Consider the ancestor principal hierarchy associated to the genus $0$ part of the trivial CohFT, i.e., the CohFT with $V = \mbC\langle e \rangle$ and $c_{g,n}(e^{\otimes n}) = 1 \in H^0(\oM_{g,n})$ for all $(g,n)$ in the stable range. Let $e_1=e$, $u_k:=u^1_k$, $\theta_k:=\theta_{1,k}$ for $k\geq 0$, with $u:=u_0$, $\theta:=\theta_1$ as usual, and $Y_{d}:=Y_{1,d} = Y_{\un,d}$. A direct computation (at the approximation up to $\eps^9$) shows that its most general deformation of DR type is either of the form
\begin{equation}\label{eq:dKdV deformation}
\begin{split}
\frac{\d u}{\d t_1}=& \frac{\delta \oY_{1}}{\delta \theta}= u u_1+\eps ^2 C_{1,1}u_3 +\eps ^4 \left(C_{2,1}u_5 +{ C_{2,2}}u_2 u_3 \right)\\
&+\eps ^6 \left( C_{3,1}u_2 u_5 +{ C_{3,2}} u_3\left(u_2\right){}^2+ \left(\frac{10 \left(C_{2,1}\right){}^2}{7 C_{1,1}}+\frac{9}{35} C_{1,1} { C_{2,2}}\right)u_7\right.\\
&\left.\hspace{0.8cm}+
    \left(-\frac{8 C_{2,1} { C_{2,2}}}{3 C_{1,1}}+3 C_{3,1}\right)u_3 u_4\right)\\
&+\eps ^8 \left( C_{4,1}\left(u_2\right){}^2 u_5 +{ C_{4,2}}\left(u_2\right){}^3 u_3 +\left(\frac{15 \left(C_{2,1}\right){}^3}{7
   \left(C_{1,1}\right){}^2}+\frac{58}{105} C_{2,1} { C_{2,2}}+\frac{4}{35} C_{1,1} C_{3,1}\right)u_9\right.\\
&\left.\hspace{0.8cm}  + \left(-\frac{123 \left(C_{2,1}\right){}^2 { C_{2,2}}}{28 \left(C_{1,1}\right){}^2}+\frac{57}{100}
   \left({ C_{2,2}}\right){}^2+\frac{9}{16} C_{1,1} { C_{3,2}}+\frac{33 C_{2,1} C_{3,1}}{7 C_{1,1}}\right)u_2 u_7\right.\\
&\left.\hspace{0.8cm}   + \left(-\frac{177 \left(C_{2,1}\right){}^2 { C_{2,2}}}{4 \left(C_{1,1}\right){}^2}+\frac{201}{100}
   \left({ C_{2,2}}\right){}^2+\frac{333}{80} C_{1,1} { C_{3,2}}+\frac{33 C_{2,1} C_{3,1}}{C_{1,1}}\right)u_4 u_5\right.\\
&\left.\hspace{0.8cm}   + \left(\frac{44 C_{2,1}
   \left({ C_{2,2}}\right){}^2}{21 \left(C_{1,1}\right){}^2}-\frac{55 C_{2,1} { C_{3,2}}}{12 C_{1,1}}-\frac{44 { C_{2,2}} C_{3,1}}{21 C_{1,1}}+2C_{4,1}\right)\left(u_3\right){}^3\right.\\
&\left.\hspace{0.8cm}  + \left(-\frac{24 \left(C_{2,1}\right){}^2
   { C_{2,2}}}{\left(C_{1,1}\right){}^2}+\frac{249}{175} \left({ C_{2,2}}\right){}^2+\frac{9}{4} C_{1,1} { C_{3,2}}+\frac{132 C_{2,1} C_{3,1}}{7 C_{1,1}}\right)u_3 u_6\right.
\end{split}
\end{equation}
\begin{equation*}
\begin{split}
&\left.\hspace{0.8cm}   + \left(\frac{88 C_{2,1} \left({ C_{2,2}}\right){}^2}{21
   \left(C_{1,1}\right){}^2}-\frac{55 C_{2,1} { C_{3,2}}}{6 C_{1,1}}-\frac{88 { C_{2,2}} C_{3,1}}{21 C_{1,1}}+6 C_{4,1}\right)u_2u_3 u_4\right)\\
&   +O(\eps^{10}),
\end{split}
\end{equation*}
with $C_{i,j}\in \mbC$ and $C_{1,1}\neq 0$, or of the form
\begin{equation}\label{eq:Burgers}
\frac{\d u}{\d t_1}=\frac{\delta \oY_{1}}{\delta \theta}= u u_1+\eps C u_2, \qquad C\in \mbC.
\end{equation}

\bigskip

Notice that, imposing $C_{k,2}=0$ for all $k\geq 1$ in equation \eqref{eq:dKdV deformation}, we recover the most general Hamiltonian deformation of DR type, obtained in \cite{BDGR19}, which is in turn in one to one correspondence with the most general rank $1$ CohFT. This shows that the extra parameters~$C_{k,2}$, $k\geq 1$ control the strictly non-Hamiltonian deformations (at least with respect to the Hamiltonian operator $\d_x$). We expect these to correspond to F-CohFTs that are not CohFTs.

\medskip

\begin{remark}
It is easy to check that the r.h.s of equation \eqref{eq:dKdV deformation} is a total $x$-derivative. Comparing with the results of \cite{ALM15} we see  that a similar  result can be obtained starting from generic scalar conservation laws of the form
\begin{align}
&\frac{\partial u}{\partial t_d}=\partial_x P_{d},\qquad d\ge 0,\label{eq:general scalar deformation}\\
&P_{d}=\sum_{l\geq 0}\eps^{2l}P_{d,l},\qquad P_{d,l}\in\cA^{[2l]}_M,\notag
\end{align}
choosing
\begin{itemize}
\item $P_{d,0}=\f{u^{d+1}}{(d+1)!}$,
\item $\frac{\partial P_{1}}{\d u_{x}}=0$; the reduction to this form by means of a Miura transformation is always possible and it is unique,
\end{itemize}
and imposing the following conditions:
\begin{itemize}
\item {\it Commutativity of the flows}.
\item {\it String property}: $\frac{\d}{\d u} P_{d+1} = P_{d}$ for $d\ge -1$, where $P_{-1}:=1$.
\end{itemize}
According to the conjecture formulated  in \cite{ALM15}, it should be possible to write all the coefficients  appearing in the deformation as functions of the coefficients  of the quasilinear part. Moreover, the coefficients of the quasilinear part should be constant (due to the  string property) and arbitrary. This is consistent with the formula \eqref{eq:dKdV deformation} since the additional free parameters appearing at the order $\eps^8$ are related to the coefficients of the quasilinear part by constraints obtained considering higher order conditions.
\end{remark}

\medskip

Even more intriguing is the isolated deformation \eqref{eq:Burgers}, which, up to reabsorbing the constant $C$ into the factor $\eps$, is the celebrated Burgers equation, which is dissipative and hence non-Hamiltonian. The appearence of terms with odd powers of $\eps$ in a hierarchy of DR type rules out the possibility that it is the double ramification hierarchy of an F-CohFT. However, considering that flat F-manifolds are known to appear in genus $0$ open Gromov--Witten and Saito theory \cite{PST14,BCT18,BCT19,BB19} it is tempting to conjecture that Burgers equation~\eqref{eq:Burgers} and its higher symmetries might control some version of F-CohFT on the space of Riemann surfaces with boundaries, where curves can indeed possess half-integer genus accounting for odd powers of the genus parameter $\eps$.\\

The fact that Burgers equation \eqref{eq:Burgers} and its higher symmetries form a hierarchy of DR type can be proved rigorously at all order in $\eps$ as follows.

\medskip

\begin{theorem}
The vector field $\oX=\int (uu_x+\eps u_{xx})\theta dx$ of the Burgers equation defines a hierarchy of DR type, i.e., it satisfies conditions~(a) and~(b) of Theorem~\ref{theorem:DR type}.
\end{theorem}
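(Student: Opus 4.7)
Condition (b) is immediate from a direct variational derivative computation: with $\oX=\int(uu_x+\eps u_{xx})\theta\,dx$, one finds
\begin{equation*}
\frac{\delta\oX}{\delta u}=u_x\theta-\d_x(u\theta)+\d_x^2(\eps\theta)=-u\theta_1+\d_x^2(\eps\theta),
\end{equation*}
so condition (b) holds with $A^1=1$ and $R=\eps\theta\in(\hcA^1)^{[-1]}$.

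For condition (a), I would construct the densities $Y_d\in(\hcA^1)^{[1]}$, $d\ge -1$, inductively via the recursion obtained by expanding $\mathcal{D}_\oX Y(z)=0$ in powers of $z$:
\begin{equation*}
Y_{-1}=-\theta_1,\qquad \d_x(D-1)Y_d=[\oX,Y_{d-1}],\qquad d\ge 0.
\end{equation*}
Since the representative $(uu_x+\eps u_{xx})\theta$ of $\oX$ is $D$-homogeneous of weight $3$ and the Schouten bracket shifts $D$-degree by~$-2$, one checks inductively that $[\oX,Y_{d-1}]$ is $D$-homogeneous of weight $d+2$ and that the desired $Y_d$ is $D$-homogeneous of the same weight. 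On weight $d+2$, $(D-1)$ acts by multiplication by the nonzero scalar $d+1$, while $\d_x$ is injective on $\hcA^1$; so once $[\oX,Y_{d-1}]$ is shown to be $\d_x$-exact, a unique $Y_d$ solving the recursion is determined by $Y_d=(d+1)^{-1}\d_x^{-1}[\oX,Y_{d-1}]$. As a sanity check, the base step $d=0$ yields the explicit formula $Y_0=-u\theta_1+\eps\theta_2$, matching condition~(i) of Theorem~\ref{theorem:DR type} with the same $R=\eps\theta$.

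The main obstacle is then the integrability condition $[\oX,\overline{Y}_{d-1}]=0\in\hLambda^1$ at every step, which is exactly what guarantees $[\oX,Y_{d-1}]\in\Im\,\d_x$. To establish it, I would invoke the classical Cole--Hopf linearization $u=2\eps\,\d_x\log\phi$, which sends the Burgers flow to the heat equation $\phi_{t_1}=\eps\phi_{xx}$, and the higher commuting linear flows $\phi_{t_d}=\eps^d\phi_{x^{d+1}}$ to a family of pairwise commuting local polynomial flows $u_{t_d}=\d_x P_d$, where $P_d=2\eps^{d+1}\phi_{x^{d+1}}/\phi$ is reduced to a differential polynomial in $u,u_x,u_{xx},\dots$ with coefficients in $\mbC[\eps]$ via the recursive elimination $\phi_x=(u/(2\eps))\phi$. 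In particular, each of these flows commutes with $\oX$.

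To close the argument, I would verify by induction on $d$ that the local vector field $\overline{Y}_d\in\hLambda^1$ produced by the recursion coincides with the $d$-th flow of the Burgers hierarchy just constructed. The base case $d=0$ is the $x$-translation flow, as computed above. For the inductive step, the recursion determines $Y_d$ uniquely in terms of $Y_{d-1}$, and the homogeneity of $Y_d$ together with the injectivity of $\d_x(D-1)$ on the weight-$(d+2)$ component forces a match with the corresponding density of the $d$-th Burgers flow obtained by Cole--Hopf (of the same $D$-weight). This identification supplies $[\oX,\overline{Y}_d]=0$ at every order, closes the induction, and establishes condition~(a), completing the proof.
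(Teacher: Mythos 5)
Your treatment of condition~(b) is correct and agrees with the paper's: $\frac{\delta\oX}{\delta u}=-u\theta_1+\eps\theta_2=-u\theta_1+\d_x^2(\eps\theta)$ with $R=\eps\theta\in(\hcA^1)^{[-1]}$. Your setup for condition~(a) is also sound as far as it goes: the recursion $\d_x(D-1)Y_d=[\oX,Y_{d-1}]$ has at most one solution of $D$-weight $d+2$ (by injectivity of $\d_x$ on $\hcA^1$ and invertibility of $D-1$ on that weight), and it is solvable precisely when $[\oX,\oY_{d-1}]=0$ in $\hLambda^1$. The genuine gap is in how you close the induction. Knowing that the Cole--Hopf flows $\oB_d$ pairwise commute and that the recursion has a unique solution does \emph{not} let you conclude that $\oY_d$ coincides with (a multiple of) $\oB_d$: uniqueness only says that \emph{if} some density representing $\oB_d$ satisfies the density-level recursion, \emph{then} it equals $Y_d$. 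That a representative of $\oB_d$ does satisfy $\d_x(D-1)Y_d=[\oX,Y_{d-1}]$ is exactly the nontrivial statement you need, and it does not follow from commutativity of the flows --- commutativity is only the integrated statement $[\oX,\oB_{d-1}]=0$, which gives solvability at step $d$ but says nothing about which local vector field the recursion outputs. Without that identification you cannot assert $[\oX,\oY_d]=0$, and the induction stalls after the first step. Note also that the recursion's output is not of the naive form $-P_d\theta_1$: already $Y_0=-u\theta_1+\eps\theta_2$ carries a $\theta_2$-component, so ``the corresponding density of the $d$-th Burgers flow'' is not canonically defined, and producing the correct higher $\theta_k$-components is part of the problem.

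The paper closes precisely this gap by writing the solution in closed form. Encoding a density $Y=\sum_{k\ge 0}Y_k\theta_k$ by the operator $L_Y=\sum_{k\ge 0}Y_k\d_x^k$ and setting $\tL_{\oX}=\sum_{k\ge 0}(-\d_x)^k\circ\frac{\d X}{\d u_k}=-u\d_x+\eps\d_x^2$, so that $L_{[\oX,Y]}=\frac{\d}{\d t}L_Y-L_Y\circ\tL_{\oX}$, it exhibits $L_{Y(z)}=-e^{z\eps\d_x}\circ e^{z(u-2\eps\d_x)}\circ\d_x$ and reduces equation~\eqref{eq:DR type recursion} to the elementary Lax-type identity $\frac{\d}{\d t}(u-2\eps\d_x)=[u-2\eps\d_x,-\d_x\circ u+\eps\d_x^2]$. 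Your Cole--Hopf instinct is the right one --- since $u=2\eps\phi_x/\phi$ one has $u-2\eps\d_x=-2\eps\,\phi\circ\d_x\circ\phi^{-1}$, so the paper's exponential is a conjugated translation operator --- but the explicit formula for the full generating series of densities and its direct verification constitute the substance of the proof, and they are missing from your argument.
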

\begin{proof}
Let us first present a reformulation of the Schouten--Nijenhuis bracket $[\cdot,\cdot]\colon\hLambda^1\times \hcA^1\to\hcA^1$ in terms of formal differential operators. Consider an arbitrary local vector field $\oX=\int X\theta dx\in\hLambda^1$ and a density $Y=\sum_{k\ge 0}Y_k\theta_k\in \hcA^1$. The local vector field $\oX$ defines a flow on the space of differential polynomials by
$$
\frac{\d u}{\d t}=\frac{\delta\oX}{\delta\theta}=X,
$$
and we consider also formal differential operators $\tL_{\oX}$ and $L_Y$ defined by
$$
\tL_{\oX}:=\sum_{k\ge 0}(-\d_x)^k\circ\frac{\d X}{\d u_k},\qquad L_Y:=\sum_{k\ge 0}Y_k\d_x^k.
$$
Directly from the definition~\eqref{eq:SN bracket lift 1}, we obtain the following identity:
$$
L_{[\oX,Y]}=\frac{\d}{\d t}L_Y-L_Y\circ\tL_{\oX},
$$
where we apply the differentiation $\frac{\d}{\d t}$ to the operator $L_Y$ coefficient-wise.

\bigskip

Let us now take $\oX=\int (uu_x+\eps u_{xx})\theta dx$.\\

Let us prove condition~(a) of Theorem~\ref{theorem:DR type} by showing that a required solution $Y(z)=\sum_{k\ge -1}Y_k z^{k+1}$ of equation~\eqref{eq:DR type recursion} is given by
$$
Y(z)=-e^{z\eps\d_x}e^{z(u-2\eps\d_x)}\theta_1\quad\Leftrightarrow\quad L_{Y(z)}=\sum_{k\ge -1}z^{k+1}L_{Y_k}=-e^{z\eps\d_x}\circ e^{z(u-2\eps\d_x)}\circ\d_x.
$$
Since $\tL_{\oX}=-u\d_x+\eps\d_x^2$, equation~\eqref{eq:DR type recursion} is equivalent to
\begin{gather}\label{eq:DR recursion for Burgers,1}
\d_x\circ\tD L_{Y(z)}=z\left(\frac{\d L_{Y(z)}}{\d t}-L_{Y(z)}\circ (-u\d_x+\eps\d_x^2)\right),
\end{gather}
where $\tD:=\sum_{n\ge 0}u_n\frac{\d}{\d u_n}+\eps\frac{\d}{\d\eps}$, and we apply $\tD$ to $L_{Y(z)}$ coefficient-wise. Note that $\tD L_{Y(z)}=z\frac{\d}{\d z}L_{Y(z)}$. Therefore, equation~\eqref{eq:DR recursion for Burgers,1} is equivalent to
\begin{align*}
-\d_x\circ\frac{\d}{\d z}\left(e^{z\eps\d_x}\circ e^{z(u-2\eps\d_x)}\circ\d_x\right)=&-\frac{\d}{\d t}\left(e^{z\eps\d_x}\circ e^{z(u-2\eps\d_x)}\circ\d_x\right)\\
&+e^{z\eps\d_x}\circ e^{z(u-2\eps\d_x)}\circ\d_x\circ (-u\d_x+\eps\d_x^2)\Leftrightarrow\\
\Leftrightarrow e^{z\eps\d_x}\circ(-\d_x\circ u+\eps\d_x^2)\circ e^{z(u-2\eps\d_x)}\circ\d_x=&-e^{z\eps\d_x}\circ\frac{\d}{\d t}e^{z(u-2\eps\d_x)}\circ\d_x\\
&+e^{z\eps\d_x}\circ e^{z(u-2\eps\d_x)}\circ (-\d_x\circ u+\eps\d_x^2)\circ\d_x\Leftrightarrow\\
\Leftrightarrow \frac{\d}{\d t}e^{z(u-2\eps\d_x)}=&[e^{z(u-2\eps\d_x)},-\d_x\circ u+\eps\d_x^2].
\end{align*}
Note that the last equation follows from the elementary identity $\frac{\d}{\d t}(u-2\eps\d_x)=[u-2\eps\d_x,-\d_x\circ u+\eps\d_x^2]$.\\

Condition~(b) of Theorem~\ref{theorem:DR type} immediately follows from the equation $\frac{\delta}{\delta u}\oX=-u\theta_1+\eps\theta_2$.\\
\end{proof}

\vspace{4cm}


\subsection{Homogeneous dispersive deformations and the rank $2$ case}\label{biflatdef}
\subsubsection{Homogeneous deformations with string and dilaton property}\label{subsubsection:homogeneous string dilaton}
Let us fix a flat F-manifold together with a homogeneous calibration of standard type. We consider systems of evolutionary PDEs of the form
\begin{align}
&\frac{\partial u^{\alpha}}{\partial t^\beta_d}=\partial_x P^\alpha_{\beta,d},\qquad 1\le \alpha,\beta\le N,\quad d\ge 0,\label{eq:general deformation}\\
&P^\alpha_{\beta,d}=\sum_{l\geq 0}\eps^{2l}P^\alpha_{\beta,d,l},\qquad P^\alpha_{\beta,d,l}\in\cA^{[2l]}_M,\notag
\end{align}
such that the following properties are satisfied:
\begin{enumerate}
\item {\it Commutativity of the flows}: the flows $\frac{\d}{\d t^\beta_d}$ pairwise commute,
\item The dispersionless limit of the system~\eqref{eq:general deformation} coincides with the principal hierarchy of the given calibrated flat F-manifold,
\item {\it String property}: $\frac{\d}{\d u^\un} P^{\alpha}_{\beta,d+1} = P^{\alpha}_{\beta,d}$ for $d\ge -1$, where $P^\alpha_{\beta,-1}:=\delta^\alpha_\beta$,
\item  {\it Dilaton property}: $\frac{\partial P^{\alpha}_{\un,1}}{\partial u^\beta}=D P^{\alpha}_{\beta,0}$,
\item {\it Homogeneity condition}: $\hE_\gamma P(z)=z\frac{\d}{\d z}P(z)+[P(z),Q]+P(z)\tR(z)$ for some $\gamma\in\mbC$, where $P(z):=\sum_{d\ge -1}(P^\alpha_{\beta,d})z^{d+1}$.
\end{enumerate}

\bigskip

In this section, working out the $N=2$ case, we observe how descendant DR hierarchies appear in the problem of classification of dispersive integrable deformations of principal hierarchies of flat F-manifolds of the above form, which we refer to as a {\it homogeneous deformation with string and dilaton properties}. The role played by conditions (3), (4), and (5) is central in producing finite dimensional spaces of deformations even without having to quotient with respect to equivalence up to Miura transformations of the dependent variables.

\medskip

\begin{remark}
Axioms (1), (3), and (4) above correspond closely to properties (iii), (iv), and (v) of Theorem \ref{theorem:DR type} for hierarchies of local vector fields of DR type. Homogeneity (5) corresponds to property (iii) of Proposition \ref{proposition:homogeneous DR hierarchy} for homogeneous DR hierarchies. Finally, condition (2) above is satisfied by hierarchies of DR type, see Remark \ref{remark:DR type}. This means that homogeneous dispersive deformations with string and dilaton properties contain homogeneous descendant hierarchies of DR type whose local vector fields have only even powers of $\eps$. It's not a priori clear that the converse is true and it would be interesting to investigate this point.
\end{remark}

\medskip 

\subsubsection{Classification of semisimple homogeneous flat F-manifolds in dimension $2$}\label{subsubsection:2D F-manifolds classification}

In the semisimple case, using canonical coordinates $u_1,...,u_N$, the structure of a homogeneous flat F-manifold can be recovered from a solution of the following system (\cite{AL19}):
\begin{align}
&\frac{\d\Gamma^i_{ij}}{\d u_k}=-\Gamma^i_{ij}\Gamma^i_{ik}+\Gamma^i_{ij}\Gamma^j_{jk}+\Gamma^i_{ik}\Gamma^k_{kj}, && i\ne k\ne j\ne i,\notag\\
&\sum_{k=1}^N\frac{\d\Gamma^i_{ij}}{\d u_k}=0,&& i\ne j,\label{BF2}\\
&\sum_{k=1}^N u_k\frac{\d\Gamma^i_{ij}}{\d u_k}=-\Gamma^i_{ij},&& i\ne j.\label{BF3}
\end{align}

\bigskip

For $N=2$, the above system reduces to \eqref{BF2} and \eqref{BF3}, and the general solution is 
\[
\Gamma^i_{ij}=\frac{\epsilon_j}{u_i-u_j},
\]
where $\epsilon_1$ and $\epsilon_2$ are arbitrary constants. Note that the corresponding vector of conformal dimensions is equal to $(2\epsilon_2,2\epsilon_1)$. In order to compute a vector potential, we need to introduce flat coordinates $u,v$ (these correspond to $t^1, t^2$ in Section~\ref{subsection:flat F-manifolds}). We have to distinguish~$3$~cases:

\medskip

\begin{enumerate}

\item[I.] $\epsilon_1+\epsilon_2\ne 0,1$. In this case, flat coordinates are
\begin{gather*}
u=\left(\f{u_1-u_2}{4}\right)^{\f{1}{m}},\qquad v=\f{2+c}{4}u_1+\f{2-c}{4}u_2,
\end{gather*}
where $c=2\,\frac{\epsilon_1-\epsilon_2}{\epsilon_1+\epsilon_2}$, $m = \frac{1}{1-\epsilon_1-\epsilon_2}\ne 0,1$, and a vector potential is
\begin{gather*}
(F^1,F^2)=
\begin{cases}
\left(uv-2c\frac{u^{m+1}}{m+1},\frac{v^2}{2}+\frac{4-c^2}{2}\frac{mu^{2m}}{2m-1}\right),&\text{if $m\ne -1,\frac{1}{2},0,1$},\\
\left(uv-2c\log u,\frac{v^2}{2}+\frac{4-c^2}{6}u^{-2}\right),&\text{if $m=-1$},\\
\left(uv-\frac{4}{3}cu^{3/2},\frac{v^2}{2}+\frac{4-c^2}{4}u\log u\right),&\text{if $m=\frac{1}{2}$}.
\end{cases}
\end{gather*}
The unit is $\frac{\d}{\d v}$, the Euler vector field is $E=\frac{1}{m}u\frac{\d}{\d u}+v\frac{\d}{\d v}$, and $\ogamma=\left(\frac{(2-c)(m-1)}{2m},\frac{(2+c)(m-1)}{2m}\right)$.\\

If $m$ is a half-integer, these are the vector potentials of the bi-flat F-manifold structures defined on the orbit space of the dihedral group $I_2(2m)$ \cite{AL17}. If also $c=0$, the above vector potential comes from the Dubrovin--Frobenius manifold structure defined on the orbit space of the dihedral group.

\medskip

\item[II.] $\epsilon_1=c,\,\epsilon_2=1-c$, $c\ne 0$ (see the remark about the case $c=0$ below). Using the flat coordinates 
\begin{gather*}
u=u_1-u_2+\f{u_2}{c},\qquad v=-\ln{(u_1-u_2)},
\end{gather*}
we obtain 
\begin{gather}\label{eq:vector potential, case II}
F^1=\f{c}{2} u^2+\f{1-c}{4}e^{-2v},\qquad F^2=cuv+(2c-1)e^{-v}.
\end{gather}
The unit is $\frac{1}{c}\frac{\d}{\d u}$, the Euler vector field is $E=u\frac{\d}{\d u}-\frac{\d}{\d v}$, and $\ogamma=(2-2c,2c)$. For $c=\f{1}{2}$, the above vector potential comes from the genus $0$ Gromov--Witten potential of the complex projective line.\\

In the case $c=0$, choosing the flat coordinates $u=u_2$ and $v=-\ln{(u_1-u_2)}$, we obtain $F^1=\frac{u^2}{2}$ and $F^2=uv-e^{-v}$. This flat F-manifold is isomorphic to the flat F-manifold~\eqref{eq:vector potential, case II} with $c=1$ and shifted by $v\mapsto v+\pi i$.

\medskip

\item[III.] $\epsilon_1=c,\epsilon_2=-c$. If $c\ne 0$, then using the flat coordinates 
\begin{gather*}
u=u_1-u_2,\qquad v=(u_1-u_2)\ln{(u_1-u_2)}+\f{u_2}{c},
\end{gather*}
we obtain 
\begin{align*}
F^1=&cuv+u^2\left(\frac{c+1}{2}-c\ln u\right),\\
F^2=&\f{c}{2}v^2+u^2\left(-\frac{3c+1}{4}+\frac{c+1}{2}\ln u-\frac{c}{2}(\ln u)^2\right).
\end{align*}
The unit is $\frac{1}{c}\frac{\d}{\d v}$, the Euler vector field is $E=u\frac{\d}{\d u}+(u+v)\frac{\d}{\d v}$, and $\ogamma=(-2c,2c)$.\\

If $c=0$, then choosing as flat coordinates the canonical coordinates $u=u_1$ and $v=u_2$ we obtain
$$
F^1=\frac{u^2}{2},\qquad F^2=\frac{v^2}{2}.
$$
The unit is $\frac{\d}{\d u}+\frac{\d}{\d v}$, the Euler vector field is $E=u\frac{\d}{\d u}+v\frac{\d}{\d v}$, and $\ogamma=(0,0)$.
\end{enumerate}

\bigskip

\subsubsection{Integrable deformations of rank $2$ homogeneous principal hierarchies}\label{subsubsection:integrable deformations of principal hierarchy}

We now want to classify all homogeneous deformations with string and dilaton properties of principal hierarchies associated to the homogeneous two-dimensional flat F-manifolds considered above. In our computations below, we have observed the following remarkable facts:\\

\begin{itemize}
\item If such a deformation exists and is nontrivial at the $\eps^2$ approximation, then $\gamma$ {\bf must be equal} to $\gamma_1$ or $\gamma_2$.\\

\item For $\gamma=\gamma_i$, at the $\eps^2$ approximation, any such deformation coincides with the descendant DR hierarchy constructed using an appropriate framing. In particular, any such deformation at the approximation up to $\eps^2$ can be extended to a deformation at all orders of $\eps$.
\end{itemize}

\bigskip

Let us consider all three cases from Section \ref{subsubsection:2D F-manifolds classification} in detail.\\

{\it \underline{Case I}}. For simplicity, we consider the case $\frac{1}{m}\ne\mbZ$, which guarantees that there is a unique homogeneous calibration of standard type such that $\tR_i=0$ for $i\ge 1$. Recall that the vector of conformal dimensions is $(\gamma_1,\gamma_2)=\left(\frac{(2-c)(m-1)}{2m},\frac{(2+c)(m-1)}{2m}\right)$. We have three subcases.\\

{\it Case I1}. If $\gamma_1\ne\gamma_2$ and $\gamma=\gamma_1$, we obtain
\begin{align*}
P^1_{1,0}=&v-2cu^m\\
&+Au^{-\f{1}{2}c(m-1)}\eps^2\left(m(c-2)(cm-c-2m+4)u^{m-3}u_x^2+m(c-2)^2u^{m-2}u_{xx}-cu^{-1}v_{xx}\right)+\mathcal{O}(\eps^4),\\
P^2_{1,0}=&\f{m^2(4-c^2)}{2m-1}u^{2m-1}+Am(c^2-4)u^{-\f{1}{2}c(m-1)}\eps^2\left(m(cm-c-4m+6)u^{2m-4}u_x^2\right.\\
&\left.+m(c-4)u^{2m-3}u_{xx}-u^{m-2}v_{xx}\right)+\mathcal{O}(\eps^4),
\end{align*} 
and
\begin{align*}
P_{2,1}^1=&uv-\f{2mc}{m+1}u^{m+1}+Au^{-\f{1}{2}c(m-1)}\eps^2\bigg(m(c-2)(cm-c-2m-2)u^{m-2}u_x^2\\
&+\f{m(c-2)(cm-c-2m-2)}{m-1}u^{m-1}u_{xx}-\f{cm-c-4}{m-1}v_{xx}\bigg)+\mathcal{O}(\eps^4),\\
P_{2,1}^{2}=&\f{v^2}{2}+\f{m(4-c^2)}{2}u^{2m}+Am(c+2)u^{-\f{1}{2}c(m-1)}\eps^2\bigg(m(cm-c-4m)(c-2)u^{2m-3}u_x^2\\
&+\f{m(cm-c-4m)(c-2)}{m-1}u^{2m-2}u_{xx}-\f{cm-c-2m-2}{m-1}u^{m-1}v_{xx}\bigg)+\mathcal{O}(\eps^4).
\end{align*}
Here $A$ is an arbitrary complex constant. This deformation is given by the descendant DR hierarchy corresponding to the framing $(\cX^1,\cX^2)=12A\left(\frac{4}{m-1}u^{-\frac{1}{2}c(m-1)},\frac{4m(c+2)}{m-1}u^{-\frac{1}{2}(c-2)(m-1)}\right)$.\\

{\it Case I2}. If $\gamma_1\ne\gamma_2$ and $\gamma=\gamma_2$, we obtain
\begin{align*}
P^1_{1,0}=&v-2cu^m\\
&+Bu^{\f{1}{2}c(m-1)}\eps^2\left(m(c+2)(cm-c+2m-4)u^{m-3}u_x^2+m(c+2)^2u^{m-2}u_{xx}-cu^{-1}v_{xx}\right)+\mathcal{O}(\eps^4),\\
P^2_{1,0}=& \f{m^2(4-c^2)}{2m-1}u^{2m-1}+Bm(c^2-4)u^{\f{1}{2}c(m-1)}\eps^2\left(m(cm-c+4m-6)u^{2m-4}u_x^2\right.\\
&\left.+m(c+4)u^{2m-3}u_{xx}-u^{m-2}v_{xx}\right)+\mathcal{O}(\eps^4),
\end{align*}
and  
\begin{align*}
P_{2,1}^{1}=&uv-\f{2mc}{m+1}u^{m+1}+Bu^{\f{1}{2}c(m-1)}\eps^2\bigg(m(c+2)(cm-c+2m+2)u^{m-2}u_x^2
\\
&+\f{m(c+2)(cm-c+2m+2)}{m-1}u^{m-1}u_{xx}-\f{cm-c+4}{m-1}v_{xx}\bigg)+\mathcal{O}(\eps^4),\\
P_{2,1}^{2}=&\f{v^2}{2}+\f{m(4-c^2)}{2}u^{2m}+Bm(c-2)u^{\f{1}{2}c(m-1)}\eps^2\bigg(m(cm-c+4m)(c+2)u^{2m-3}u_x^2\\
&+\f{m(cm-c+4m)(c+2)}{m-1}u^{2m-2}u_{xx}-\f{cm-c+2m+2}{m-1}u^{m-1}v_{xx}\bigg)+\mathcal{O}(\eps^4).
\end{align*}
Here $B$ is an arbitrary complex constant. This deformation is given by the descendant DR hierarchy corresponding to the framing $(\cX^1,\cX^2)=12B\left(-\frac{4}{m-1}u^{\frac{1}{2}c(m-1)},-\frac{4m(c-2)}{m-1}u^{\frac{1}{2}(c+2)(m-1)}\right)$.\\

{\it Case I3}. If $\gamma_1=\gamma_2$ (which is equivalent to $c=0$) and $\gamma$ coincides with them, we get a two-parameter family of deformations formed by linear combinations of the deformations from Cases~I1 and~I2.\\

{\it\underline{Case II}}. There is a unique homogeneous calibration of standard type such that $X^\alpha_{\beta,0}=\frac{\d F^\alpha}{\d t^\beta}$, $\tR_1=\begin{pmatrix}0 & 0\\ -c & 0 \end{pmatrix}$, and $\tR_i=0$ for $i\ge 2$. Recall that the vector of conformal dimensions is $(\gamma_1,\gamma_2)=(2-2c,2c)$. We have three subcases.\\

{\it Case II1}. If $\gamma_1\ne\gamma_2$ and $\gamma=\gamma_1$, we obtain
\begin{align*}
P^1_{2,0}=&\f{c-1}{2}e^{-2v}+A(c-1)e^{2(c-1)v}\eps^2\left(u_{xx}-\f{2c-3}{2c}e^{-v}v_x^2+\frac{2c-3}{c}e^{-v}v_{xx}\right)+\mathcal{O}(\eps^4),\\
P^2_{2,0}=&cu-(2c-1)e^{-v}+A e^{2(c-1)v}\eps^2\left(-\f{2c-1}{2}e^v u_{xx}+\frac{(c-1)^2}{c}v_x^2-\frac{(c-1)^2}{c} v_{xx}\right)+\mathcal{O}(\eps^4),
\end{align*} 
and
\begin{align*}
P_{1,1}^{1}=&\f{c^2}{2}u^2+\f{c(c-1)}{4}(2v+1)e^{-2v}+Ae^{2(c-1)v}\eps^2\bigg(c((c-1)v+1)u_{xx}\\
&-\f{c-1}{2}((2c-3)v+3)e^{-v}v_x^2+\f{c-1}{2}((2c-3)v+2)e^{-v}v_{xx}\bigg)+\mathcal{O}(\eps^4),\\
P_{1,1}^{2} =&c^2uv-c(2c-1)(v+1)e^{-v}+Ae^{2(c-1)v}\eps^2\left(-\f{c}{2}((2c-1)v+2)e^{v}u_{xx}\right.\\
&\left.+\f{c-1}{2}((2c-2)v+3)v_x^2-(c-1)((c-1)v+1)v_{xx}\right)+\mathcal{O}(\eps^4),
\end{align*}
where $A$ is an arbitrary complex constant. This deformation is given by the descendant DR hierarchy corresponding to the framing $(\cX^1,\cX^2)=\frac{12A}{c}\left(e^{2(c-1)v},-e^{(2c-1)v}\right)$.\\

{\it Case II2}. If $\gamma_1\ne\gamma_2$ and $\gamma=\gamma_2$, we obtain
\begin{align*}
P^1_{2,0}=&\f{c-1}{2}e^{-2v}+B(c-1)e^{-2cv}\eps^2\left(-u_{xx}+\f{2c+1}{2c}e^{-v}v_x^2
-\f{2c+1}{2c}e^{-v}v_{xx}\right)+\mathcal{O}(\eps^4),\\
P^2_{2,0}=&cu-e^{-v}(2c-1)+Be^{-2cv}\eps^2\left(\frac{2c-1}{2}e^{v}u_{xx}-c v_x^2+c v_{xx}\right)+\mathcal{O}(\eps^4),
\end{align*} 
and 
\begin{align*}
P_{1,1}^{1}=&\f{c^2}{2}u^2+\f{c(c-1)}{4}(2v+1)e^{-2v}+B(c-1)e^{-2cv}\eps^2\left(-(cv-1)u_{xx}+\f{1}{2}((2c+1)v-3)e^{-v}v_x^2\right.\\
&\left.-\f{1}{2}((2c+1)v-2)e^{-v}v_{xx}\right)+\mathcal{O}(\eps^4),\\
P_{1,1}^{2}=&c^2uv-c(2c-1)(v+1)e^{-v}\\
&+Bce^{-2cv}\eps^2\left(\f{1}{2}((2c-1)v-2)e^{v}u_{xx}-\f{1}{2}(2cv-3)v_x^2+(cv-1)v_{xx}\right)+\mathcal{O}(\eps^4),
\end{align*}
where $B$ is an arbitrary complex constant. This deformation is given by the descendant DR hierarchy corresponding to the framing $(\cX^1,\cX^2)=\frac{12B}{c}\left(\frac{c-1}{c}e^{-2cv},-e^{-(2c-1)v}\right)$.\\

{\it Case II3}. If $\gamma_1=\gamma_2$ (which is equivalent to $c=\frac{1}{2}$) and $\gamma$ coincides with them, we get a two-parameter family of deformations formed by linear combinations of the deformations from Cases~II1 and~II2.\\

{\it\underline{Case III}}. There is a unique homogeneous calibration of standard type such that $\tR_i=0$ for $i\ge 1$. Recall that the vector of conformal dimensions is $(\gamma_1,\gamma_2)=(-2c,2c)$.\\

{\it Case III1}. If $\gamma_1\ne\gamma_2$ (equivalently, $c\ne 0$) and $\gamma=\gamma_1=-2c$, we obtain 
\begin{align*}
P^1_{1,0}=&u(1-2c\ln u)+cv+Au^{-2c-1}\eps^2\left(-\left(c+\f{1}{2}\right)u^{-1}u_x^2+ cv_{xx}+\left(\f{3}{2}-c(1+\ln u)\right)u_{xx}\right)+\mathcal{O}(\eps^4),\\
P^2_{1,0}=&u(\ln u-c(1+\ln^2 u))+Au^{-2c-1}\eps^2\left(-u^{-1}\left(\left(c+\f{1}{2}\right)\ln u+c\right)u_x^2+\right.\\
&\left.+\left(c(1+\ln u)-\f{1}{2}\right)v_{xx}+\left(2(1+\ln u)-c(1+\ln u)^2-\f{1}{2c}\right)u_{xx}\right)+\mathcal{O}(\eps^4),
\end{align*} 
and
\begin{align*}
P^1_{2,1}=&-\f{c}{2}u(cu(1+2\ln u)-2cv-u)+Au^{-2c-1}\eps^2\left(c(c-1)uv_{xx}+c(1-c)u_x^2+\right. \\& \left.+u\left(c(1-c)\ln u-\left(c-\frac{1}{2}\right)(c-2)\right)u_{xx}\right)+\mathcal{O}(\eps^4),\\
P^2_{2,1}=&-\f{c}{4}u^2(2(c\ln u+c-1)+c-1)\ln u+\f{c^2}{2}v^2+\\
&+Au^{-2c-1}\eps^2\left(c\left((1-c)\ln u+\f{3}{2}-c\right)u_x^2+cu\left((c-1)\ln u+c-\f{3}{2}\right)v_{xx}  \right.+\\&\hspace{2.5cm}\left.+u\left(c(1-c)\ln^2 u-(2c^2-4c+1)\ln u-c^2+3c-\f{3}{2}\right)u_{xx}\right)+\mathcal{O}(\eps^4),
\end{align*} 
where $A$ is an arbitrary complex constant. This deformation is given by the descendant DR hierarchy corresponding to the framing $(\cX^1,\cX^2)=-\frac{12A}{c}(u^{-2c},u^{-2c}(1+\ln u))$.\\

{\it Case III2}. If $\gamma_1\ne\gamma_2$ (equivalently, $c\ne 0$) and $\gamma=\gamma_2=2c$, we obtain 
\begin{align*}
P^1_{1,0}=&u(1-2c\ln u)+cv+Bu^{2c-1}\eps^2\left(\left(c-\f{1}{2}\right)u^{-1}u_x^2-cv_{xx}+\left(c\ln u+c+\f{1}{2}\right)u_{xx}\right)+\mathcal{O}(\eps^4),\\
P^2_{1,0}=&u(\ln u-c(1+\ln^2 u))+Bu^{2c-1}\eps^2\left(\left(\f{1}{2}-c(1+\ln u)\right)v_{xx} +\right.\\&\left. +\left(c(\ln u+1)^2-\f{1}{2c}\right)u_{xx}+u^{-1}\left(\left(c-\f{1}{2}\right)\ln u+c-1+\f{1}{2c}\right)u_x^2\right)+\mathcal{O}(\eps^4),
\end{align*} 
and
\begin{align*}
P^1_{2,1}=&-\f{c}{2}u(cu(1+2\ln u)-2cv-u)+\f{c}{2}Bu^{2c-1}\eps^2\left(2(c+1)u_x^2-2(c+1)uv_{xx}+\right.\\&\left.+u(2(c+1)\ln u+2c+3)u_{xx}\right)+\mathcal{O}(\eps^4),\\
P^2_{2,1}=&-\f{c}{4}u^2(2(c\ln u+c-1)+c-1)\ln u+\f{c^2}{2}v^2+\\ &+Bu^{2c-1}\eps^2\left(\left(c^2+\f{c}{2}-1+c(c+1)\ln u\right)u_x^2+u\left(1-\f{c}{2}-c^2-c(c+1)\ln u\right)v_{xx}+\right.\\
&\hspace{2.3cm}\left.+u\left(c^2+c-\f{3}{2}+c(c+1)(\ln u+2)\ln u)\right)u_{xx}\right)+\mathcal{O}(\eps^4),
\end{align*} 
where $B$ is an arbitrary complex constant. This deformation is given by the descendant DR hierarchy corresponding to the framing  $(\cX^1,\cX^2)=\frac{12B}{c}\left(-u^{2c},u^{2c}\left(\frac{1}{c}-1-\ln u\right)\right)$.\\

{\it Case III3}. If $\gamma_1=\gamma_2$ (equivalently, $c=0$) and $\gamma=\gamma_i=0$, we get the two-parameter family of deformations
\begin{align*}
P^1_{1,1}=&\f{u^2}{2}+A\eps^2 u_{xx}+\mathcal{O}(\eps^4),\\
P^2_{1,1}=&\f{v^2}{2}+B\eps^2 v_{xx}+\mathcal{O}(\eps^4),
\end{align*}
which is given by the descendant DR hierarchy corresponding to the framing  $(\cX^1,\cX^2)=12(A,B)$. This hierarchy is just the DR hierarchy of the rank $2$ F-topological field theory
$$
c_{g,n+1}(e^{\alpha_0}\otimes\otimes_{i=1}^n e_{\alpha_i})=
\begin{cases}
A^g,&\text{if $\alpha_0=\ldots=\alpha_n=1$},\\
B^g,&\text{if $\alpha_0=\ldots=\alpha_n=2$},\\
0,&\text{otherwise},
\end{cases}
$$
and it coincides with the system of two uncoupled KdV hierarchies.\\
 
\subsection{General integrable deformations and open problems}\label{sec:final}

In Section~\ref{subsubsection:homogeneous string dilaton}, we considered the problem of classification of dispersive deformations, containing only even powers of~$\eps$ and satisfying properties~(1)--(5), of principal hierarchies of two-dimensional homogeneous semisimple flat F-manifolds. We observed that at the approximation up to $\eps^2$ all such deformations are given by the descendant DR hierarchies.\\

In this section, we consider more general dispersive deformations of the same principal hierarchies: first, we allow odd powers of $\eps$ in the dispersive deformation~\eqref{eq:general deformation}, and, second, we require that only properties (1)--(2) are satisfied. In other words, we require only integrability, i.e., pairwise commutativity of the flows. In the table below, we summarize the results of computations of such deformations at the approximation up to~$\eps^2$ (the results for Case I were already obtained in \cite{AL18}). When we refer to a functional parameter relative to an integrable deformation, we mean that at a specified order the equivalence classes of deformations depend on an arbitrary function. Recall (see Definition~\ref{definition:equivalent deformations}) that two deformations are said to be equivalent if they are related by a Miura transformation that is close to identity.\\
\begin{table}[h]
\begin{center}
\begin{tabular}{|p{30mm}|p{30mm}|p{30mm}|p{30mm}|}
\hline
{\bf Case} & {\bf Values of $c$ } & {\bf Integrable first order deformations} &
{\bf Integrable second order deformations}    \\ 
\hline
I & $c\neq \pm 2$  & Miura trivial & Two functional parameters   \\
\hline
I & $c=\pm 2$  & One functional parameter  & Two functional parameters   \\
\hline
II & $c\neq 1$  & Miura trivial & Two functional parameters   \\
\hline
II & $c=1$  & One functional parameter  & Two functional parameters   \\
\hline
III & arbitrary $c$  & Miura trivial & Two functional parameters   \\
\hline
\end{tabular}
\end{center}
\caption{Functional parameters for the integrable deformations at the approximation up to $\eps^2$}
\label{tab1}
\end{table} 

For special values of the functional parameters, we recover the genus one approximations of the descendant DR hierarchies from Section~\ref{subsubsection:integrable deformations of principal hierarchy}. Unfortunately, for generic choices of the functional parameters the existence of a full dispersive hierarchy is an open problem. Toward this direction, let us point out that in  \cite{AL18} it was conjectured that up to equivalence integrable deformations are labelled by a simple set of invariants called {\it Miura invariants}. Consider a system of evolutionary PDEs of the form
\begin{align}
\frac{\d u^\alpha}{\d t}=A^\alpha_\beta(u^*)u^\beta_x+&\eps\left(B^\alpha_\beta(u^*)u^\beta_{xx}+B^\alpha_{\beta\gamma}(u^*)u^\beta_x u^\gamma_x\right)\label{intsys2}\\
+&\eps^2\left(C^\alpha_\beta(u^*)u^\beta_{xxx}+C^\alpha_{\beta\gamma}(u^*)u^\beta_x u^\gamma_{xx}+C^\alpha_{\beta\gamma\delta}(u^*)u^\beta_x u^\gamma_x u^\delta_x\right)+\ldots,\quad \alpha=1,\ldots,N,\notag
\end{align}
and as in the proof of Theorem~\ref{theorem:nonequivalent deformations} consider the associated Miura matrix
$$
M^\alpha_\beta(u^*,p)=A^\alpha_\beta(u^*)+B^\alpha_\beta(u^*)p+C^\alpha_\beta(u^*)p^2+\ldots.
$$
The {\em Miura invariants} of the system \eqref{intsys2} are the eigenvalues $\lambda^i(u^*,p)$ of the Miura matrix. If the eigenvalues of the matrix $(A^\alpha_\beta)$ are pairwise distinct at some point $(u^1,\ldots,u^N)=(u^1_\orig,\ldots,u^N_\orig)\in\mbC^N$, then the Miura invariants are well defined as formal power series whose coefficients are functions on an open neighbourhood of $(u^1_\orig,\ldots,u^N_\orig)$:
\begin{equation}\label{MiuraInv}
\lambda^i=v^i+\lambda_1^ip+\lambda_2^ip^2+\ldots,\quad i=1,\ldots,N.
\end{equation}
The functional parameters of Table~\ref{tab1} can be identified with a part of the coefficients $\lambda^1_1,\lambda^2_1,\lambda^1_2,\lambda^2_2$ in formula \eqref{MiuraInv}. The presence of odd powers of $p$ in the expansion \eqref{MiuraInv} seems an exceptional phenomenon. In the case of special deformations satisfying all the properties (1)--(5) from Section~\ref{subsubsection:homogeneous string dilaton}, there are examples related to open Gromov--Witten theory (\cite{BR18}) and we expect that this is not a coincidence. However, this point requires further investigation.

\medskip

\end{document}